\let\mathbb=\mathds
\DeclareMathOperator{\Tr}{Tr}
\DeclareMathOperator{\tr}{\overline{tr}}
\DeclareMathOperator{\Ent}{Ent}
\DeclareMathOperator{\Var}{Var}
\newcommand{\be}{{\mathbf e}}
\def\0{{\mathbf{0}}}
\def\1{{\mathbf{1}}}
\def\2{{\mathbf{2}}}
\def\3{{\mathbf{3}}}
\def\4{{\mathbf{4}}}
\def\5{{\mathbf{5}}}
\def\6{{\mathbf{6}}}
\def\7{{\mathbf{7}}}
\def\8{{\mathbf{8}}}
\def\9{{\mathbf{9}}}
\def\bbM{{\mathbb{M}}}
\def\bbR{\mathbb{R}}
\def\be{\begin{equation}}
\def\ee{\end{equation}}
\def\bea{\begin{eqnarray}}
\def\eea{\end{eqnarray}}
\theoremstyle{plain}
\newtheorem{theo}{Theorem} 
\newtheorem{prop}[theo]{Proposition} 
\newtheorem{coro}[theo]{Corollary} 
\theoremstyle{definition}
\newtheorem{defn}[theo]{Definition} 
\theoremstyle{remark}
\newtheorem{remark}{Remark}[section]
\numberwithin{equation}{section}
\newcommand\xqed[1]{%
	\leavevmode\unskip\penalty9999 \hbox{}\nobreak\hfill
	\quad\hbox{#1}}
\newcommand\Endremark{\xqed{$\Diamond$}}
\newcommand{\opnorm}{\@ifstar\@opnorms\@opnorm}
\newcommand{\@opnorms}[1]{%
	\left|\mkern-1.5mu\left|\mkern-1.5mu\left|
	#1
	\right|\mkern-1.5mu\right|\mkern-1.5mu\right|
}
\newcommand{\@opnorm}[2][]{%
	\mathopen{#1|\mkern-1.5mu#1|\mkern-1.5mu#1|}
	#2
	\mathclose{#1|\mkern-1.5mu#1|\mkern-1.5mu#1|}
}
\begin{document}

\let\origmaketitle\maketitle
\def\maketitle{
	\begingroup
	\def\uppercasenonmath##1{} 
	\let\MakeUppercase\relax 
	\origmaketitle
	\endgroup
}

\title{\bfseries \Large{ {Exponential Decay of Matrix $\mathrm{\Phi}$-Entropies on Markov Semigroups with Applications to Dynamical Evolutions of Quantum Ensembles}}}

\author{ \large \textsc{Hao-Chung Cheng$^{1,2}$, Min-Hsiu Hsieh$^2$, and Marco  Tomamichel$^3$}}
\address{\small  	
$^{1}$Graduate Institute Communication Engineering, National Taiwan University, Taiwan (R.O.C.)\\
	$^{2}$Centre for Quantum Computation and Intelligent Systems,\\
	Faculty of Engineering and Information Technology, University of Technology Sydney, Australia\\
	$^3$School  of  Physics,  The University  of  Sydney,  Sydney,  Australia}
\email{\href{mailto:F99942118@ntu.edu.tw}{F99942118@ntu.edu.tw}}
\email{\href{mailto:Min-Hsiu.Hsieh@uts.edu.au}{Min-Hsiu.Hsieh@uts.edu.au}}
\email{\href{mailto:marco.tomamichel@sydney.edu.au}{marco.tomamichel@sydney.edu.au}}


\begin{abstract}
In the study of Markovian processes, one of the principal achievements is the equivalence between the $\Phi$-Sobolev inequalities and an exponential decrease of the $\Phi$-entropies.
In this work, we develop a framework of Markov semigroups on matrix-valued functions and generalize the above equivalence to the exponential decay of matrix $\Phi$-entropies. This result also specializes to spectral gap inequalities and modified logarithmic Sobolev inequalities in the random matrix setting. To establish the main result, we define a non-commutative generalization of the carr\'e du champ operator, and prove a de Bruijn's identity for matrix-valued functions. 

The proposed Markov semigroups acting on matrix-valued functions have immediate applications in the characterization of the dynamical evolution of quantum ensembles. We consider two special cases of quantum unital channels, namely, the depolarizing channel and the phase-damping channel. In the former, since there exists a unique equilibrium state, we show that the matrix $\Phi$-entropy of the resulting quantum ensemble  decays exponentially as time goes on. Consequently, we obtain a stronger notion of monotonicity of the Holevo quantity---the Holevo quantity of the quantum ensemble decays exponentially in time and the convergence rate is determined by the modified log-Sobolev inequalities.  However, in the latter, the matrix $\Phi$-entropy of the quantum ensemble that undergoes the phase-damping Markovian evolution generally will not decay exponentially. This is because there are multiple equilibrium states for such a channel.  

Finally, we also consider examples of statistical mixing of Markov semigroups on matrix-valued functions. We can explicitly calculate the convergence rate of a Markovian jump process defined on Boolean hypercubes, and provide upper bounds of the mixing time on these types of examples.
\end{abstract}

\maketitle

\section{Introduction} \label{sec:introduction}

The core problem when studying dynamical systems is to understand how they evolve as time progresses. For example, we want to understand the equilibrium of a stochastic process. 
The \emph{Markov semigroup theory} mathematically describes the time evolution of dynamical systems. With a Markov semigroup operator $\{\mathsf{P}_t\}_{t\geq 0}$ acting on real-valued functions defined on some Polish space $\Omega$, for example, one can ask: is there an invariant measure $\mu$  such that
\[
\int_{x\in\Omega} \mathsf{P}_t f(x) \, \mu(\mathrm{d}x) = \int_{x\in\Omega}  f(x) \, \mu(\mathrm{d}x)
\]
holds for all such functions $f$?
(In the following we use the shorthand $\int f \,\mathrm{d}\mu$ for $\int_{x\in\Omega} f(x) \,\mu(\mathrm{d}x)$.) 
If there does exist such an invariant measure $\mu$ on $\Omega$, then how fast the system evolution $\mathsf{P}_t f $ converges to the constant equilibrium $\int f \,\mathrm{d}\mu$ when $t$ goes to infinity? To address these problems, functional inequalities like \emph{spectral gap inequalities} (also called \emph{Poincar\'e inequalities}) and \emph{logarithmic Sobolev inequalities} (log-Sobolev) play crucial roles
\cite{DS96, Bak94, Bak06, BGL13, Gui03, Sal97}.
More explicitly, the spectral gap inequality with a constant $C>0$:
\begin{align} \label{eq:Var2}
\Var(f) \triangleq \int f^2 \, \mathrm{d}\mu - \left( \int f \, \mathrm{d}\mu \right)^2 \leq C\mathcal{E}(f,f)
\end{align}
where $\Var(f)$ denotes the variance of the real-valued function $f$ and $\mathcal{E}(f,f)$ is the ``energy" of $f$  (see Section \ref{sec:semigroup} for formal definitions), is equivalent to the so-called $\mathbb{L}^2$ ergodicity of the semigroup $\{\mathsf{P}_t\}_{t\geq 0}$:
\begin{align} \label{eq:Var}
\Var(\mathsf{P}_t f) \leq \mathrm{e}^{-2t/C} \Var(f).
\end{align}
On the other hand, the log-Sobolev inequality, which is well-known from the seminal work of Gross \cite{Gro75}: 
\begin{align} \label{eq:Ent}
\Ent(f^2) \triangleq \int f^2\log \left( \frac{f^2}{\int f^2  \, \mathrm{d}\mu} \right) \, \mathrm{d}\mu  \leq C\mathcal{E}(f,f) 
\end{align}
is equivalent to an exponential decrease of entropies: 
\begin{align} \label{eq:Ent2}
\Ent(\mathsf{P}_t f) \leq \mathrm{e}^{-t/C} \Ent(f).
\end{align}

Chafa\"i \cite{Cha03} generalized the previous results and introduced the classical $\Phi$-entropy functionals to establish the equivalence between exponential decays of the $\Phi$-entropies to the $\Phi$-Sobolev inequalities, which interpolates between spectral gap and log-Sobolev inequalities \cite{LO00}. 
Consequently, the optimal constant in those functional inequalities directly determines the convergence rate of the Markov semigroups.

Recently, Chen and Tropp \cite{CT14} introduced a \emph{matrix $\Phi$-entropy functional} for the matrix-valued function $\bm{f}:\Omega \to \mathbb{C}^{d\times d}$, extending its classical counterpart to include matrix objects, and proved a \emph{subadditive} property. This extension has received great attention, and leads to powerful matrix concentration inequalities \cite{Tro15, PMT14}.  Furthermore, two of the present authors \cite{CH1} derived a series of matrix Poincar\'e inequalities and matrix $\Phi$-Sobolev inequalities for the matrix-valued functions. This result partially generalized Chafa\"i's work \cite{Cha03, Cha06}.

Equipped with the tools of matrix $\Phi$-entropies \cite{CT14} and the functional inequalities \cite{CH1}, we are at the position to explore a more general form of dynamical systems; namely those systems consisting of matrix components and their evolution governed by the Markov semigroup: 
\begin{align*} 
\mathsf{P}_t \bm{f}(x) = \int_{y\in\Omega} {\mathsf{T}}_t(x,\mathrm{d}y) \circ \bm{f}(y),
\end{align*}
where ${\mathsf{T}}_t(x,\mathrm{d}y):\mathbb{C}^{d\times d}\to \mathbb{C}^{d\times d}$ is a completely positive (CP) map and $\int_{y\in\Omega} {\mathsf{T}}_t(x,\mathrm{d}y)$ is unital. 
We are able to establish the equivalence conditions for the exponential decay of matrix $\Phi$-entropy functionals.

\medskip
The contributions of this paper are the following: 
\begin{enumerate}
\item We propose a Markov semigroup acting on matrix-valued functions and define a non-commutative version of the carr\'e du champ operator $\bm{\Gamma}$ in Section \ref{sec:semigroup}. 
We obtain the time derivatives of matrix $\Phi$-entropy functionals, a generalization of the de Bruijn's identity for matrix-valued functions in Proposition \ref{prop:de}. 
 The equivalence condition of the exponential decay of matrix $\Phi$-entropy functionals is established in Theorem \ref{theo:decay}. When $\Phi$ is a square function, our result generalizes Eqs.~\eqref{eq:Var2} and \eqref{eq:Var} to the equivalence condition of matrix spectral gap inequalities (Corollary \ref{coro:spectral}). On the other hand, when $\Phi(u)=u\log u$, we obtain the equivalence between exponential entropy decays and the modified log-Sobolev inequalities (Corollary \ref{coro:Ent}). This is slightly different from Eqs.~\eqref{eq:Ent} and \eqref{eq:Ent2}.


\item We show that the introduced Markov semigroup has a connection with quantum information theory and can be used to characterize the dynamical evolution of quantum ensembles that do not depend on the history.
More precisely, when the outputs of the matrix-valued function are restricted to a set of quantum states $\bm{f}(x) = \bm{\rho}_x$ (i.e.~positive semi-definite matrices with unit trace), the measure $\mu$ together with the function $\bm{f}$ yields a \emph{quantum ensemble} $\mathcal{S} \triangleq \left\{ \mu(x), \bm{\rho}_x \right\}_{x\in\Omega}$.
Its time evolution undergoing the semigroup $\{\mathsf{P}_t\}_{t\geq 0}$ can be described by $\mathcal{S}_t \triangleq \left\{ \mu(x), \int \mathsf{T}_t(x,\mathrm{d} y) \circ \bm{\rho}_y \right\}_{x\in\Omega}$.
Moreover, the matrix $\Phi$-entropy functional coincides the \emph{Holevo quantity} $\chi(\{\mu(x), \bm{\rho}_x\}_{x\in\Omega}) = \Ent(\bm{f})$.
Our main theorem hence shows that the Holevo quantity of the ensemble $\mathcal{S}$ exponentially decays through the dynamical process:
$
\chi\left( \mathcal{S}_t \right)
\leq \mathrm{e}^{t/C} \chi \left( \mathcal{S} \right),
$
where the convergence rate is determined by the modified log-Sobolev inequality\footnote{Here we assume there exists a unique \emph{invariant measure} (see Section \ref{sec:semigroup} for precise definitions) for the Markov semigroups, which ensures the existence of the average state. We discuss the conditions of uniqueness in Section \ref{sec:exam} and \ref{sec:exam2}.}.
This result directly strengthens the celebrated monotonicity of the Holevo quantity \cite{Pet03}.

\item We study an example of matrix-valued functions defined on a Boolean hypercube $\{0,1\}^n$  with transition rates $p$ from state $0$ to $1$ and $(1-p)$ from $1$ to $0$ (so-called \emph{Markovian jump process}). In this example, we can explicitly calculate the convergence rate of the Markovian jump process (Theorem \ref{theo:Var_Ber} and \ref{theo:Ent_Ber}) by exploiting the matrix Efron-Stein inequality \cite{CH1}.

\item We introduce a random walk of a quantum ensemble, where each vertex of the graph corresponds to a  quantum state and the transition rates are determined by the weights of the edges. 
The time evolution of the ensemble can be described by a statistical mixture of the density operators. By using the Holevo quantity as the entropic measure, our main theorem shows that the states in the ensemble will converge  to its equilibrium---the average state of the ensemble. Moreover, we can upper bound the mixing time of the ensemble.
\end{enumerate}

\subsection{Related Work} \label{ssec:related}
When considering the case of discrete time and finite domain (i.e.~$\Omega$ is a finite set), our setting reduces to the \emph{discrete quantum Markov} introduced by Gudder \cite{Gud08}, and the family $\{\mathsf{T}_t(x,y)\}$ is called the \emph{transition operation matrices} (TOM).
We note that this discrete model has been applied in quantum random walks by Attal \textit{et al.} \cite{APS+12, APS12, HJ14}, and model-checking in quantum protocols by Feng \textit{et al.} \cite{FYY13, YYF+13}.

If the state space is a singleton (i.e.~$\Omega = \{x\}$) and the trace-preserving property is imposed (i.e.~$\mathsf{T}_t$ is a quantum channel),  our model reduces to the conventional quantum Markov processes (also called the \emph{quantum dynamical semigroups}) \cite{Lin76, TKR+10, KRW12, SW13, SRW14} and 
the Markov semigroups defined on non-commutative $\mathbb{L}_p$ spaces \cite{OZ99}.
This line of research was initiated by Lindblad who studied the time evolution of a quantum state: $\mathsf{T}_t(\bm{\rho})$ and was recently extended by Kastoryano \textit{et al.} \cite{KRW12} and Szehr \cite{SW13, SRW14} who analyzed its long-term behavior.
Subsequently, Olkiewicz and Zegarlinski generalized Gross' log-Sobolev inequalities \cite{Gro75} to the non-commutative $\mathbb{L}_p$ space \cite{OZ99}. The connections between the quantum Markov processes, hypercontracitivity, and the noncommutative log-Sobolev inequalities are hence established \cite{BZ00, Car04, TPK14, Kin14, CKMT15, MFW15, SK15}.
The exponential decay properties in non-commutative $\mathbb{L}_p$ spaces are also studied \cite{TKR+10, KT13, Car14,CM15}. 

The major differences between our work and the non-commutative setting are the following: (1) the semigroup in the latter is applied on a single quantum state, i.e.~$\bm{\rho} \mapsto \mathsf{T}_t(\bm{\rho})$,
while the novelty of this work is to propose a Markov semigroup that acts on matrix-valued functions. i.e.~$\bm{f} \mapsto \mathsf{P}_t \bm{f}$. (2) Olkiewicz and Zegarlinski used a $\mathbb{L}_p$ relative entropy: 
\[
\Ent(\bm{\rho}) = \frac1d\Tr[\bm{\rho}\log \bm{\rho}] - \frac1d\Tr[\bm{\rho}]\log\left(\frac1d\Tr[\bm{\rho}]\right)
\]
to measure the state.
However, every state in the ensemble is endowed with a probability $\mu(x)$ in this work.  Thus, we can use the matrix $\Phi$-entropy functionals \cite{CT14}:
\[
H_\Phi(\bm{\rho}_X) = \sum_{x\in\Omega} \mu(x) \Tr\left[ \bm{\rho}_x \log \bm{\rho}_x \right] - \Tr\left[ \left( \sum_{x}\mu(x)\bm{\rho}_x \right) \log\left( \sum_{x}\mu(x)\bm{\rho}_x\right)   \right]
\]
as the measure of the ensemble through the dynamical process. 
In other words, we investigate the time evolution and the long-term behavior of a quantum ensemble instead of a quantum state.
The key tools to develop the whole theory require operator algebras (e.g.~Fr\'echet derivatives and operator convex functions), the subadditivity of the matrix $\Phi$-entropy functionals, operator Jensen inequalities \cite{HP03, FZ07}, and the matrix $\Phi$-Sobolev inequalities \cite{CH1}.  

\medskip
The paper is organized as follows. The notation and basic properties of matrix algebras are presented in Section \ref{sec:preliminaries}. The Markov semigroups acting on matrix-valued functions are introduced in Section \ref{sec:semigroup}. We establish the main results of exponential decays in Section \ref{sec:main}. In Section \ref{sec:exam} we study the quantum ensemble going through a quantum unital channel and demonstrate the exponential decays of the Holevo quantity. We discuss another example of a statistical mixture of the semigroup in Section \ref{sec:exam2}. We prove an upper bound to the mixing time of a quantum random graph.
Finally, we conclude this paper in Section \ref{sec:diss}.

\section{Preliminaries and Notation} \label{sec:preliminaries}

\subsection{Notation and Definitions} \label{ssec:notation}

We denote by $\mathbb{M}^\text{sa}$ the space of all self-adjoint operators on some (separable) Hilbert space.
If we restrict to the case of $d\times d$ Hermitian matrices, we refer to the notation $\mathbb{M}_d^\text{sa}$.
Denote by $\Tr$ the standard trace function.
The Schatten $p$-norm is defined by $\| \bm{M} \|_p \triangleq ( \Tr |\bm{M}|^p )^{1/p}$ for $1\leq p<\infty$, and $\|\bm{M}\|_\infty$ corresponds to the operator norm.

For $\bm{A},\bm{B}\in\mathbb{M}^\textnormal{sa}$, $\bm{A}\succeq \bm{B}$ means that $\bm{A}-\bm{B}$ is positive semi-definite. Similarly, $\bm{A} \succ \bm{B}$ means $\bm{A} - \bm{B}$ is positive-definite.
Denote by $\mathbb{M}^+$ (resp.~$\mathbb{M}_d^+$) the positive semi-definite operators (resp.~$d\times d$ positive semi-definite matrices).
Considering any matrix-valued function $\bm{f},\bm{g}:\Omega \to \mathbb{M}^\text{sa}$ defined on some Polish space $\Omega$\footnote{A Polish space is a separable and complete metric space, e.g.~a discrete space, $\mathbb{R}$, or the set of Hermitian matrices.}, we shorthand $\bm{f}-\bm{g} \succeq \bm{0}$ for $\bm{f}(x) - \bm{g}(x) \succeq \bm{0}$, $\forall x\in\Omega$.
Throughout the paper, italic boldface letters (e.g.~$\bm{X}$ or $\bm{f}$) are used to denote matrices or matrix-valued functions.

A linear map $\mathsf{T}:\mathbb{M}^\text{sa} \to \mathbb{M}^\text{sa}$ is \emph{positive} if $\mathsf{T}(\bm{A}) \succeq \bm{0}$ for all $\bm{A} \succeq \bm{0}$.
A linear map $\mathsf{T}:\mathbb{M}^\text{sa} \to \mathbb{M}^\text{sa}$ is \emph{completely positive} (CP)
if for any $\mathbb{M}_d^\text{sa}$, the map $\mathsf{T}\otimes \mathds{1}$ is positive on $\mathbb{M}^\text{sa} \otimes \mathbb{M}^\text{sa}$.
It is well-known that any CP map $\mathsf{T}:\mathbb{M}^\text{sa} \to \mathbb{M}^\text{sa}$ enables a \emph{Kraus decomposition}
\[
\mathsf{T}(\bm{A}) = \sum_{i} \bm{K}_i \bm{A} \bm{K}_i^\dagger.
\]
The CP map $\mathsf{T}$ is \emph{trace-preserving} (TP) if and only if $\sum_i \bm{K}_i^\dagger \bm{K}_i = \bm{I}$ (the identity matrix in $\mathbb{M}^\text{sa}$), and is \emph{unital} if and only if $\sum_i \bm{K}_i \bm{K}_i^\dagger = \bm{I}$ (see e.g.~\cite{MW09}). A CPTP map is often called a \emph{quantum channel} or \emph{quantum operation} in quantum information theory \cite{NC09}.
We denote by $|i-1\rangle \langle i-1|$ the zero matrix with the exception that its $i$-th diagonal entry is $1$. The set $\{|0\rangle, |1\rangle, \ldots, |d-1\rangle\}$ is the computational basis of the Hilbert space $\mathbb{C}^d$.

\begin{defn}[Matrix $\Phi$-Entropy Functional {\cite{CT14}}] \label{defn:entropy}
	Let $\mathrm{\Phi}:[0,\infty)\rightarrow \mathbb{R}$ be a convex function. Given any probability space $(\Omega, \Sigma, \mathbb{P})$, consider
	a positive semi-definite random matrix $\bm{Z}$ that is $\mathbb{P}$-measurable. Its expectation
	\[
	\mathbb{E}[\bm{Z}] \triangleq \int_\Omega \bm{Z}\, \mathrm{d} \mathbb{P} =   \int_{x\in\Omega} \bm{Z}(x)\, \mathbb{P} (\mathrm{d}x)
	\]
is a bounded matrix in $\mathbb{M}^+$.
	Assume $\bm{Z}$ satisfies the integration conditions: $\Tr\left[\mathbb{E}| \bm{Z}|\right]<\infty$ and $\Tr\left[\mathbb{E}| \mathrm{\Phi}(\bm{Z})|\right]<\infty$.
	The matrix $\mathrm{\Phi}$-entropy functional $H_\mathrm{\Phi}$ is defined as\footnote{ Chen and Tropp \cite[Definition 2.4]{CT14} defined the matrix $\Phi$-entropy functional for the random matrix $\bm{Z}$ taking values in $\mathbb{M}_d^+$ with the normalized trace function:
		\[
		H_\mathrm{\Phi}(\bm{Z})\triangleq \tr\left[\mathbb{E}\mathrm{\Phi}(\bm{Z})-\mathrm{\Phi}(\mathbb{E}\bm{Z})\right],
		\]
	where $\tr[\cdot] \triangleq \frac1d \Tr[\cdot]$. 
	In this paper we adopt the standard trace function; however, the results remain valid for $\tr$ as well.
	}
	\[ 
	H_\mathrm{\Phi}(\bm{Z})\triangleq \Tr\left[\mathbb{E}\mathrm{\Phi}(\bm{Z})-\mathrm{\Phi}(\mathbb{E}\bm{Z})\right].
	\]
	Define $\mathcal{F}\subseteq \Sigma$ as a sub-sigma-algebra of $\Sigma$ such that the expectation 
	$\mathbb{E}[\bm{Z}|\mathcal{F}]$ satisfying 
	$\int_E \mathbb{E}[\bm{Z}|\mathcal{F}] \,\mathrm{d}\mathbb{P} = \int_E \bm{Z} \, \mathrm{d}\mathbb{P}$ for each measurable set $E\in\mathcal{F}$.
	The conditional matrix $\Phi$-entropy functional is then
	\[
	H_\mathrm{\Phi}(\bm{Z}|\mathcal{F})\triangleq \Tr\left[\mathbb{E} \left[ \mathrm{\Phi}(\bm{Z})|\mathcal{F} \right] - \mathrm{\Phi}(\mathbb{E}\left[ \bm{Z} | \mathcal{F} \right] )\right].
	\]
	In particular, we denote by $\Ent(\bm{Z}) \triangleq H_\Phi(\bm{Z})$  when $\Phi(u)\equiv u\log u$ and call it the \emph{entropy functional}.
\end{defn}

\begin{theo}[Subadditivity of Matrix $\mathrm{\Phi}$-Entropy Functionals {\cite[Theorem 2.5]{CT14}}] \label{theo:sub}
	Let $X\triangleq (X_1, \ldots, X_n)$ be a vector of independent random variables taking values in a Polish space.
	Consider a positive semi-definite random matrix $\bm{Z}$ that can be expressed as a measurable function of the random vector $X$.
	Assume the integrability conditions  $\Tr\left[\mathbb{E}| \bm{Z}|\right]<\infty$ and $\Tr\left[\mathbb{E}| \mathrm{\Phi}(\bm{Z})|\right]<\infty$.
	If $\Phi(u)= u \log u$ or $\Phi(u) = u^p$ for $1\leq p\leq 2$, then
	\begin{eqnarray}\label{eq:entropy}
	H_\mathrm{\Phi}(\bm{Z})\leq \sum_{i=1}^n \mathbb{E} \Big[ H_\mathrm{\Phi}(\bm{Z}| X_{-i}) \Big],
	\end{eqnarray}
	where 
the random vector ${X}_{-i}\triangleq ({X}_1,\ldots,{X}_{i-1},{X}_{i+1},\ldots,{X}_n)$ is obtained by deleting the $i$-th entry of $X$.
\end{theo}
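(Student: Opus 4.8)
The plan is to prove the general inequality by induction on $n$, reducing it to the case of two independent blocks of variables, and to treat that case by combining an \emph{exact chain rule} for $H_\Phi$ with a \emph{Jensen-type} inequality; the only place where the restriction $\Phi(u)=u\log u$ or $\Phi(u)=u^p$, $1\le p\le 2$, is actually used is the Jensen step, which rests on the convexity of the matrix $\Phi$-entropy functional.

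\emph{Chain rule and reduction to two blocks.} Suppose first $\bm Z=\bm f(X_1,X_2)$ with $X_1\perp X_2$, and write $\E_j$ for the partial expectation over $X_j$, so $\E=\E_1\E_2=\E_2\E_1$. Expanding $H_\Phi(\bm Z)=\Tr[\E\Phi(\bm Z)-\Phi(\E\bm Z)]$ and inserting and cancelling the common term $\Tr[\E_1\Phi(\E_2\bm Z)]$ yields the purely algebraic identity
\[
H_\Phi(\bm Z)\;=\;\E_1\big[\,H_\Phi(\bm Z\mid X_1)\,\big]\;+\;H_\Phi\big(\E_2\bm Z\big),
\]
valid for every convex $\Phi$, where $H_\Phi(\bm Z\mid X_1)$ is the $\Phi$-entropy computed in the variable $X_2$ (a function of $X_1$) and $H_\Phi(\E_2\bm Z)$ is the $\Phi$-entropy of the $X_1$-dependent matrix $\E_2\bm Z$. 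Comparing with the target $H_\Phi(\bm Z)\le\E[H_\Phi(\bm Z\mid X_1)]+\E[H_\Phi(\bm Z\mid X_2)]$, the two-block case becomes \emph{equivalent} to
\[
H_\Phi\big(\E_2\bm Z\big)\;\le\;\E_2\big[\,H_\Phi(\bm Z\mid X_2)\,\big].
\]
Since $H_\Phi(\bm Z\mid X_2)$ is exactly $H_\Phi^{\mu_1}(\bm f(\cdot,X_2))$ (the matrix $\Phi$-entropy of $\bm f(\cdot,X_2)$ with respect to the law $\mu_1$ of $X_1$) and $\E_2\bm Z=\E_{X_2}\bm f(\cdot,X_2)$, this last inequality is precisely Jensen's inequality for the functional $\bm g\mapsto H_\Phi^{\mu_1}(\bm g)$ evaluated at the random matrix-valued function $\bm g=\bm f(\cdot,X_2)$, hence it holds as soon as $\bm g\mapsto H_\Phi(\bm g)$ is convex. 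Granting the two-block case, the general statement follows by the usual peeling induction: writing $X=(X_1,X_{-1})$ with $X_{-1}=(X_2,\dots,X_n)$, the two-block inequality gives $H_\Phi(\bm Z)\le\E[H_\Phi(\bm Z\mid X_{-1})]+\E[H_\Phi(\bm Z\mid X_1)]$; applying the inductive hypothesis to $\bm f(x_1,\cdot)$ for each fixed $x_1$, and using that conditioning on $X_1=x_1$ together with all coordinates $X_j$, $j\notin\{1,i\}$, is conditioning on $X_{-i}$, the second term is at most $\sum_{i=2}^n\E[H_\Phi(\bm Z\mid X_{-i})]$, so the two pieces add up to $\sum_{i=1}^n\E[H_\Phi(\bm Z\mid X_{-i})]$.

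\emph{Convexity of $H_\Phi$ --- the crux.} It remains to show that $\bm g\mapsto H_\Phi(\bm g)=\Tr[\E\Phi(\bm g)-\Phi(\E\bm g)]$ is convex on the cone of matrix-valued functions taking values in $\mathbb{M}^+$. The term $\E\,\Tr\Phi(\bm g)$ is convex (trace convexity of $\bm M\mapsto\Tr\Phi(\bm M)$ for convex $\Phi$), but $-\Tr\Phi(\E\bm g)$ is concave, so convexity of the difference is genuinely delicate and is false for general convex $\Phi$ once matrices appear. Here I would differentiate twice, using the Fr\'echet-derivative calculus for matrix functions, to recast convexity of $H_\Phi$ as an operator-convexity property of $\Phi'$; concretely, precisely for $\Phi(u)=u\log u$ (where $1/\Phi''(u)=u$) and for $\Phi(u)=u^p$, $1\le p\le 2$ (where $1/\Phi''(u)\propto u^{2-p}$ with $2-p\in[0,1]$), the function $1/\Phi''$ is operator concave, equivalently $\Phi''$ is a nonnegative mixture of the functions $u\mapsto(u+t)^{-2}$, $t>0$. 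Integrating this representation, one writes $H_\Phi$ (up to affine terms, which do not affect it) as a nonnegative mixture over $t$ of ``building-block'' entropies attached to the potentials $\Phi_t(u)=-u/(u+t)$, and it suffices to prove convexity for each $\Phi_t$. For these rational building blocks the required statement is an instance of the joint convexity of trace functions of the form $(\bm A,\bm B)\mapsto\Tr[\bm A\,g(\bm B)]$ and of the operator Jensen inequality \cite{HP03,FZ07}, combined with Lieb/Ando-type concavity; this is exactly where complete positivity and operator convexity are indispensable, and it also explains why the admissible class of $\Phi$ is strictly smaller than in the commutative setting.

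\emph{Main obstacle.} Everything except the convexity of $H_\Phi$ is formal --- the tower property, Jensen's inequality, and the induction. The hard part, and the step I expect to demand real work, is the operator-theoretic proof that $H_\Phi$ is convex: taming the concave term $-\Tr\Phi(\E\bm g)$ forces the integral-representation reduction above together with careful use of operator convexity and the operator Jensen inequality, and it is precisely this step that pins down $\Phi\in\{u\log u\}\cup\{u^p:1\le p\le 2\}$.
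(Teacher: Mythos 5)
First, a remark on the comparison itself: the paper does not prove Theorem \ref{theo:sub} --- it is imported verbatim from Chen and Tropp \cite{CT14} --- so your proposal has to be measured against their argument. Your skeleton is sound and is in the same spirit: the exact chain rule $H_\Phi(\bm Z)=\E_1[H_\Phi(\bm Z\mid X_1)]+H_\Phi(\E_2\bm Z)$ is a correct algebraic identity, the reduction of the two-block case to Jensen's inequality for the functional $\bm g\mapsto H_\Phi(\bm g)$ is right, and the peeling induction is standard. Chen and Tropp organize this slightly differently: they first establish a supremum-of-affine-functions (variational) representation of $H_\Phi$, from which both the convexity you need and the subadditivity itself follow by the tower property, sidestepping any delicacy about applying Jensen to a convex functional on an infinite-dimensional space. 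You also correctly locate the real content in the convexity of $H_\Phi$, which is precisely the matrix-analytic heart of \cite{CT14}.

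The genuine gap is in your sketch of that crux. You propose to write $\Phi''$ as a nonnegative mixture of $u\mapsto(u+t)^{-2}$ and then integrate this into a decomposition of $H_\Phi$ as a mixture of ``building-block entropies'' with potentials $\Phi_t(u)=-u/(u+t)$, claiming it suffices to prove convexity of each $H_{\Phi_t}$. Two problems. First, the bookkeeping is inconsistent: a block whose second derivative is $(u+t)^{-2}$ integrates (modulo affine terms, which indeed do not matter) to $-\log(u+t)$, whereas $-u/(u+t)$ has second derivative $2t/(u+t)^3$. Second, and decisively, the blockwise claim is false: for either choice of block, $1/\Phi_t''$ equals $(u+t)^2$, respectively $(u+t)^3/(2t)$, which is convex rather than concave; already in the scalar case convexity of $H_\Phi$ is equivalent to concavity of $1/\Phi''$ (equivalently, joint convexity of $(u,h)\mapsto \Phi''(u)\,h^2$), so the building-block entropies are \emph{not} convex, and a nonnegative mixture of non-convex functionals cannot be handled block by block. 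The integral representation must instead be used one level down: convexity of $H_\Phi$ amounts to $\E\,\Tr\!\big[\bm H\,\mathsf{D}\Phi'[\bm Z](\bm H)\big]\ge \Tr\!\big[\E\bm H\,\mathsf{D}\Phi'[\E\bm Z](\E\bm H)\big]$, and for $\Phi(u)=u\log u$ or $u^p$ one has $\mathsf{D}\Phi'[\bm A](\bm h)=\int (\bm A+t\bm I)^{-1}\bm h(\bm A+t\bm I)^{-1}\,\nu(\mathrm{d}t)$ with $\nu\ge 0$, so it suffices to invoke, for each $t$, the Lieb--Ando joint convexity of $(\bm A,\bm h)\mapsto \Tr\!\big[\bm h(\bm A+t\bm I)^{-1}\bm h(\bm A+t\bm I)^{-1}\big]$. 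This is essentially how the class condition of \cite{CT14} (operator concavity of the inverse of the Fr\'echet derivative of $\Phi'$, the matrix analogue of $1/\Phi''$ concave) is verified; as literally written, your reduction would not go through and needs to be redirected to this second-derivative/super-operator level or to the variational representation.
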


\subsection{Matrix Algebra} \label{ssec:matrix}

Let $\mathcal{U},\mathcal{W}$ be Banach spaces.
The \emph{Fr\'{e}chet derivative} of a function $\mathcal{L}:\mathcal{U} \rightarrow \mathcal{W}$ at a point $\bm{X}\in\mathcal{U}$, if it exists\footnote{We assume the functions considered in the paper are Fr\'{e}chet differentiable. The readers is referred to~\cite{Pel85,Bic12} for conditions for when a function is Fr\'{e}chet differentiable. }, is a unique linear mapping $\mathsf{D}\mathcal{L}[\bm{X}]:\mathcal{U}\rightarrow\mathcal{W}$ such that
\[
\lim_{\|\bm{E}\|_\mathcal{U} \rightarrow 0} \frac{\|\mathcal{L}(\bm{X}+\bm{E}) - \mathcal{L}(\bm{X}) - \mathsf{D}\mathcal{L}[\bm{X}](\bm{E})\|_{\mathcal{W}}}{\|\bm{E}\|_\mathcal{U}} = 0,
\]
where $\|\cdot\|_{\mathcal{U}(\mathcal{W})}$ is a norm in $\mathcal{U}$ (resp.~$\mathcal{W}$). The notation $\mathsf{D}\mathcal{L}[\bm{X}](\bm{E})$ then is interpreted as ``the Fr\'{e}chet derivative of $\mathcal{L}$ at $\bm{X}$ in the direction $\bm{E}$''.

The Fr\'echet derivative enjoys several properties of usual derivatives.
\begin{prop}[Properties of Fr\'{e}chet Derivatives {\cite[Section 5.3]{AH09}}] \label{prop:properties}
	Let $\mathcal{U},\mathcal{V}$ and $\mathcal{W}$ be real Banach spaces.
	\begin{itemize}
		\item[1.] (Sum Rule) If $\mathcal{L}_1:\mathcal{U}\rightarrow\mathcal{W}$ and $\mathcal{L}_2:\mathcal{U}\rightarrow\mathcal{W}$ are Fr\'{e}chet differentiable at $\bm{A}\in\mathcal{U}$, then so is $\mathcal{L} = \alpha \mathcal{L}_1 + \beta \mathcal{L}_2$ and $\mathsf{D}\mathcal{L}[\bm{A}](\bm{E}) = \alpha \cdot \mathsf{D}\mathcal{L}_1[\bm{A}](\bm{E}) + \beta \cdot \mathsf{D}\mathcal{L}_2[\bm{A}](\bm{E})$.
		\item[2.] (Product Rule) If $\mathcal{L}_1:\mathcal{U}\rightarrow\mathcal{W}$ and $\mathcal{L}_2:\mathcal{U}\rightarrow\mathcal{W}$ are Fr\'{e}chet differentiable at $\bm{A}\in\mathcal{U}$ and assume the multiplication is well-defined in $\mathcal{W}$, then so is $\mathcal{L}=\mathcal{L}_1 \cdot \mathcal{L}_2$ and $\mathsf{D}\mathcal{L}[\bm{A}](\bm{E}) = \mathsf{D}\mathcal{L}_1[\bm{A}](\bm{E}) \cdot \mathcal{L}_2(\bm{A}) + \mathcal{L}_1(\bm{A}) \cdot \mathsf{D}\mathcal{L}_2[\bm{A}](\bm{E})$.
		\item[3.] (Chain Rule)  Let $\mathcal{L}_1:\mathcal{U}\rightarrow\mathcal{V}$ and $\mathcal{L}_2:\mathcal{V}\rightarrow\mathcal{W}$ be Fr\'{e}chet differentiable at $\bm{A}\in\mathcal{U}$ and $\mathcal{L}_1(\bm{A})$ respectively, and let $\mathcal{L} = \mathcal{L}_2 \circ \mathcal{L}_1$ (i.e.~$\mathcal{L}(\bm{A}) = \mathcal{L}_2\left( \mathcal{L}_1 (\bm{A}) \right)$. Then $\mathcal{L}$ is Fr\'{e}chet differentiable at $\bm{A}$ and $\mathsf{D}\mathcal{L}[\bm{A}](\bm{E}) = \mathsf{D}\mathcal{L}_2 [\mathcal{L}_1(\bm{A})] \left( \mathsf{D}\mathcal{L}_1[\bm{A}](\bm{E}) \right)$.
	\end{itemize}
\end{prop}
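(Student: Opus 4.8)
The plan is to handle the three rules in turn; in each case I would exhibit the candidate linear operator, note that it is bounded and linear (hence a legitimate Fréchet-derivative candidate), and then verify the defining little-$o$ estimate by splitting the remainder into pieces that are each $o(\|\bm{E}\|)$. Two standing facts are used throughout: a Fréchet derivative $\mathsf{D}\mathcal{L}[\bm{A}]$ is by definition a \emph{bounded} linear map, so $\|\mathsf{D}\mathcal{L}[\bm{A}](\bm{E})\|\le \|\mathsf{D}\mathcal{L}[\bm{A}]\|_{\mathrm{op}}\,\|\bm{E}\|$; and Fréchet differentiability at a point implies continuity there. For the \emph{sum rule}, take $T \triangleq \alpha\,\mathsf{D}\mathcal{L}_1[\bm{A}] + \beta\,\mathsf{D}\mathcal{L}_2[\bm{A}]$ and apply the triangle inequality in $\mathcal{W}$ to
\[
\big\| (\alpha\mathcal{L}_1+\beta\mathcal{L}_2)(\bm{A}+\bm{E}) - (\alpha\mathcal{L}_1+\beta\mathcal{L}_2)(\bm{A}) - T(\bm{E}) \big\| \le |\alpha|\,R_1(\bm{E}) + |\beta|\,R_2(\bm{E}),
\]
where $R_j(\bm{E})$ denotes the defining remainder of $\mathcal{L}_j$ at $\bm{A}$; dividing by $\|\bm{E}\|$ and letting $\|\bm{E}\|\to 0$ finishes this case.

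For the \emph{product rule}, fix $\bm{A}$, abbreviate $\bm{L}_j \triangleq \mathcal{L}_j(\bm{A})$ and $\bm{D}_j \triangleq \mathsf{D}\mathcal{L}_j[\bm{A}](\bm{E})$, and write $\mathcal{L}_j(\bm{A}+\bm{E}) = \bm{L}_j + \bm{D}_j + \bm{r}_j(\bm{E})$ with $\|\bm{r}_j(\bm{E})\| = o(\|\bm{E}\|)$. Expanding the product,
\[
\mathcal{L}_1(\bm{A}+\bm{E})\,\mathcal{L}_2(\bm{A}+\bm{E}) - \bm{L}_1\bm{L}_2 = \big(\bm{D}_1\bm{L}_2 + \bm{L}_1\bm{D}_2\big) + \bm{r}_1(\bm{E})\bm{L}_2 + \bm{L}_1\bm{r}_2(\bm{E}) + \big(\bm{D}_1+\bm{r}_1(\bm{E})\big)\big(\bm{D}_2+\bm{r}_2(\bm{E})\big),
\]
so the candidate derivative is $T(\bm{E}) \triangleq \bm{D}_1\bm{L}_2 + \bm{L}_1\bm{D}_2$, and the remaining groups of terms are each $o(\|\bm{E}\|)$: the terms $\bm{r}_j(\bm{E})\bm{L}_{3-j}$ directly, and every product in the last bracket by submultiplicativity of the norm on $\mathcal{W}$ together with $\|\bm{D}_j\|\le\|\mathsf{D}\mathcal{L}_j[\bm{A}]\|_{\mathrm{op}}\|\bm{E}\|$ and $\|\bm{r}_j(\bm{E})\| = o(\|\bm{E}\|)$. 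Dividing by $\|\bm{E}\|$ and passing to the limit gives the rule.

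For the \emph{chain rule}, set $\bm{B}\triangleq\mathcal{L}_1(\bm{A})$ and $T \triangleq \mathsf{D}\mathcal{L}_2[\bm{B}]\circ\mathsf{D}\mathcal{L}_1[\bm{A}]$, which is linear and bounded as a composition of bounded linear maps. Write $\bm{G}(\bm{E}) \triangleq \mathcal{L}_1(\bm{A}+\bm{E}) - \bm{B} = \mathsf{D}\mathcal{L}_1[\bm{A}](\bm{E}) + \bm{r}_1(\bm{E})$ with $\|\bm{r}_1(\bm{E})\| = o(\|\bm{E}\|)$; in particular $\|\bm{G}(\bm{E})\| \le M\|\bm{E}\|$ for $\|\bm{E}\|$ small, with $M \triangleq \|\mathsf{D}\mathcal{L}_1[\bm{A}]\|_{\mathrm{op}}+1$. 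Then Fréchet differentiability of $\mathcal{L}_2$ at $\bm{B}$ gives
\[
\mathcal{L}_2\big(\mathcal{L}_1(\bm{A}+\bm{E})\big) - \mathcal{L}_2(\bm{B}) = \mathsf{D}\mathcal{L}_2[\bm{B}]\big(\bm{G}(\bm{E})\big) + \bm{r}_2\big(\bm{G}(\bm{E})\big) = T(\bm{E}) + \mathsf{D}\mathcal{L}_2[\bm{B}]\big(\bm{r}_1(\bm{E})\big) + \bm{r}_2\big(\bm{G}(\bm{E})\big),
\]
where $\|\bm{r}_2(\bm{H})\| = o(\|\bm{H}\|)$ and we set $\bm{r}_2(\bm{0}) = \bm{0}$. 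The first remainder term is $\le \|\mathsf{D}\mathcal{L}_2[\bm{B}]\|_{\mathrm{op}}\|\bm{r}_1(\bm{E})\| = o(\|\bm{E}\|)$; the second satisfies $\|\bm{r}_2(\bm{G}(\bm{E}))\| = o(\|\bm{G}(\bm{E})\|) = o(\|\bm{E}\|)$ using $\|\bm{G}(\bm{E})\| \le M\|\bm{E}\|$. Dividing by $\|\bm{E}\|$ and letting it tend to $0$ identifies $\mathsf{D}\mathcal{L}[\bm{A}]$ with $T$.

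The only genuinely delicate point is the composition of the two little-$o$ estimates in the chain rule: to conclude $\bm{r}_2(\bm{G}(\bm{E})) = o(\|\bm{E}\|)$ one needs the a priori linear bound $\|\bm{G}(\bm{E})\| \le M\|\bm{E}\|$ near $\bm{E} = \bm{0}$, which uses precisely the boundedness of $\mathsf{D}\mathcal{L}_1[\bm{A}]$ that is built into the definition of the Fréchet derivative, and one must treat the degenerate case $\bm{G}(\bm{E}) = \bm{0}$ separately so that the little-$o$ quotient defining $\bm{r}_2$ is meaningful. The remaining work in all three parts is routine bookkeeping with the triangle inequality and submultiplicativity of the (Schatten/operator) norms.
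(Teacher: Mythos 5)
Your proof is correct: the remainder-splitting arguments for the sum, product, and chain rules are the standard ones, and you rightly flag the two genuinely delicate points (the a priori bound $\|\bm{G}(\bm{E})\|\le M\|\bm{E}\|$ needed to compose the little-$o$ estimates, and the degenerate case $\bm{G}(\bm{E})=\bm{0}$), as well as the implicit requirement that multiplication in $\mathcal{W}$ be bounded bilinear (submultiplicative norm), which is what the proposition's phrase ``multiplication is well-defined'' is meant to supply in the matrix setting. The paper itself gives no proof of this proposition --- it is quoted from the cited reference --- so there is nothing to compare against; your argument is exactly the textbook proof that reference supplies.
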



For each self-adjoint and bounded operator $\bm{A}\in\mathbb{M}^\textnormal{sa}$ with the spectrum $\sigma(\bm{A})$ and the spectral measure $\bm{E}$, its \emph{spectral decomposition} can be written as
$\bm{A} = \int_{\lambda \in \sigma(\bm{A})} \lambda \, \mathrm{d} \bm{E}(\lambda)$.
As a result, each scalar function is extended to a \emph{standard matrix function} as follows:
\[
f(\bm{A}) \triangleq  \int_{\lambda \in \sigma(\bm{X})} f(\lambda) \, \mathrm{d} \bm{E}(\lambda).
\]
A real-valued function $f$ is called \emph{operator convex} if for each $\bm{A},\bm{B}\in \mathbb{M}^\textnormal{sa}$ and $0\leq t \leq 1$,
\[
f(t\bm{A}) + f((1-t)\bm{B}) \preceq f( t\bm{A} + (1-t)\bm{B}).
\]

\begin{prop}
	[Operator Jensen's Inequality for Matrix-Valued Measures \cite{HP03}, {\cite[Theorem 4.2]{FZ07}}] \label{prop:Jensen}
	Let $(\mathrm{\Omega},\mathrm{\Sigma})$ be a measurable space and suppose that $I\subseteq \mathbb{R}$ is an open interval. 
	Assume for every $x \in \Omega$, $\bm{K}(x)$ is a (finite or infinite dimensional) square matrix and satisfies 
	\[ \int_{x\in\Omega} \bm{K}(\mathrm{d}x) \bm{K}(\mathrm{d}x)^\dagger  = \bm{I} \] (identity matrix in $\mathbb{M}^\text{sa}$).
	If $\bm{f}:\mathrm{\Omega}\rightarrow \mathbb{M}^\text{sa}$ is a measurable function for which $\sigma(\bm{f}(x)) \subset I$, for every $x\in \Omega$, then
	\[
	\phi\left( \int_{x\in\Omega} \bm{K}(\mathrm{d}x) \bm{f}(x) \bm{K}(\mathrm{d}x)^\dagger  \right) \preceq  \int_{x\in\Omega} \bm{K}(\mathrm{d}x) \phi\left(\bm{f}(x)\right) \bm{K}(\mathrm{d}x)^\dagger \, \mu(\mathrm{d}x)
	\]
	for every operator convex function $\phi:I\rightarrow \mathbb{R}$.
	Moreover, 
	\[
	\Tr \left[ \phi\left( \int_{x\in\Omega}  \bm{K}(\mathrm{d}x) \bm{f}(x) \bm{K}(\mathrm{d}x)^\dagger \, \mu(\mathrm{d}x) \right) \right] \leq \Tr \left[ \int_{x\in\Omega} \bm{K}(\mathrm{d}x) \phi\left(\bm{f}(x)\right) \bm{K}(\mathrm{d}x)^\dagger \, \mu(\mathrm{d}x) \right]
	\]
	for every convex function $\phi:I\rightarrow \mathbb{R}$.
\end{prop}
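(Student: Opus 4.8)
The plan is to deduce both assertions from the operator Jensen inequality of Hansen and Pedersen, after repackaging the matrix-valued measure $\bm{K}$ as a single dilation isometry. For the \emph{operator} inequality I would work on the Hilbert space $\mathcal{K}\triangleq L^2(\Omega,\Sigma,\mu;\mathcal{H})$ of square-integrable $\mathcal{H}$-valued functions ($\mathcal{H}$ being the underlying space), and introduce the multiplication operator $\bm{A}$ given by $(\bm{A}\bm{\xi})(x)=\bm{f}(x)\,\bm{\xi}(x)$. It is self-adjoint, its spectrum lies in the closure of $\bigcup_{x}\sigma(\bm{f}(x))$, and by the functional calculus for direct integrals $\phi(\bm{A})$ is multiplication by $x\mapsto\phi(\bm{f}(x))$ --- here one uses that a convex (in particular, operator convex) function on the open interval $I$ is continuous. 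Next define $\bm{V}:\mathcal{H}\to\mathcal{K}$ by (formally) $(\bm{V}\bm{\eta})(x)=\bm{K}(\mathrm{d}x)^\dagger\bm{\eta}$, so that $\bm{V}^\dagger$ is the ``integration'' map $\bm{\xi}\mapsto\int_{\Omega}\bm{K}(\mathrm{d}x)\,\bm{\xi}(x)$. The hypothesis $\int_{\Omega}\bm{K}(\mathrm{d}x)\,\bm{K}(\mathrm{d}x)^\dagger=\bm{I}$ says exactly that $\bm{V}^\dagger\bm{V}=\bm{I}$, i.e.\ $\bm{V}$ is an isometry, and a direct computation gives
\[
\bm{V}^\dagger\bm{A}\,\bm{V}=\int_{\Omega}\bm{K}(\mathrm{d}x)\,\bm{f}(x)\,\bm{K}(\mathrm{d}x)^\dagger,\qquad \bm{V}^\dagger\phi(\bm{A})\,\bm{V}=\int_{\Omega}\bm{K}(\mathrm{d}x)\,\phi(\bm{f}(x))\,\bm{K}(\mathrm{d}x)^\dagger .
\]
Since $\bm{B}\mapsto\bm{V}^\dagger\bm{B}\,\bm{V}$ is a normal unital completely positive map, the operator Jensen inequality $\phi(\bm{V}^\dagger\bm{A}\,\bm{V})\preceq\bm{V}^\dagger\phi(\bm{A})\,\bm{V}$ for operator convex $\phi$ then gives the first claim.

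For the \emph{trace} inequality one cannot merely take traces of the operator inequality, since it is asked for every convex --- not necessarily operator convex --- $\phi$; instead I would run a two-step Jensen argument. Set $\bm{B}\triangleq\int_{\Omega}\bm{K}(\mathrm{d}x)\,\bm{f}(x)\,\bm{K}(\mathrm{d}x)^\dagger$ and $\bm{C}\triangleq\int_{\Omega}\bm{K}(\mathrm{d}x)\,\phi(\bm{f}(x))\,\bm{K}(\mathrm{d}x)^\dagger$, and let $\{\bm{v}_j\}$ be an orthonormal eigenbasis of $\bm{B}$, so that $\Tr[\phi(\bm{B})]=\sum_{j}\phi(\langle\bm{v}_j,\bm{B}\,\bm{v}_j\rangle)$. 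Each scalar $\langle\bm{v}_j,\bm{B}\,\bm{v}_j\rangle=\int_{\Omega}\langle\bm{K}(\mathrm{d}x)^\dagger\bm{v}_j,\ \bm{f}(x)\,\bm{K}(\mathrm{d}x)^\dagger\bm{v}_j\rangle$ is an average, against the probability measure $x\mapsto\|\bm{K}(\mathrm{d}x)^\dagger\bm{v}_j\|^2$ (of total mass $\langle\bm{v}_j,\bm{I}\,\bm{v}_j\rangle=1$ by the hypothesis), of Rayleigh quotients of $\bm{f}(x)$, all lying in $I$. Applying the scalar Jensen inequality twice --- first over the variable $x$, then against the spectral measure of $\bm{f}(x)$ in the unit vector proportional to $\bm{K}(\mathrm{d}x)^\dagger\bm{v}_j$ --- gives $\phi(\langle\bm{v}_j,\bm{B}\,\bm{v}_j\rangle)\le\langle\bm{v}_j,\bm{C}\,\bm{v}_j\rangle$. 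Summing over $j$ and using that $\{\bm{v}_j\}$ is an orthonormal basis yields $\Tr[\phi(\bm{B})]\le\Tr[\bm{C}]$, which is the second claim.

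I expect the genuine work to be measure-theoretic rather than operator-theoretic: one must first pin down the precise meaning of the operator-valued measure $\bm{K}(\mathrm{d}x)$, of the integrands $\bm{K}(\mathrm{d}x)\,\bm{f}(x)\,\bm{K}(\mathrm{d}x)^\dagger$, and of the normalization $\int_{\Omega}\bm{K}(\mathrm{d}x)\,\bm{K}(\mathrm{d}x)^\dagger=\bm{I}$ (e.g.\ through a disintegration $\bm{K}(\mathrm{d}x)=\bm{k}(x)\,\mu(\mathrm{d}x)^{1/2}$, or by treating $\bm{K}$ as an operator-valued measure of finite semivariation), and then verify that $\bm{A}$ on $\mathcal{K}$ and $\bm{V}:\mathcal{H}\to\mathcal{K}$ are well-defined, bounded, and satisfy the two displayed identities. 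A secondary analytic point arises when $\sigma(\bm{A})$ meets $\partial I$ (possible even though every $\sigma(\bm{f}(x))\subset I$): one handles it by first proving the inequality for $\bm{f}$ whose spectra lie in a fixed compact subinterval $[a,b]\subset I$, and then exhausting $I$ by such subintervals using continuity of $\phi$ and monotone convergence (the right-hand side being $+\infty$ precisely when the integrability of $\phi(\bm{f})$ fails). Since the statement is quoted from \cite{HP03,FZ07}, in the actual write-up I would record the dilation identities above and defer the remaining analytic details to those references.
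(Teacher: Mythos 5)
The paper offers no proof of this proposition --- it is imported verbatim with citations to \cite{HP03} and \cite{FZ07} --- so there is no internal argument to compare against; what can be assessed is whether your sketch is a correct route, and it is: it reproduces essentially the standard proof of the cited references. For the operator inequality, the dilation to the multiplication operator $\bm{A}$ on $L^2(\Omega,\mu;\mathcal{H})$ and the isometry $\bm{V}$ with $\bm{V}^\dagger\bm{V}=\bm{I}$, followed by the Hansen--Pedersen inequality $\phi(\bm{V}^\dagger\bm{A}\bm{V})\preceq\bm{V}^\dagger\phi(\bm{A})\bm{V}$, is exactly right; note that it is precisely because $\bm{V}$ is an isometry (not a mere contraction) that no hypothesis such as $0\in I$, $\phi(0)\leq 0$ is needed, which is the form required here. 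For the trace inequality, the eigenbasis expansion of $\bm{B}$ plus two applications of scalar Jensen (over $x$ against the probability measure $\|\bm{K}(\mathrm{d}x)^\dagger\bm{v}_j\|^2$, then against the spectral measure of $\bm{f}(x)$) is the standard argument and is sound; your identification of where the real work lies --- giving rigorous meaning to $\bm{K}(\mathrm{d}x)$ and to $\bm{V}$, and handling the possibility that $\sigma(\bm{A})$ meets $\partial I$ --- is also accurate. Two small caveats if you were to write it out in full: the identity $\Tr[\phi(\bm{B})]=\sum_j\phi(\langle\bm{v}_j,\bm{B}\,\bm{v}_j\rangle)$ uses that the $\bm{v}_j$ are eigenvectors (for a general orthonormal basis one only gets the Peierls-type inequality, which would still suffice but should be said), and in the infinite-dimensional case admitted by the statement the eigenbasis may not exist, so one must pass to spectral projections or finite-rank compressions as in \cite{HP03}. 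Since the proposition is a quoted result, your plan of recording the dilation identities and deferring the remaining analytic details to \cite{HP03} and \cite{FZ07} is consistent with how the paper itself treats it.
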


\begin{prop}[{\cite[Theorem 3.23]{HP14}}] \label{prop:trace_Petz}
	Let $ \bm{A}, \bm{X}\in\bbM^{sa}$ and $t\in\bbR$. Assume $f:I\to \bbR$ is a continuously differentiable function defined on an interval $I$ and assume that the eigenvalues of $ \bm{A}+t\bm{X} \subset I$. Then
	\[
	\left.\frac{\mathrm{d} }{ \mathrm{d} t} \Tr f(\bm{A}+t \bm{X})\right|_{t=t_0} = \Tr [  \bm{X} f' ( \bm{A} + t_0 \bm{X}) ].
	\]
\end{prop}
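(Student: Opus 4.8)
The plan is to reduce the identity to a statement about the Fr\'echet derivative of the standard matrix function $f$, to verify that statement for polynomials by elementary algebra, and to pass to the general $C^{1}$ case by approximation. Write $g(t)\triangleq \Tr f(\bm{A}+t\bm{X})$. The map $t\mapsto \bm{A}+t\bm{X}$ is affine in $t$, so its derivative is multiplication by $\bm{X}$, and since $\Tr$ is linear it is its own Fr\'echet derivative. Hence, granting for the moment that $t\mapsto f(\bm{A}+t\bm{X})$ is differentiable, the chain and sum rules of Proposition~\ref{prop:properties} give
\[
g'(t_{0})=\Tr\big(\mathsf{D}f[\bm{A}+t_{0}\bm{X}](\bm{X})\big).
\]
It therefore suffices to prove that $\Tr\big(\mathsf{D}f[\bm{Y}](\bm{X})\big)=\Tr\big(f'(\bm{Y})\bm{X}\big)$ for $\bm{Y}\triangleq \bm{A}+t_{0}\bm{X}$, together with the requisite differentiability.

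First I would treat $f(u)=u^{n}$. Iterating the product rule gives $\mathsf{D}f[\bm{Y}](\bm{X})=\sum_{k=0}^{n-1}\bm{Y}^{k}\bm{X}\bm{Y}^{\,n-1-k}$, and cyclicity of the trace collapses this to $n\Tr(\bm{Y}^{\,n-1}\bm{X})=\Tr\big(f'(\bm{Y})\bm{X}\big)$. By the sum rule the same identity holds for every polynomial $p$, and consequently $t\mapsto \Tr p(\bm{A}+t\bm{X})$ is differentiable everywhere with derivative $\Tr\big(p'(\bm{A}+t\bm{X})\bm{X}\big)$.

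For a general continuously differentiable $f$, fix $t_{0}$ and choose a compact interval $J\subset I$; since the eigenvalues of a Hermitian matrix depend continuously on its entries, there is $\delta>0$ with $\sigma(\bm{A}+t\bm{X})\subset J$ for all $|t-t_{0}|<\delta$. By the Weierstrass theorem pick polynomials $q_{m}\to f'$ uniformly on $J$ and set $p_{m}(u)\triangleq f(c)+\int_{c}^{u}q_{m}(v)\,\mathrm{d}v$ for a fixed $c\in J$, so that $p_{m}\to f$ and $p_{m}'\to f'$ uniformly on $J$. In dimension $d$ one has $\|p_{m}(\bm{M})-f(\bm{M})\|_{1}\le d\,\|p_{m}-f\|_{\infty,J}$ and $|\Tr((p_{m}'(\bm{M})-f'(\bm{M}))\bm{X})|\le \|p_{m}'-f'\|_{\infty,J}\,\|\bm{X}\|_{1}$ whenever $\sigma(\bm{M})\subset J$. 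Hence $g_{m}(t)\triangleq \Tr p_{m}(\bm{A}+t\bm{X})\to g(t)$ uniformly on $(t_{0}-\delta,t_{0}+\delta)$, while $g_{m}'(t)=\Tr\big(p_{m}'(\bm{A}+t\bm{X})\bm{X}\big)$ converges uniformly on the same interval to $t\mapsto \Tr\big(f'(\bm{A}+t\bm{X})\bm{X}\big)$. The standard theorem on term-by-term differentiation of a uniformly convergent sequence with uniformly convergent derivatives then shows that $g$ is differentiable at $t_{0}$ with $g'(t_{0})=\Tr\big(f'(\bm{A}+t_{0}\bm{X})\bm{X}\big)$, which is the claim.

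The step I expect to be the main obstacle is exactly this last limiting argument: producing a polynomial $C^{1}$-approximation of $f$ on a neighborhood of the spectrum and justifying the interchange of $\mathrm{d}/\mathrm{d}t$ with the limit, which is precisely where the hypothesis that $f$ is continuously differentiable (rather than merely differentiable) is used. An alternative that sidesteps the sequence argument is to invoke the Daleckii--Krein formula: in an eigenbasis $\bm{Y}=\sum_{i}\lambda_{i}\proj{i}$ the matrix $\mathsf{D}f[\bm{Y}](\bm{X})$ has $(i,j)$-entry $f^{[1]}(\lambda_{i},\lambda_{j})\langle i|\bm{X}|j\rangle$, where $f^{[1]}$ is the first divided difference with $f^{[1]}(\lambda,\lambda)=f'(\lambda)$; its diagonal entries are therefore $f'(\lambda_{i})\langle i|\bm{X}|i\rangle$, and the trace is at once $\sum_{i}f'(\lambda_{i})\langle i|\bm{X}|i\rangle=\Tr\big(f'(\bm{Y})\bm{X}\big)$. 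Since the Daleckii--Krein formula is itself proved by a comparable approximation, the two routes carry the same analytic content; in infinite dimensions one additionally restricts to trace-class perturbations and approximates through the spectral measure.
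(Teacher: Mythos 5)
Your proof is correct. Note that the paper itself does not prove this proposition at all: it is imported verbatim from Hiai--Petz \cite[Theorem 3.23]{HP14}, so there is no in-paper argument to compare against. Your route --- verify $\Tr\big(\mathsf{D}f[\bm{Y}](\bm{X})\big)=\Tr\big(f'(\bm{Y})\bm{X}\big)$ for monomials by the product rule plus cyclicity of the trace, extend to polynomials by linearity, and then pass to general $C^{1}$ functions by a polynomial approximation with $p_{m}\to f$ and $p_{m}'\to f'$ uniformly on a compact interval capturing the spectra, combined with the theorem on differentiating a sequence whose derivatives converge uniformly --- is essentially the standard textbook argument, and your version is cleanly arranged: by applying the uniform-convergence theorem to the scalar functions $g_{m}(t)=\Tr p_{m}(\bm{A}+t\bm{X})$ you avoid having to establish Fr\'echet differentiability of the matrix function $f$ itself for merely $C^{1}$ data, which is where a naive chain-rule-only argument would need extra work. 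The Daleckii--Krein alternative you sketch is also valid and gives the identity in one line once that formula is granted. The only caveat is that your quantitative bounds ($\|p_{m}(\bm{M})-f(\bm{M})\|_{1}\le d\,\|p_{m}-f\|_{\infty,J}$, etc.) are stated for $d\times d$ matrices, whereas the paper's $\mathbb{M}^{\mathrm{sa}}$ nominally allows infinite-dimensional self-adjoint operators; you flag this correctly, and in the paper the proposition is only ever used in the finite-dimensional setting, so this does not affect the verdict.
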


\section{Markov Semigroups on Matrix-Valued Functions} \label{sec:semigroup}

We will introduce the theory of Markov semigroups in this section. We particularly focus on the Markov semigroup acting on matrix-valued functions. The reader may find general references of Markov semigroups on real-valued functions in \cite{Bak94, Bak06, BGL13, Gui03, Sal97}.

Throughout this paper, we consider a probability space $(\mathrm{\Omega},\mathrm{\Sigma},\mu)$ with $\Omega$ being a discrete space or a compact connected smooth manifold (e.g.~$\Omega\equiv\mathbb{R}$).
We consider the Banach space $\mathcal{B}$ of continuous, bounded and Bochner integrable (\cite{Die77, Mik78}) matrix-valued functions $\bm{f}:\mathrm{\Omega}\rightarrow \mathbb{M}^\text{sa}$ equipped with the uniform norm (e.g.~$\opnorm{\bm{f}} \triangleq \sup_{x\in\Omega }\|  \bm{f}(x) \|_\infty$).
We denote the expectation with respect to the measure $\mu$ for any measurable function $\bm{f}:\Omega \to \mathbb{M}^\text{sa}$ by
\[
\mathbb{E}_{\mu}[\bm{f}] \triangleq \int_\Omega \bm{f} \, \mathrm{d} \mu = \int_{x\in\Omega} \bm{f}(x) \, \mu(\mathrm{d}x).
\]
where the integral is the Bochner integral.
We also instate the integration condition $\Tr\left[ \mathbb{E}_{\mu} | \bm{f} | \right]\leq \infty$.

Define a \emph{completely positive kernel} $ \mathsf{T}_t (x, \mathrm{d} y): \mathbb{M}^\text{sa} \to \mathbb{M}^\text{sa}$ to be a family of CP maps on $\mathbb{M}^\text{sa}$ depending on the parameter $t\in\mathbb{R}_+\triangleq[0,\infty)$ such that 
\begin{align} \label{eq:label}
\int_{y\in\Omega} \mathsf{T}_t (x, \mathrm{d} y) \;\text{is a unital map},
\end{align}
and satisfies the \emph{Chapman-Kolmogorov identity}:
\begin{align} \label{eq:kernel}
\int_{y\in\mathrm{\Omega}} \mathsf{T}_s(x,\mathrm{d}y) \circ  \mathsf{T}_t(y,\mathrm{d}z) = \mathsf{T}_{s+t} (x,\mathrm{d}z)\quad \forall s,g\in\mathbb{R}_+.
\end{align}
In particular, we can impose the trace-preserving property: $\int_{y\in\Omega} \mathsf{T}_t (x, \mathrm{d} y)$ is a unital quantum channel.

The central object investigated in this work is a family of operators $\{\mathsf{P}_t\}_{t\geq 0}$ acting on matrix-valued functions. These operators are called \emph{Markov semigroups} if they satisfy:
\begin{defn}
	[Markov Semigroups on Matrix-Valued Functions] \label{defn:Markov}
	A family of linear operators $\{\mathsf{P}_t\}_{t\geq 0}$ on the Banach space $\mathcal{B}$ is a \emph{Markov semigroup} if and only if it satisfies the following conditions
	\begin{itemize}
		\item[(a)] $\mathsf{P}_0 = \mathds{1}$, the identity map on $\mathcal{B}$ (\emph{intitial condition}).
		\item[(b)] The map $t\to \mathsf{P}_t \bm{f}$ is a continuous map from $\mathbb{R}_+$ to $\mathcal{B}$ (\emph{continuity property}).
		\item[(c)] The semigroup properties: $\mathsf{P}_t \circ \mathsf{P}_s = \mathsf{P}_{s+t}$, for any $s,t\in\mathbb{R}_+$ (\emph{semigroup property}).
		\item[(d)] $\mathsf{P}_t \bm{I} = \bm{I}$ for any $t\in\mathbb{R}_+$, where $\bm{I}$ is the constant identity matrix in $\mathbb{M}^\text{sa}$ (\emph{mass conservation}).
		\item[(e)] If $\bm{f}$ is non-negative (i.e.~$\bm{f}(x)\succeq \bm{0}$ for all $x\in\Omega$), then $\mathsf{P}_t \bm{f}$ is non-negative for any $t\in\mathbb{R}_+$ (\emph{positivity preserving}).
	\end{itemize}
\end{defn}
Given the CP kernel $\{\mathsf{T}_t(x,\mathrm{d}y)\}_{t\geq 0}$, the Markov semigroup acing on the matrix-valued function $\bm{f}$ is defined by
\begin{align} \label{eq:Pt}
\mathsf{P}_t \bm{f}(x) \triangleq \int_{y\in\Omega} {\mathsf{T}}_t(x,\mathrm{d}y) \bm{f}(y),
\end{align}
where $\mathsf{T}_t(x,\mathrm{d}y) \bm{f}(y)$ refers to the output of the linear map $\mathsf{T}_t(x,\mathrm{d}y)$ acting on $\bm{f}(y)$.
If $\int_{y\in\Omega} \mathsf{T}_t (x, \mathrm{d} y)$ is a CPTP map, then it exhibits the \emph{contraction} property of the semigroup (with respect to the norm $\opnorm{\,\cdot\,}$ on the Banach space $\mathcal{B}$):
\begin{align} \label{eq:contraction}
\begin{split}
\opnorm{ \mathsf{P}_t \bm{f}  } &= \sup_{x\in\Omega} \left\| \mathsf{P}_t \bm{f}(x) \right\|_\infty  \\
&= \sup_{x\in\Omega} \left\| \int_{y\in\Omega} \mathsf{T}_t(x,\mathrm{d} y) \bm{f}(y)  \right\|_\infty  \\
&\leq \sup_{y\in\Omega} \left\| \bm{f}(y) \right\|_\infty \\
&= \opnorm{ \bm{f} },
\end{split}
\end{align}
where we use fact that the CPTP map is contractive \cite{PWP+06}.


From the operator Jensen's inequality (Proposition \ref{prop:Jensen}) we have the following two inequalities:
\begin{align} \label{eq:Pt_convex}
\mathsf{P}_t \left( \phi \circ \bm{f} \right) = 
\mathsf{P}_t \phi\left( \bm{f} \right) \succeq  \phi\left( \mathsf{P}_t\bm{f} \right)
\end{align}
for any operator convex function $\phi$, and
\begin{align} \label{eq:Pt_convex2}
\Tr \left[ \mathsf{P}_t \phi\left( \bm{f} \right) \right] \geq \Tr \left[  \phi\left( \mathsf{P}_t\bm{f} \right) \right]
\end{align}
for any convex function $\phi$.

Since the map $t\to \mathsf{P}_t \bm{f}$ is continuous (Definition \ref{defn:Markov}); hence, the derivative of the operator $\mathsf{P}_t$ with respect to $t$, i.e. the convergence rate of $\mathsf{P}_t$, is a main focus of the analysis of Markov semigroups. 
More precisely, 
we define the \emph{infinitesimal generator} for any Markov semigroup $\{\mathsf{P}_t\}_{t\geq0}$ by
\begin{align} \label{eq:L}
\mathsf{L}(\bm{f}) \triangleq \lim_{t\rightarrow 0^+} \frac1t( \mathsf{P}_t \bm{f} - \bm{f} ).
\end{align}
For convenience, we denote by $\mathcal{D}(\mathsf{L})$ the \emph{Dirichlet domain} of $\mathsf{L}$, which is the set of matrix-valued functions in $\mathcal{B}$ such that the limit in Eq.~\eqref{eq:L} exists. We provide an equivalent condition of $\mathcal{D}(\mathsf{L})$ in Appendix \ref{sec:HY}.

Combined with the linearity of the operators $\{\mathsf{P}_t\}_{t\geq 0 }$ and the semigroup property, we  deduce that the generator $\mathsf{L}$ is the derivative of $\mathsf{P}_t$ at any time $t> 0$.
That is, for $t,s>0$,
\[
\frac1s \left( \mathsf{P}_{t+s} - \mathsf{P}_t \right)
= \mathsf{P}_t \left( \frac1s \left( \mathsf{P}_s - \mathds{1} \right) \right)
= \left( \frac1s \left( \mathsf{P}_s - \mathds{1} \right) \right) \mathsf{P}_t.
\]
Letting $s\to 0$ shows that
\begin{align} \label{eq:partial_Pt}
\frac{\partial}{\partial t} \mathsf{P}_t = \mathsf{L} \mathsf{P}_t = \mathsf{P}_t \mathsf{L}.
\end{align}
The above equation combined with Eq.~$\eqref{eq:Pt_convex}$ implies the following proposition.
\begin{prop} \label{prop:L}
	Let $\{\mathsf{P}_t \}_{t\geq 0}$ be a Markov semigroup with the infinitesimal generator $\mathsf{L}$.
	For any operator convex function $\phi:\mathbb{R}\to\mathbb{R}$ and $\bm{f} \in \mathcal{D}(\mathsf{L})$, we have
	\begin{align} \label{eq:L_fre}
	\mathsf{L}\left( \phi(\bm{f}) \right) \succeq \mathsf{D}\phi[\bm{f}]\left( \mathsf{L} \bm{f} \right).
	\end{align}
\end{prop}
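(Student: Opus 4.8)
The plan is to differentiate, at $t=0$, the operator inequality $\mathsf{P}_t\phi(\bm{f})\succeq\phi(\mathsf{P}_t\bm{f})$ supplied by Eq.~\eqref{eq:Pt_convex}. Concretely, introduce the matrix-valued function $\bm{g}(t)\triangleq \mathsf{P}_t\phi(\bm{f}) - \phi(\mathsf{P}_t\bm{f})$ on $\mathbb{R}_+$. By Eq.~\eqref{eq:Pt_convex} we have $\bm{g}(t)\succeq\bm{0}$ for every $t\geq 0$, while the initial condition $\mathsf{P}_0=\mathds{1}$ gives $\bm{g}(0)=\phi(\bm{f})-\phi(\bm{f})=\bm{0}$. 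Thus $t=0$ is a one-sided minimizer of $\bm{g}$, and since $\bm{g}$ is right-differentiable at $0$ — the first summand because $\phi(\bm{f})\in\mathcal{D}(\mathsf{L})$, the second by the chain rule below — we get $\bm{g}'(0^+)\succeq\bm{0}$: for each vector $v$ the scalar map $t\mapsto\langle v,\bm{g}(t)v\rangle$ is nonnegative and vanishes at $0$, so its right derivative $\langle v,\bm{g}'(0^+)v\rangle$ is $\geq 0$, and this holds for all $v$.

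It remains to evaluate the two one-sided derivatives at $t=0$. For the first term, Eq.~\eqref{eq:partial_Pt} (or directly the definition~\eqref{eq:L} of $\mathsf{L}$ together with $\phi(\bm{f})\in\mathcal{D}(\mathsf{L})$) gives $\left.\frac{\mathrm{d}}{\mathrm{d}t}\right|_{t=0^+}\mathsf{P}_t\phi(\bm{f}) = \mathsf{L}(\phi(\bm{f}))$. For the second term, the chain rule for Fr\'echet derivatives (Proposition~\ref{prop:properties}(3)) applied to $t\mapsto\mathsf{P}_t\bm{f}\mapsto\phi(\mathsf{P}_t\bm{f})$, using $\mathsf{P}_0\bm{f}=\bm{f}$ and $\left.\frac{\mathrm{d}}{\mathrm{d}t}\right|_{t=0^+}\mathsf{P}_t\bm{f}=\mathsf{L}\bm{f}$ (valid since $\bm{f}\in\mathcal{D}(\mathsf{L})$), gives $\left.\frac{\mathrm{d}}{\mathrm{d}t}\right|_{t=0^+}\phi(\mathsf{P}_t\bm{f})=\mathsf{D}\phi[\bm{f}](\mathsf{L}\bm{f})$. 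Subtracting, $\bm{g}'(0^+)=\mathsf{L}(\phi(\bm{f})) - \mathsf{D}\phi[\bm{f}](\mathsf{L}\bm{f})\succeq\bm{0}$, which is exactly the claimed inequality~\eqref{eq:L_fre}. Equivalently, one can avoid differentiability of $\bm{g}$ altogether by keeping difference quotients: from Eq.~\eqref{eq:Pt_convex}, $\frac1t(\mathsf{P}_t\phi(\bm{f})-\phi(\bm{f}))\succeq\frac1t(\phi(\mathsf{P}_t\bm{f})-\phi(\bm{f}))$ for $t>0$; the left side tends to $\mathsf{L}(\phi(\bm{f}))$, and with $\bm{E}_t\triangleq\mathsf{P}_t\bm{f}-\bm{f}$ (so $\bm{E}_t/t\to\mathsf{L}\bm{f}$ and $\opnorm{\bm{E}_t}\to0$), the Fr\'echet expansion $\phi(\bm{f}+\bm{E}_t)-\phi(\bm{f})=\mathsf{D}\phi[\bm{f}](\bm{E}_t)+o(\opnorm{\bm{E}_t})$ together with linearity and boundedness of $\mathsf{D}\phi[\bm{f}]$ shows the right side tends to $\mathsf{D}\phi[\bm{f}](\mathsf{L}\bm{f})$; closedness of the positive semi-definite cone then passes $\succeq$ to the limit.

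The step I expect to need the most care is the differentiability input, i.e.\ that $t\mapsto\phi(\mathsf{P}_t\bm{f})$ is right-differentiable at $0$ with derivative $\mathsf{D}\phi[\bm{f}](\mathsf{L}\bm{f})$: this composes the assumed differentiability of $t\mapsto\mathsf{P}_t\bm{f}$ with the Fr\'echet differentiability of the standard matrix function $\phi$, and the little-$o$ remainder has to be uniform in $x\in\Omega$ for the conclusion to live in $\mathcal{B}$. The cleanest remedy is to run the entire argument pointwise: for each fixed $x$, $\mathsf{L}(\phi(\bm{f}))(x)\succeq\mathsf{D}\phi[\bm{f}(x)](\mathsf{L}\bm{f}(x))$ is an inequality between bounded self-adjoint operators, and differentiability of a standard matrix function at a single self-adjoint point is classical; the inequality of matrix-valued functions then follows by the definition of $\succeq$ on $\mathcal{B}$. (If one prefers not to assume $\phi(\bm{f})\in\mathcal{D}(\mathsf{L})$, the difference-quotient version still yields the bound with $\liminf$ in place of $\mathsf{L}(\phi(\bm{f}))$, which is all that the later exponential-decay results require.)
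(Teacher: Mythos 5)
Your proposal is correct and follows essentially the same route as the paper: apply the operator Jensen inequality $\mathsf{P}_t\phi(\bm{f})\succeq\phi(\mathsf{P}_t\bm{f})$, form difference quotients at $t=0$, and identify the two limits as $\mathsf{L}(\phi(\bm{f}))$ and $\mathsf{D}\phi[\bm{f}](\mathsf{L}\bm{f})$ via the chain rule, passing $\succeq$ to the limit. Your added remarks on the uniformity of the Fr\'echet remainder and the pointwise reduction are reasonable refinements of details the paper leaves implicit, but they do not change the argument.
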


\begin{proof}
For any $s>0$, Eq.~\eqref{eq:Pt_convex} implies
\begin{align} \label{eq:L_fre1}
\frac1s \big( \mathsf{P}_{s} \phi(  \bm{f} ) - \phi( \mathsf{P}_0 \bm{f} ) \big) \succeq \frac1s \big( \phi( \mathsf{P}_{s} \bm{f} ) - \phi( \mathsf{P}_0 \bm{f} ) \big).
\end{align}
By letting $s\to 0^+$ and using the chain rule of the Fr\'echet derivative (Proposition \ref{prop:properties}), the right-hand side yields
	\begin{align} \label{eq:L_fre2}
	\begin{split}
	\lim_{s\to 0^+} \frac1s \big( \phi( \mathsf{P}_{0+s} \bm{f} ) - \phi( \mathsf{P}_0 \bm{f} ) \big)
	&= \left. \mathsf{D} \phi \left[ \mathsf{P}_t \bm{f} \right] \left( 
	\frac{\partial}{\partial t} \mathsf{P}_t \bm{f} \right) \right|_{t=0} \\
	&= \left. \mathsf{D} \phi \left[ \mathsf{P}_t \bm{f} \right] \left( 
	 \mathsf{L} \mathsf{P}_t \bm{f} \right) \right|_{t=0} \\
	&= \mathsf{D} \phi \left[\bm{f} \right] \left( 
	 \mathsf{L} \bm{f} \right),	
	\end{split}
	\end{align}
	where the second equality follows from Eq.~\eqref{eq:partial_Pt}. In the last line we apply the property $\mathsf{P}_0 \bm{f} = \bm{f}$ (item (a) in Definition \ref{defn:Markov}).
	
On the other hand, the left-hand side of Eq.~\eqref{eq:L_fre1} can be rephrased as
	\begin{align} \label{eq:L_fre3}
	\begin{split}
	\lim_{s\to 0^+} \frac1s \big( \mathsf{P}_{s} \phi(  \bm{f} ) - \phi( \mathsf{P}_0 \bm{f} ) \big)
	&= 	\lim_{s\to 0^+} \frac1s \big( \mathsf{P}_{s} \phi(  \bm{f} ) - \phi(  \bm{f} ) \big) \\
	&= \mathsf{L} \left( \phi(\bm{f}) \right).
	\end{split}
	\end{align}
	Hence, combining Eqs.~\eqref{eq:L_fre3}, \eqref{eq:L_fre1} and \eqref{eq:L_fre2} arrives at the desired inequality
	\begin{align*}
	\mathsf{L}\left( \phi(\bm{f}) \right) \succeq \mathsf{D}\phi[\bm{f}]\left( \mathsf{L} \bm{f} \right).
	\end{align*}
\end{proof}


In the classical setup (i.e.~Markov semigroups acting on real-valued functions), the \emph{carr\'{e} du champ}  operator (see e.g.~\cite{BGL13}) is defined by :
\begin{align} \label{eq:Gamma0}
\Gamma(f,g) = \frac12 \big( \mathsf{L}(fg) - f \mathsf{L}(g) - g \mathsf{L}(f) \big).
\end{align}
Here we introduce a non-commutative version of the {carr\'{e} du champ} operator $\mathbf{\Gamma}:\mathcal{D}(\mathsf{L})\times \mathcal{D}(\mathsf{L}) \rightarrow \mathcal{D}(\mathsf{L})$ of the generator $\mathsf{L}$ by
\begin{align} \label{eq:Gamma1}
\mathbf{\Gamma}(\bm{f},\bm{f}) \triangleq \frac12 \big( \mathsf{L}(\bm{f}^2) - \bm{f}\mathsf{L}(\bm{f}) - \mathsf{L}(\bm{f})\bm{f}\big),
\end{align}
and its symmetric and bilinear extension 
\begin{align} \label{eq:Gamma2}
\begin{split}
\mathbf{\Gamma}(\bm{f},\bm{g}) = \mathbf{\Gamma}(\bm{g},\bm{f}) &\triangleq \frac12 \big( \mathbf{\Gamma}(\bm{f}+\bm{g},\bm{f}+\bm{g}) - \mathbf{\Gamma}(\bm{f},\bm{f}) - \mathbf{\Gamma}(\bm{g},\bm{g}) \big) \\
&= \frac14 \left( \mathsf{L}( \bm{f} \bm{g} ) + \mathsf{L}( \bm{g} \bm{f} ) - \bm{f}\mathsf{L}(\bm{g}) - \bm{g}\mathsf{L}(\bm{f}) - \mathsf{L}(\bm{f})\bm{g} - \mathsf{L}(\bm{g})\bm{f} \right).
\end{split}
\end{align}
We note that when $\bm{f}$ commutes\footnote{Here we means that $[\bm{f}(x),\bm{g}(x)]=\bm{0}$ for all $x\in\mathrm{\Omega}$.} with $\bm{g}$, the carr\'{e} du champ operator reduces to the conventional expression (cf.~Eq.~\eqref{eq:Gamma0}):
\begin{align*} 
\mathbf{\Gamma}(\bm{f},\bm{g}) \equiv \frac12 \left( \mathsf{L}(\bm{f}\bm{g}) - \bm{f}\mathsf{L}(\bm{g}) - \bm{g}\mathsf{L}(\bm{f})\right).
\end{align*}
Recall that the square function $\phi(u) = u^2$ is operator convex.
The formula of the Fr\'echet derivative: $\mathsf{D}\phi[\bm{A}](\bm{B}) = \bm{A}\bm{B} + \bm{B}\bm{A}$ together with Proposition \ref{prop:L} yields
\[
\mathsf{L}(\bm{f}^2) \succeq \bm{f}\mathsf{L}(\bm{f}) + \mathsf{L}(\bm{f})\bm{f}.
\]
Hence the carr\'{e} du champ operator is positive semi-definite:
$\mathbf{\Gamma}(\bm{f},\bm{f}) \succeq \bm{0}$.
We can also observe that $\bm{\Gamma}(\bm{f},\bm{f}) = \bm{0}$ implies that $\bm{f}$ is essentially constant, i.e.~$\mathsf{P}_t \bm{f} = \bm{f}$ for $t\geq 0$.
Moreover, the non-negativity and the bilinearity of the carr\'{e} du champ operator directly yield a trace Cauchy-Schwartz inequality:
\begin{prop}
	[Trace Cauchy-Schwartz Inequality for Carr\'e du Champ Operators] \label{prop:Cauchy}
	For all $\bm{f},\bm{g} \in \mathcal{D}(\mathsf{L})$,
	\[
	\big( \Tr \left[ \bm{\Gamma}(\bm{f},\bm{g}) \right]  \big)^2 \leq \Tr \left[ \bm{\Gamma}(\bm{f},\bm{f}) \right] \cdot \Tr \left[ \bm{\Gamma}(\bm{g},\bm{g}) \right].
	\]
\end{prop}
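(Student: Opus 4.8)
The plan is to mimic the classical proof of the Cauchy--Schwarz inequality using the positive semi-definiteness of the bilinear form $\bm{\Gamma}$, now transported to the scalar-valued bilinear form $(\bm{f},\bm{g}) \mapsto \Tr[\bm{\Gamma}(\bm{f},\bm{g})]$. First I would record the two facts already established in the excerpt: $\bm{\Gamma}$ is symmetric and bilinear (Eq.~\eqref{eq:Gamma2}), and $\bm{\Gamma}(\bm{h},\bm{h}) \succeq \bm{0}$ for every $\bm{h} \in \mathcal{D}(\mathsf{L})$. Taking the trace preserves both properties: $(\bm{f},\bm{g}) \mapsto \Tr[\bm{\Gamma}(\bm{f},\bm{g})]$ is a symmetric bilinear form on $\mathcal{D}(\mathsf{L})$, and it is positive semi-definite since $\bm{\Gamma}(\bm{h},\bm{h}) \succeq \bm{0}$ forces $\Tr[\bm{\Gamma}(\bm{h},\bm{h})] \geq 0$.

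Next I would run the standard discriminant argument. For any real scalar $\lambda$, set $\bm{h} = \bm{f} + \lambda \bm{g}$, which lies in $\mathcal{D}(\mathsf{L})$ because the domain is a linear subspace. Expanding by bilinearity and symmetry,
\begin{align*}
0 \leq \Tr\left[ \bm{\Gamma}(\bm{f}+\lambda\bm{g}, \bm{f}+\lambda\bm{g}) \right] = \Tr\left[ \bm{\Gamma}(\bm{f},\bm{f}) \right] + 2\lambda \Tr\left[ \bm{\Gamma}(\bm{f},\bm{g}) \right] + \lambda^2 \Tr\left[ \bm{\Gamma}(\bm{g},\bm{g}) \right].
\end{align*}
This quadratic polynomial in $\lambda$ is non-negative for all real $\lambda$, so its discriminant must be non-positive:
\begin{align*}
4 \left( \Tr\left[ \bm{\Gamma}(\bm{f},\bm{g}) \right] \right)^2 - 4\, \Tr\left[ \bm{\Gamma}(\bm{f},\bm{f}) \right] \cdot \Tr\left[ \bm{\Gamma}(\bm{g},\bm{g}) \right] \leq 0,
\end{align*}
which is exactly the claimed inequality after dividing by $4$. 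One should handle the degenerate case $\Tr[\bm{\Gamma}(\bm{g},\bm{g})] = 0$ separately: then the linear term $2\lambda\,\Tr[\bm{\Gamma}(\bm{f},\bm{g})]$ must vanish identically in $\lambda$ (otherwise the expression goes negative for suitable $\lambda$), forcing $\Tr[\bm{\Gamma}(\bm{f},\bm{g})] = 0$ and the inequality holds trivially.

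Honestly, there is no serious obstacle here — the result is a formal consequence of positivity plus bilinearity of a scalar bilinear form, and the only mild point of care is that $\bm{\Gamma}$ takes matrix values, so one must first pass to the scalar form via the trace before invoking the discriminant trick (one cannot apply the discriminant argument at the operator level directly). I would also note in passing that restricting to real $\lambda$ suffices because $\bm{f}, \bm{g}$ are self-adjoint-valued and $\bm{\Gamma}$ was defined with a real bilinear extension; no complex-scalar refinement is needed for the stated inequality. The whole argument is a few lines.
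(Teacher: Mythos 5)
Your proof is correct and follows essentially the same route as the paper: expand $\bm{\Gamma}(\bm{f}+\lambda\bm{g},\bm{f}+\lambda\bm{g})\succeq\bm{0}$ by bilinearity, take the trace, and apply the discriminant argument to the resulting non-negative quadratic in the real scalar. Your explicit treatment of the degenerate case $\Tr[\bm{\Gamma}(\bm{g},\bm{g})]=0$ is a minor point of extra care that the paper leaves implicit, but otherwise the two arguments coincide.
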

\begin{proof}
	From Eq.~\eqref{eq:Gamma2}, for all $s\in \mathbb{R}$  if follows that
	\begin{align*}
	\bm{\Gamma}( s\bm{f} + \bm{g}, s\bm{f} + \bm{g}) 
	&=  s^2 \cdot \bm{\Gamma}( \bm{f}, \bm{f}) + 2s \cdot \bm{\Gamma}( \bm{f}, \bm{g} ) + \bm{\Gamma}(\bm{g},\bm{g}) \succeq \bm{0}
	\end{align*}
	After taking trace, the non-negativity of the above equation ensures the discriminant being non-positive:
	\[
	 \big( 2\cdot \Tr \left[ \bm{\Gamma}(\bm{f},\bm{g}) \right]  \big)^2 - 4 \cdot \Tr \left[ \bm{\Gamma}(\bm{f},\bm{f}) \right] \cdot \Tr \left[ \bm{\Gamma}(\bm{g},\bm{g}) \right] \leq 0
	\]
	as desired.
\end{proof}
Note that a bilinear map $ \langle \cdot, \cdot \rangle : V \times V \to \mathbb{R}$ on some vector space $V$ is a \emph{scalar inner product} if it satisfies conjugate symmetry and non-negativity ($\langle x, x\rangle \geq0$ and $x=0$ when $\langle x, x\rangle =0$).
As a result, the non-commutative carr\'e du champ operator that exhibits the properties of symmetry, linearity, and non-negativity ($\bm{\Gamma}(\bm{f},\bm{f}) \succeq \bm{0}$ and $\mathsf{L}\bm{f} = \bm{0}$ when $\bm{\Gamma}(\bm{f},\bm{f}) = \bm{0}$)
can be viewed as a matrix-valued inner product (with respect to the generator $\mathsf{L}$) on the space $\mathcal{D}(\mathsf{L})$.

Given the semigroup $\{\mathsf{P}_t\}_{t\geq 0}$,
the measure $\mu$ is \emph{invariant} for the function $\bm{f}\in \mathcal{D}(\mathsf{L})$ if 
\begin{align} \label{eq:invariance}
\int \mathsf{P}_t \bm{f}(x) \, \mu(\mathrm{d}x) = \int \bm{f}(x) \,\mu(\mathrm{d} x) , \quad t\in\mathbb{R}_+.
\end{align}
We can observe from Eqs.~\eqref{eq:L} and \eqref{eq:invariance} that any invariant measure $\mu$ satisfies
\begin{align} \label{eq:L0}
\int \mathsf{L}(\bm{f}) \, \mathrm{d} \mu = \bm{0}.
\end{align}
We call the measure $\mu$ \emph{symmetric} if and only if
\begin{align} \label{eq:symmetric}
\int \bm{f} \mathsf{L}(\bm{g}) +  \mathsf{L}(\bm{g}) \bm{f} \, \mathrm{d} \mu
= \int \bm{g} \mathsf{L}(\bm{f}) +  \mathsf{L}(\bm{f}) \bm{g} \, \mathrm{d} \mu,
\end{align}
which implies an integration by parts formula:
\begin{align} \label{eq:by_part}
-\frac12 \int \bm{f} \mathsf{L}(\bm{g}) +  \mathsf{L}(\bm{g}) \bm{f} \, \mathrm{d} \mu
= \int \mathbf{\Gamma}(\bm{f},\bm{g}) \, \mathrm{d} \mu
= \int \mathbf{\Gamma}(\bm{g},\bm{f}) \, \mathrm{d} \mu.
\end{align}

The notion of carr\'{e} du champ operator and the invariant measure $\mu$ (i.e.~Eq.~\eqref{eq:L0}) immediately lead to a symmetric bilinear \emph{Dirichlet form}:
\begin{align} \label{eq:Dirichlet}
\begin{split}
\bm{\mathcal{E}}(\bm{f},\bm{g}) \triangleq \int \mathbf{\Gamma}(\bm{f},\bm{g}) \, \mathrm{d} \mu
&= -\frac14 \int \bm{f}\mathsf{L}(\bm{g}) + \bm{g}\mathsf{L}(\bm{f}) + \mathsf{L}(\bm{f})\bm{g} + \mathsf{L}(\bm{g})\bm{f} \, \mathrm{d}\mu.
\end{split}
\end{align}
The non-negativity of the carr\'e du champ operator also yields that
$\bm{\mathcal{E}}(\bm{f},\bm{f}) = \int \mathbf{\Gamma}(\bm{f},\bm{f}) \, \mathrm{d} \mu \succeq \bm{0}$, which stands for a kind of second-moment quantity or the energy of the function $\bm{f}$.
For convenience, we shorthand the notation $\mathbf{\Gamma}(\bm{f}) \equiv \mathbf{\Gamma}(\bm{f},\bm{f})$ and $\bm{\mathcal{E}}(\bm{f})\equiv \bm{\mathcal{E}}(\bm{f},\bm{f})$.

In the following, we often refer to the \emph{Markov Triple} $(\Omega,\mathbf{\Gamma},\mu)$ with state space $\Omega$, carr\'e du champ operator $\mathbf{\Gamma}$ acting on the Dirichlet domain $\mathcal{D}(\mathsf{L})$ of matrix-valued functions, and invariant measure $\mu$.
Additionally, we will apply the Fubini's theorem to freely interchange the order of trace and the expectation with respect to $\mu$.

\section{Main Results: Exponential Decays of Matrix $\Phi$-Entropy Functionals} \label{sec:main}

In this section, our goal is to show that the matrix $\Phi$-entropy functional exponentially decays along the Markov semigroup and its relation with the spectral gap inequalities and logarithmic Sobolev inequalities.
With the invariant measure $\mu$ of the semigroup $\left\{\mathsf{P}_t\right\}_{t\geq0}$ and the Jensen inequality \eqref{eq:Pt_convex2}, we observe that
\begin{align*}
H_\Phi \left(\mathsf{P}_t \bm{f} \right) &= \Tr \Big[ \mathbb{E}_{\mu}\big[\Phi \left(\mathsf{P}_t \bm{f} \right) \big]  - \Phi\big( \mathbb{E}_{\mu} \left[ \mathsf{P}_t \bm{f} \right] \big) \Big] \\
&=  \Tr \Big[ \mathbb{E}_{\mu}\big[\Phi \left(\mathsf{P}_t \bm{f} \right) \big]  - \Phi\big( \mathbb{E}_{\mu}  \bm{f}  \big) \Big] \\
&\leq \Tr \Big[ \mathbb{E}_{\mu}\big[\mathsf{P}_t\Phi \left( \bm{f} \right) \big]  - \Phi\big( \mathbb{E}_{\mu}  \bm{f}  \big) \Big] \\
&= H_\Phi \left( \bm{f} \right),
\end{align*}
where in the second and the last lines we use the property of the invariant measure $\mu$, Eq.~\eqref{eq:invariance}.
Thus, the matrix $\Phi$-entropy functional is non-increasing along the flow of the semigroup and behaves like the classical $\Phi$-entropy functionals (see e.g.~\cite{Cha03}).
Moreover, we are able to obtain the time differentiation of the matrix $\Phi$-entropy functional, which can be viewed as the \emph{Boltzmann H-Theorem} for matrix-valued functions.
\begin{prop}
	[de Bruijn's Property for Markov Semigroups] \label{prop:de}
	Fix a probability space $(\Omega,\Sigma,\mu)$.
	Let $\left\{ \mathsf{P}_t \right\}_{t\geq 0} $ be a Markov semigroup with infinitesimal generator $\mathsf{L}$ and carr\'e du champ operator $\mathbf{\Gamma}$.
	Assume that $\mu$ is an invariant probability measure for the semigroup.
	Then, for any suitable matrix-valued function $\bm{f}:\Omega \rightarrow \mathbb{M}^\text{sa}$ in the Dirichlet domain $\mathcal{D}(\mathsf{L})$ with $\mu$ being its invariant measure,
	\begin{align} \label{eq:deBru1}
	\frac{\partial}{\partial t} H_\Phi \left( \mathsf{P}_t \bm{f} \right) 
	= \Tr \mathbb{E}_{\mu} \big[ \Phi'\left(\mathsf{P}_t \bm{f}\right)  \mathsf{L}\mathsf{P}_t \bm{f}  \big] \leq 0,\quad \forall t\in\mathbb{R}_+.
	\end{align}
	When $\mu$ is symmetric, one has the following formulation
	\begin{align} \label{eq:deBru2}
	\frac{\partial}{\partial t} H_\Phi \left( \mathsf{P}_t \bm{f} \right) = 
	- \Tr\mathbb{E}_{\mu} \big[ \mathbf{\Gamma} \left(\Phi' \left( \mathsf{P}_t \bm{f} \right), \mathsf{P}_t \bm{f} \right) \big],\quad \forall t\in\mathbb{R}_+.
	\end{align}
\end{prop}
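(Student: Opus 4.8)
The plan is to differentiate the two summands of $H_\Phi(\mathsf{P}_t\bm{f}) = \Tr\,\mathbb{E}_{\mu}[\Phi(\mathsf{P}_t\bm{f})] - \Tr\,\Phi(\mathbb{E}_{\mu}[\mathsf{P}_t\bm{f}])$ separately. Since $\mu$ is invariant, Eq.~\eqref{eq:invariance} gives $\mathbb{E}_{\mu}[\mathsf{P}_t\bm{f}] = \mathbb{E}_{\mu}[\bm{f}]$ independently of $t$, so the second summand is constant in $t$ and drops out of $\partial_t H_\Phi(\mathsf{P}_t\bm{f})$. Everything therefore reduces to computing $\partial_t\,\Tr\,\mathbb{E}_{\mu}[\Phi(\mathsf{P}_t\bm{f})]$. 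First I would justify pulling $\partial/\partial t$ through the trace and through the Bochner integral $\mathbb{E}_{\mu}$: because $\bm{f}\in\mathcal{D}(\mathsf{L})$ and the semigroup is mass-conserving and positivity preserving, the curve $t\mapsto\mathsf{P}_t\bm{f}$ and its difference quotients stay in a fixed bounded subset of $\mathcal{B}$, so dominated convergence together with Fubini's theorem (as already invoked at the end of Section~\ref{sec:semigroup}) legitimizes the exchange.

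The core is then a pointwise-in-$x$ computation. Fix $t_0$ and write, using Eq.~\eqref{eq:partial_Pt}, $\mathsf{P}_{t_0+s}\bm{f}(x) = \mathsf{P}_{t_0}\bm{f}(x) + s\,\mathsf{L}\mathsf{P}_{t_0}\bm{f}(x) + o(s)$ in $\mathbb{M}^{\text{sa}}$ as $s\to 0$. Composing with the map $\bm{A}\mapsto\Tr\,\Phi(\bm{A})$, which by Proposition~\ref{prop:trace_Petz} is differentiable with $\mathsf{D}(\Tr\,\Phi)[\bm{A}](\bm{E}) = \Tr[\bm{E}\,\Phi'(\bm{A})]$, yields $\frac{\mathrm{d}}{\mathrm{d}t}\Tr\,\Phi(\mathsf{P}_t\bm{f}(x)) = \Tr[\Phi'(\mathsf{P}_t\bm{f}(x))\,\mathsf{L}\mathsf{P}_t\bm{f}(x)]$ (this is the chain-rule upgrade of Proposition~\ref{prop:trace_Petz} from an affine curve $\bm{A}+t\bm{X}$ to a general smooth one). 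Integrating against $\mu$ gives Eq.~\eqref{eq:deBru1}. The inequality $\partial_t H_\Phi(\mathsf{P}_t\bm{f})\le 0$ then needs no further computation: by the semigroup property and the monotonicity established right before the proposition (which uses only the trace Jensen inequality \eqref{eq:Pt_convex2}), $H_\Phi(\mathsf{P}_{t+s}\bm{f}) = H_\Phi(\mathsf{P}_s(\mathsf{P}_t\bm{f}))\le H_\Phi(\mathsf{P}_t\bm{f})$, so $t\mapsto H_\Phi(\mathsf{P}_t\bm{f})$ is non-increasing and its derivative is non-positive.

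For Eq.~\eqref{eq:deBru2}, assume in addition that $\mu$ is symmetric and (as part of ``suitable'') that $\Phi'(\mathsf{P}_t\bm{f})\in\mathcal{D}(\mathsf{L})$. Write $\bm{g}\triangleq\mathsf{P}_t\bm{f}$ and $\bm{h}\triangleq\Phi'(\bm{g})$. Cyclicity of the trace gives $\Tr\,\mathbb{E}_{\mu}[\bm{h}\,\mathsf{L}\bm{g}] = \frac12\Tr\,\mathbb{E}_{\mu}[\bm{h}\,\mathsf{L}(\bm{g}) + \mathsf{L}(\bm{g})\,\bm{h}]$, and the integration-by-parts formula \eqref{eq:by_part}, valid since $\mu$ is symmetric, rewrites the right-hand side as $-\Tr\,\mathbb{E}_{\mu}[\mathbf{\Gamma}(\bm{h},\bm{g})] = -\Tr\,\mathbb{E}_{\mu}[\mathbf{\Gamma}(\Phi'(\mathsf{P}_t\bm{f}),\mathsf{P}_t\bm{f})]$, which is exactly Eq.~\eqref{eq:deBru2}.

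The step I expect to be the main obstacle is the functional-analytic bookkeeping: justifying the interchange of the $t$-derivative with the Bochner integral and the trace in the possibly infinite-dimensional setting, and upgrading Proposition~\ref{prop:trace_Petz} from affine curves to the curve $t\mapsto\mathsf{P}_t\bm{f}$. Both are ``soft'' but require controlling the spectra of $\mathsf{P}_t\bm{f}(x)$ uniformly in $x$ and making precise the integrability conditions hidden in the phrase ``suitable matrix-valued function''. Once those are in place, the remainder is the elementary trace manipulations above together with the formulas already recorded in Section~\ref{sec:semigroup}.
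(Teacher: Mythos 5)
Your proof is correct, and for the main identity \eqref{eq:deBru1} and the symmetric reformulation \eqref{eq:deBru2} it follows the paper's own route: invariance kills the $t$-dependence of $\Tr\,\Phi(\mathbb{E}_{\mu}[\mathsf{P}_t\bm{f}])$, the chain rule for Fr\'echet derivatives together with Eq.~\eqref{eq:partial_Pt} and Proposition~\ref{prop:trace_Petz} gives $\partial_t \Tr\mathbb{E}_{\mu}[\Phi(\mathsf{P}_t\bm{f})]=\Tr\mathbb{E}_{\mu}[\Phi'(\mathsf{P}_t\bm{f})\,\mathsf{L}\mathsf{P}_t\bm{f}]$, and cyclicity of the trace plus the integration-by-parts formula \eqref{eq:by_part} with $\bm{g}=\Phi'(\mathsf{P}_t\bm{f})$, $\bm{h}=\mathsf{P}_t\bm{f}$ yields \eqref{eq:deBru2}. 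The one place where you genuinely deviate is the non-positivity: the paper argues pointwise via Proposition~\ref{prop:L}, namely $\mathsf{D}\Phi[\mathsf{P}_t\bm{f}](\mathsf{L}\mathsf{P}_t\bm{f})\preceq \mathsf{L}\,\Phi(\mathsf{P}_t\bm{f})$, and then uses $\mathbb{E}_{\mu}[\mathsf{L}(\cdot)]=\bm{0}$ from Eq.~\eqref{eq:L0}, whereas you observe that $t\mapsto H_\Phi(\mathsf{P}_t\bm{f})$ is non-increasing by the semigroup property combined with the monotonicity computation preceding the proposition (which rests only on the trace Jensen inequality \eqref{eq:Pt_convex2} and invariance, noting $\mathbb{E}_{\mu}[\mathsf{P}_s(\mathsf{P}_t\bm{f})]=\mathbb{E}_{\mu}[\mathsf{P}_t\bm{f}]$), so the already-computed derivative must be $\leq 0$. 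Both are valid; your variant needs only convexity of $\Phi$ at the trace level rather than the operator convexity invoked in Proposition~\ref{prop:L}, while the paper's version gives the slightly stronger pointwise operator statement $\mathbb{E}_{\mu}[\mathsf{D}\Phi[\mathsf{P}_t\bm{f}](\mathsf{L}\mathsf{P}_t\bm{f})]\preceq\bm{0}$ before tracing. Your extra care about interchanging $\partial_t$ with the Bochner integral and the trace, and about requiring $\Phi'(\mathsf{P}_t\bm{f})\in\mathcal{D}(\mathsf{L})$ for \eqref{eq:deBru2}, addresses regularity points the paper leaves implicit under ``suitable'' and its blanket appeal to Fubini's theorem.
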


\begin{proof}
	The proof directly follows from the definition of the matrix $\Phi$-entropy functional and the properties of the Markov semigroup.
	Namely,
	\begin{align*}
	\frac{\partial}{\partial t} H_\Phi \left( \mathsf{P}_t \bm{f} \right)
	&= \frac{\partial}{\partial t} \Tr \Big[ \mathbb{E}_{\mu} \big[ \Phi \left( \mathsf{P}_t \bm{f} \right) \big] - \Phi \big( \mathbb{E}_{\mu}\left[ \mathsf{P}_t \bm{f} \right] \big) \Big] \\ 
	&= \frac{\partial}{\partial t} \Tr \Big[ \mathbb{E}_{\mu} \big[ \Phi \left( \mathsf{P}_t \bm{f} \right) \big] - \Phi \big( \mathbb{E}_{\mu}\left[ \bm{f} \right] \big) \Big] \\
	&= \frac{\partial}{\partial t} \Tr\mathbb{E}_{\mu} \big[ \Phi \left( \mathsf{P}_t \bm{f} \right) \big] \\
	&= \Tr \mathbb{E}_{\mu} \big[ \mathsf{D}\Phi\left[\mathsf{P}_t \bm{f}  \right] \big( \mathsf{L}\mathsf{P}_t \bm{f} \big) \big]\\
	&=  \Tr \mathbb{E}_{\mu} \big[ \Phi'\left(\mathsf{P}_t \bm{f}\right)  \mathsf{L}\mathsf{P}_t \bm{f}  \big],
	\end{align*}
where the second equality is due to the invariance of $\mu$, Eq.~\eqref{eq:invariance}. The fourth equation is due to the chain rule of Fr\'echet derivative (see Proposition \ref{prop:properties}) and Eq.~\eqref{eq:partial_Pt}.	We obtain the last identity by Proposition \ref{prop:trace_Petz}.
	
Proposition \ref{prop:L} yields $\mathsf{D}\Phi\left[\mathsf{P}_t \bm{f}  \right] \big( \mathsf{L}\mathsf{P}_t \bm{f} \big) \preceq \mathsf{L}\Phi(\mathsf{P}_t \bm{f})$.
	By the invariance of $\mu$, we deduce the non-positivity of Eq.~\eqref{eq:deBru1}:
	\begin{align*}
	\mathbb{E}_{\mu} \big[  \mathsf{D}\Phi\left[\mathsf{P}_t \bm{f}  \right] \big( \mathsf{L}\mathsf{P}_t \bm{f} \big)
	\big] \preceq \mathbb{E}_{\mu} \big[  \mathsf{L}\Phi(\mathsf{P}_t \bm{f}) \big]
	= \bm{0}.
	\end{align*}
	
	The symmetric case \eqref{eq:deBru2} stands by further applying the integration by parts formula, Eq.~\eqref{eq:by_part}, i.e.~
	\begin{align*}
	\Tr \Big[ \mathbb{E}_{\mu} \big[\bm{\Gamma}(\bm{g},\bm{h})   \big]  \Big]
	&= -\frac12 \Tr \Big[ \mathbb{E}_{\mu} \big[ \bm{g} \cdot \mathsf{L}(\bm{h}) + \mathsf{L}(\bm{h}) \cdot \bm{g}   \big]\Big] \\
	&= - \Tr \Big[ \mathbb{E}_{\mu} \big[ \bm{g} 
	\cdot \mathsf{L}(\bm{h}) \big] \Big],
	\end{align*}
where we apply the cyclic property of the trace function. Hence, Eq.~\eqref{eq:deBru2} follows by taking $\bm{g} \equiv \Phi'(\mathsf{P}_t \bm{f})$ and $\bm{h} \equiv \mathsf{P}_t \bm{f}$.
\end{proof}

In the following, we first give the definitions of the spectral gap inequalities and logarithmic Sobolev inequalities related to Markov semigroups. The main result---the relation between the exponential decays in matrix $\Phi$-entropies and the functional inequalities, is presented in Theorem \ref{theo:decay}.

\begin{defn}
	[Spectral Gap Inequality for Matrix-Valued Functions] \label{defn:spectral}
	A Markov Triple $(\Omega,\mathbf{\Gamma},\mu)$ is said to satisfy a spectral gap inequality with a constant $C>0$, if for all matrix-valued functions $\bm{f}:\Omega \rightarrow \mathbb{M}_d^\text{sa}$ in the Dirichlet domain $\mathcal{D}(\mathsf{L})$ with $\mu$ being its invariant measure,
	\[
	\textnormal{Var}(\bm{f}) \leq C\mathcal{E}(\bm{f}),
	\]
	where 
	\[
	\textnormal{Var}(\bm{f})\triangleq \Tr\mathbb{E}_{\mu}\left[  \big( \bm{f}-  \mathbb{E}_{\mu}[\bm{f}] \big)^2  \right]
	\]
denotes the variance of the function $\bm{f}$ with respect to the measure $\mu$ and $\mathcal{E}(\bm{f})\triangleq \Tr\left[ \bm{\mathcal{E}}(\bm{f}) \right]$.
The infimum of the constants among all the spectral gap inequalities is called the \emph{spectral gap constant}.
\end{defn}

\begin{defn}
[Logarithmic Sobolev Inequality for Matrix-Valued Functions] \label{defn:log}
A Markov Triple $(\Omega,\mathbf{\Gamma},\mu)$ is said to satisfy a logarithmic Sobolev inequality LS$(C,B)$ with constants $C>0$, $B\geq 0$, if for all matrix-valued functions $\bm{f}:\Omega \rightarrow \mathbb{M}_d^\text{sa}$ in the Dirichlet domain $\mathcal{D}(\mathsf{L})$  with $\mu$ being its invariant measure,
	\[
	\Ent\left( \bm{f}^2 \right) \leq B \, \Tr\mathbb{E}_{\mu} \big[ \bm{f}^2 \big] + C\mathcal{E}(\bm{f}).
	\]
	The logarithmic Sobolev inequality is called \emph{tight} and is denoted by LS$(C)$ when $B=0$.
	When $B>0$, the logarithmic Sobolev inequality LS$(C,B)$ is called \emph{defective}.
	
	We also define the \emph{modified logarithmic Sobolev inequality} (MLSI) if there exists a constant $C$ such that
	\[
	\Ent\left( \bm{f} \right) \leq -C \Tr\mathbb{E}_{\mu} \big[\left( \bm{I} + \log\bm{f}\right)  \mathsf{L} \bm{f}  \big].
	\]
\end{defn}

\begin{theo}
	[Exponential Decay of Matrix $\Phi$-Entropy Functionals of Markov Semigroups] \label{theo:decay}
	Given a Markov triple $(\Omega, \bm{\Gamma}, \mu)$, the following two statements are equivalent: there exists a $\Phi$-Sobolev constant $C \in (0,\infty]$ such that 
\begin{align} \label{eq:exp1}
H_\Phi(\bm{f}) \leq -C \Tr \mathbb{E}_{\mu} \big[ \Phi'\left( \bm{f}\right)  \mathsf{L} \bm{f}  \big], 
\end{align}
and 
\begin{align} \label{eq:exp2}
H_\Phi( \mathsf{P}_t\bm{f}) \leq \mathrm{e}^{-t/C} H_\Phi(\bm{f}), \quad \forall t\geq 0
\end{align}
for all $\bm{f}\in\mathcal{D}(\mathsf{L})$ with $\mu$ being its invariant measure.
\end{theo}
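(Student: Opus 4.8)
The plan is to establish the equivalence by a differential-inequality argument built on the de Bruijn property (Proposition \ref{prop:de}). The key observation is that the right-hand side of the $\Phi$-Sobolev inequality \eqref{eq:exp1} is, up to the sign and the constant $C$, precisely the time derivative of $H_\Phi(\mathsf{P}_t \bm{f})$ computed in \eqref{eq:deBru1}.

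\textit{Direction \eqref{eq:exp1} $\Rightarrow$ \eqref{eq:exp2}.} First I would fix $\bm{f} \in \mathcal{D}(\mathsf{L})$ with invariant measure $\mu$ and define $\psi(t) \triangleq H_\Phi(\mathsf{P}_t \bm{f})$ for $t \geq 0$. Since $\mu$ is invariant for $\bm{f}$, one checks that $\mu$ is also invariant for $\mathsf{P}_s \bm{f}$ for every $s \geq 0$ (using the semigroup property $\mathsf{P}_t \mathsf{P}_s = \mathsf{P}_{s+t}$ together with \eqref{eq:invariance}), and $\mathsf{P}_s \bm{f}$ remains in $\mathcal{D}(\mathsf{L})$; hence the $\Phi$-Sobolev inequality \eqref{eq:exp1} may be applied with $\mathsf{P}_t \bm{f}$ in place of $\bm{f}$. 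Combining this with Proposition \ref{prop:de}, namely $\psi'(t) = \Tr \mathbb{E}_{\mu}[\Phi'(\mathsf{P}_t \bm{f})\, \mathsf{L}\mathsf{P}_t \bm{f}]$, gives the differential inequality $\psi'(t) \leq -\tfrac{1}{C}\, \psi(t)$ for all $t \geq 0$. Integrating (equivalently, applying Gr\"onwall's lemma to $t \mapsto \mathrm{e}^{t/C}\psi(t)$, whose derivative is $\mathrm{e}^{t/C}(\psi'(t) + \tfrac1C \psi(t)) \leq 0$) yields $\psi(t) \leq \mathrm{e}^{-t/C}\psi(0) = \mathrm{e}^{-t/C} H_\Phi(\bm{f})$, which is \eqref{eq:exp2}. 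One must note that $\psi(t) \geq 0$ throughout (by operator Jensen / Eq.~\eqref{eq:Pt_convex2} as shown just before Proposition \ref{prop:de}), so the division implicit in Gr\"onwall is harmless even when $\psi(t) = 0$; if $C = \infty$ the statement \eqref{eq:exp2} reads $H_\Phi(\mathsf{P}_t \bm{f}) \leq H_\Phi(\bm{f})$, which is the monotonicity already established.

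\textit{Direction \eqref{eq:exp2} $\Rightarrow$ \eqref{eq:exp1}.} Here I would use that \eqref{eq:exp2} forces a one-sided bound on the derivative at $t = 0$. Write the difference quotient $\tfrac{1}{t}\big(H_\Phi(\mathsf{P}_t \bm{f}) - H_\Phi(\bm{f})\big) \leq \tfrac{1}{t}\big(\mathrm{e}^{-t/C} - 1\big) H_\Phi(\bm{f})$ for $t > 0$. Letting $t \to 0^+$, the left-hand side converges to $\psi'(0) = \Tr\mathbb{E}_{\mu}[\Phi'(\bm{f})\,\mathsf{L}\bm{f}]$ by Proposition \ref{prop:de}, while the right-hand side converges to $-\tfrac{1}{C} H_\Phi(\bm{f})$. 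This yields $\Tr\mathbb{E}_{\mu}[\Phi'(\bm{f})\,\mathsf{L}\bm{f}] \leq -\tfrac{1}{C}H_\Phi(\bm{f})$, i.e.\ $H_\Phi(\bm{f}) \leq -C\,\Tr\mathbb{E}_{\mu}[\Phi'(\bm{f})\,\mathsf{L}\bm{f}]$, which is \eqref{eq:exp1} (again interpreting $C = \infty$ appropriately; and when $H_\Phi(\bm{f}) = 0$ the inequality is vacuous by non-positivity of the right-hand side, Proposition \ref{prop:de}).

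\textit{Main obstacle.} The analytic content is light once Proposition \ref{prop:de} is in hand; the delicate point is the justification that the $\Phi$-Sobolev inequality \eqref{eq:exp1} propagates along the flow, i.e.\ that $\mathsf{P}_t \bm{f} \in \mathcal{D}(\mathsf{L})$ and retains $\mu$ as an invariant measure, so that \eqref{eq:exp1} applies to it for every $t \geq 0$. This requires knowing that $\mathsf{P}_t$ maps the Dirichlet domain into itself (a consequence of $\mathsf{L}\mathsf{P}_t = \mathsf{P}_t \mathsf{L}$, Eq.~\eqref{eq:partial_Pt}) and that invariance is preserved under the semigroup, both of which follow from the structural properties of Markov semigroups collected in Section \ref{sec:semigroup}. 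A secondary technical point is the interchange of $\tfrac{\partial}{\partial t}$ with $\Tr \mathbb{E}_\mu$ in evaluating $\psi'$, which is handled exactly as in the proof of Proposition \ref{prop:de} (Fubini plus Proposition \ref{prop:trace_Petz}), so no new argument is needed.
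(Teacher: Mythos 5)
Your proof is correct and takes essentially the same route as the paper: both directions rest on the de Bruijn identity (Proposition \ref{prop:de}), first applying the $\Phi$-Sobolev inequality \eqref{eq:exp1} to $\mathsf{P}_t\bm{f}$ along the flow and integrating the resulting differential inequality, and conversely differentiating \eqref{eq:exp2} at $t=0$ to recover \eqref{eq:exp1}. The only difference is that you spell out the Gr\"onwall step and the invariance and domain bookkeeping for $\mathsf{P}_t\bm{f}$, which the paper's proof treats implicitly.
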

\begin{proof}
	The theorem is a consequence of the de Bruijn's property, Proposition \ref{prop:de}.
	More precisely, Eq.~\eqref{eq:deBru1} and the inequality \eqref{eq:exp1} imply
	\begin{align*}
	\frac{\partial}{\partial t} H_\Phi \left( \mathsf{P}_t \bm{f} \right) 
	&= \Tr \mathbb{E}_{\mu} \big[ \Phi'\left(\mathsf{P}_t \bm{f}\right)  \mathsf{L}\mathsf{P}_t \bm{f}  \big].
	\end{align*}
	Recall that the function $\mathsf{P}_t\bm{f}$ is invariant under the measure $\mu$ and therefore satisfies Eq.~\eqref{eq:exp1}.
	Hence, the above inequality can be rewritten as
	\begin{align*}
		\frac{\partial}{\partial t} H_\Phi \left( \mathsf{P}_t \bm{f} \right) 
		&= \Tr \mathbb{E}_{\mu} \big[ \Phi'\left(\mathsf{P}_t \bm{f}\right)  \mathsf{L}\mathsf{P}_t \bm{f}  \big]\\
		&\leq -\frac1C H_\Phi(\mathsf{P}_t \bm{f}),
	\end{align*}
	from which we obtain 
	\[
	H_\Phi( \mathsf{P}_t\bm{f}) \leq \mathrm{e}^{-t/C} H_\Phi( \mathsf{P}_0\bm{f}) =\mathrm{e}^{-t/C} H_\Phi(\bm{f}).
	\]
	Conversely, differentiating inequality \eqref{eq:exp2} at $t=0$ gives the desired inequality \eqref{eq:exp1}.
\end{proof}

From Theorem \ref{theo:decay}, we immediately establish the equivalence between the spectral gap inequality and the exponential decay of variance functions.

\begin{coro}[Exponential Decay of Variance and Spectral Gap Inequalities] \label{coro:spectral}
	A Markov Triple $(\Omega,\mathbf{\Gamma},\mu)$ satisfies the spectral gap inequality with constant $C$ if and only if, for matrix-valued functions $\bm{f}:\Omega\rightarrow \mathbb{M}_d^\text{sa}$ in the Dirichlet domain $\mathcal{D}(\mathsf{L})$, one has
	\[
	\textnormal{Var}\big( \mathsf{P}_t \bm{f} \big) \leq \mathrm{e}^{-2t/C} \cdot \textnormal{Var}\left( \bm{f} \right).
	\]
\end{coro}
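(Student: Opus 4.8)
The plan is to derive Corollary \ref{coro:spectral} as a special case of Theorem \ref{theo:decay} by choosing $\Phi(u) = u^2$. First I would observe that for this choice, the matrix $\Phi$-entropy functional reduces to the variance: by Definition \ref{defn:entropy},
\begin{align*}
H_\Phi(\bm{f}) = \Tr\mathbb{E}_{\mu}\big[ \bm{f}^2 \big] - \Tr\big[ (\mathbb{E}_{\mu}[\bm{f}])^2 \big] = \Tr\mathbb{E}_{\mu}\big[ (\bm{f} - \mathbb{E}_{\mu}[\bm{f}])^2 \big] = \textnormal{Var}(\bm{f}),
\end{align*}
where the middle step expands the square and uses linearity of $\mathbb{E}_{\mu}$ and the cyclicity of the trace.

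Next I would identify the right-hand side of the $\Phi$-Sobolev inequality \eqref{eq:exp1} with the Dirichlet form. Since $\Phi'(u) = 2u$, we have $\Phi'(\bm{f}) = 2\bm{f}$, so
\begin{align*}
-C\,\Tr\mathbb{E}_{\mu}\big[ \Phi'(\bm{f})\,\mathsf{L}\bm{f} \big] = -2C\,\Tr\mathbb{E}_{\mu}\big[ \bm{f}\,\mathsf{L}\bm{f} \big].
\end{align*}
Using the integration-by-parts formula \eqref{eq:by_part} with $\bm{g} = \bm{f}$ together with the cyclicity of the trace, $-\tfrac12\Tr\mathbb{E}_{\mu}[\bm{f}\mathsf{L}(\bm{f}) + \mathsf{L}(\bm{f})\bm{f}] = -\Tr\mathbb{E}_{\mu}[\bm{f}\mathsf{L}(\bm{f})] = \Tr[\bm{\mathcal{E}}(\bm{f})] = \mathcal{E}(\bm{f})$, hence $-2C\,\Tr\mathbb{E}_{\mu}[\bm{f}\mathsf{L}\bm{f}] = 2C\,\mathcal{E}(\bm{f})$. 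Therefore the $\Phi$-Sobolev inequality \eqref{eq:exp1} with constant $C$ for $\Phi(u)=u^2$ reads exactly $\textnormal{Var}(\bm{f}) \leq 2C\,\mathcal{E}(\bm{f})$, i.e., it is the spectral gap inequality of Definition \ref{defn:spectral} with constant $2C$. Meanwhile the exponential decay \eqref{eq:exp2} becomes $\textnormal{Var}(\mathsf{P}_t\bm{f}) \leq \mathrm{e}^{-t/C}\,\textnormal{Var}(\bm{f})$. Renaming the spectral gap constant $C_{\mathrm{sg}} = 2C$, the decay rate $\mathrm{e}^{-t/C} = \mathrm{e}^{-2t/C_{\mathrm{sg}}}$, which is precisely the claimed statement.

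To invoke Theorem \ref{theo:decay} I must check its hypotheses hold for $\Phi(u) = u^2$: it is convex (indeed operator convex), it is continuously differentiable so Proposition \ref{prop:de} applies, and the subadditivity regime $\Phi(u) = u^p$ with $1 \le p \le 2$ covers $p = 2$. One should also note that the Jensen-type inequalities \eqref{eq:Pt_convex}--\eqref{eq:Pt_convex2} and Proposition \ref{prop:L} are valid here since $u^2$ is operator convex with Fr\'echet derivative $\mathsf{D}\Phi[\bm{A}](\bm{E}) = \bm{A}\bm{E} + \bm{E}\bm{A}$, as already used in the paper. I do not anticipate a serious obstacle; the only point requiring a little care is the bookkeeping of the factor $2$ relating the $\Phi$-Sobolev constant to the spectral gap constant, and making sure the integration-by-parts identity is applied under the standing assumption that $\mu$ is symmetric (or, in any case, that \eqref{eq:by_part} holds in the form used to define $\mathcal{E}(\bm{f})$). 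With these identifications in place, the corollary follows immediately from the theorem, so the write-up is essentially a matter of substituting $\Phi(u) = u^2$ and simplifying.
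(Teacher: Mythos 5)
Your proof is correct and follows essentially the same route as the paper: substitute $\Phi(u)=u^2$ into Theorem \ref{theo:decay}, identify $H_\Phi$ with the variance, rewrite $-\Tr\mathbb{E}_{\mu}\big[\Phi'(\bm{f})\,\mathsf{L}\bm{f}\big]$ as $2\mathcal{E}(\bm{f})$, and track the factor of $2$ relating the $\Phi$-Sobolev constant to the spectral gap constant. The only cosmetic difference is your caveat about symmetry: for the diagonal term the identity $-\Tr\mathbb{E}_{\mu}\big[\bm{f}\,\mathsf{L}\bm{f}\big]=\mathcal{E}(\bm{f})$ needs only the invariance of $\mu$ (Eq.~\eqref{eq:L0}) together with cyclicity of the trace, not the symmetry of $\mu$.
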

\begin{proof}
	Recall that $H_{u\mapsto u^2} (\bm{f}) \equiv \Var(\bm{f})$.
	Hence, by taking $\Phi(u)=u^2$ in Theorem \ref{theo:decay}, the corollary follows since the right-hand side of Eq.~\eqref{eq:exp1} can be rephrased as:
	\[
	\Tr \mathbb{E}_{\mu} \big[ \Phi'\left( \bm{f}\right)  \mathsf{L} \bm{f}  \big]
	= 2 \Tr \mathbb{E}_{\mu} \big[ \bm{f} \cdot  \mathsf{L} \bm{f}  \big]
	= - 2 \Tr\mathbb{E}_{\mu} \big[ \mathbf{\Gamma}\left(\bm{f}\right) \big]
	= -2 \mathcal{E}(\bm{f}).
	\]	
\end{proof}

Similarly, by taking $\Phi(u) = u\log u$, we have the equivalence between the modified log-Sobolev inequalities and  the exponential decay in entropy functionals.

\begin{coro}[Exponential Decay of Entropy and Modified Log-Sobolev Inequalities] \label{coro:Ent}
	A Markov Triple $(\Omega,\mathbf{\Gamma},\mu)$ satisfies the modified log-Sobolev inequality with constant $C$ if and only if, for matrix-valued functions $\bm{f}:\Omega\rightarrow \mathbb{M}_d^\text{sa}$ in the Dirichlet domain $\mathsf{D}(\mathsf{L})$, one has
	\[
	\textnormal{Ent}\big( \mathsf{P}_t \bm{f} \big) \leq \mathrm{e}^{-t/C} \cdot \textnormal{Ent}\left( \bm{f} \right).
	\]
\end{coro}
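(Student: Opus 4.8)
The plan is to derive Corollary \ref{coro:Ent} as the specialization of Theorem \ref{theo:decay} to the particular convex function $\Phi(u) = u\log u$. First I would recall that, by Definition \ref{defn:entropy}, the matrix $\Phi$-entropy functional coincides with the entropy functional, $H_\Phi(\bm{f}) = \Ent(\bm{f})$, precisely when $\Phi(u) = u\log u$. Consequently the left-hand sides of the two equivalent statements \eqref{eq:exp1} and \eqref{eq:exp2} in Theorem \ref{theo:decay} become $\Ent(\bm{f})$ and $\Ent(\mathsf{P}_t\bm{f})$, respectively, matching the claimed exponential decay.

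Second, I would identify the right-hand side of \eqref{eq:exp1} with the defining expression of the modified logarithmic Sobolev inequality in Definition \ref{defn:log}. For $\Phi(u) = u\log u$ one has $\Phi'(u) = 1 + \log u$, so in the standard-matrix-function calculus $\Phi'(\bm{f}) = \bm{I} + \log\bm{f}$. Substituting this into \eqref{eq:exp1} gives
\[
-C\,\Tr\mathbb{E}_{\mu}\big[\Phi'(\bm{f})\,\mathsf{L}\bm{f}\big] = -C\,\Tr\mathbb{E}_{\mu}\big[(\bm{I} + \log\bm{f})\,\mathsf{L}\bm{f}\big],
\]
which is verbatim the right-hand side appearing in the MLSI. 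Hence \eqref{eq:exp1} with $\Phi(u)=u\log u$ \emph{is} the modified logarithmic Sobolev inequality, with the same constant $C$.

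Third, I would check that $\Phi(u) = u\log u$ genuinely lies within the scope of Theorem \ref{theo:decay}: it is operator convex on $[0,\infty)$ (with the convention $0\log 0 = 0$), so Proposition \ref{prop:L} applies to it, and the induced matrix function is differentiable with $\Phi'(\bm{f}) = \bm{I} + \log\bm{f}$; thus all hypotheses feeding the de Bruijn identity (Proposition \ref{prop:de}) and therefore Theorem \ref{theo:decay} are met for admissible (positive definite, integrable) $\bm{f}$ in $\mathcal{D}(\mathsf{L})$ with $\mu$ as invariant measure. Invoking Theorem \ref{theo:decay} then immediately yields the stated equivalence: the Markov triple satisfies the MLSI with constant $C$ if and only if $\Ent(\mathsf{P}_t\bm{f}) \leq \mathrm{e}^{-t/C}\,\Ent(\bm{f})$ for all $t\geq 0$.

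The only point requiring a little care — and essentially the sole obstacle — is the bookkeeping in the previous paragraph: confirming operator convexity and Fréchet/standard-function differentiability of $u\log u$, and ensuring $\log\bm{f}$ and $\Phi(\bm{f})$ are well-defined and integrable so that the trace manipulations in Proposition \ref{prop:de} are valid. Once this verification is recorded, the corollary follows with no further computation.
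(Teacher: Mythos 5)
Your proposal is correct and follows exactly the route the paper takes: Corollary \ref{coro:Ent} is stated as the specialization of Theorem \ref{theo:decay} to $\Phi(u)=u\log u$, with $H_\Phi=\Ent$ and $\Phi'(\bm{f})=\bm{I}+\log\bm{f}$ turning the right-hand side of \eqref{eq:exp1} verbatim into the MLSI of Definition \ref{defn:log}. Your extra checks (operator convexity of $u\log u$ and integrability so that Propositions \ref{prop:L} and \ref{prop:de} apply) are exactly the bookkeeping the paper leaves implicit, so nothing further is needed.
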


\section{Time Evolutions of a Quantum Ensemble} \label{sec:exam}
In this section, we discuss the applications of the Markov semigroups in quantum information theory and study a special case of the Markov semigroups---quantum unital channel.
From the analysis in Section \ref{sec:main}, we demonstrate the exponential decays of the matrix $\Phi$-entropy functionals and give a tight bound to the monotonicity of the Holevo quantity:
$\chi(\{\mu(x),\mathsf{T}_t (\bm{\rho}(x))\}_x) \leq \mathrm{e}^{-t/C}\chi(\{\mu(x),\bm{\rho}(x)\}_x)$, 
when $\mathsf{T}_t$ is a unital quantum dynamical semigroup \cite{Lin76}.

To connect the whole machinery to quantum information theory, it is convenient to introduce some basic notation. 
First of all, we will restrict to the following set of matrix-valued functions in this section: $\bm{f}(x) = \bm{\rho}_x$, where $\bm{\rho}_x$  is a density operator (i.e.~positive semi-definite matrix with unit trace) for all $x\in\Omega$. In other words, the function $\bm{f}$ is a classical-quantum encoder that maps the classical message $x$ to a quantum state $\bm{\rho}_x$. 
Therefore, a quantum ensemble is constructed by the classical-quantum encoder and the given measure $\mu$: $\mathcal{S}\triangleq \{ \mu(x), \bm{\rho}_x\}_{x\in\Omega}$.
The Holevo quantity of a quantum ensemble $\mathcal{S}$ is defined as
\begin{align*}
\chi(\mathcal{S}) 
&\triangleq -\Tr\big[ \overline{\bm{\rho}} \log \overline{\bm{\rho}} \, \big] 
+ \int_{x\in\Omega}  \Tr \big[ \bm{\rho}_x \log \bm{\rho}_x \big] \, \mu(\mathrm{d}x),
\end{align*}
where $\overline{\bm{\rho}} \triangleq \int \bm{\rho}_x \, \mu(\mathrm{d}x)$ denotes the average state. It is not hard to verify that the Holevo quantity is a special case of the matrix $\Phi$-entropy functionals \cite{CH1}: $
\chi(\mathcal{S})= \Ent(\bm{f}).$

If we impose the trace-preserving condition on the CP kernel $\int_{y\in\Omega} \mathsf{T}_t (x, \mathrm{d} y)$, it becomes a quantum unital channel. Consequently, the Markov semigroup $\{\mathsf{P}_t\}_{t\geq 0}$ acts on the matrix-valued function $\bm{f}$ can be interpreted as a time evolution of the quantum ensemble $\mathcal{S}$, and the results of the matrix-valued functions in Section \ref{sec:main} also work for quantum ensembles.

For each $t\in\mathbb{R}_+$, let the CP kernel be 
\begin{align*}
\mathsf{T}_t(x,y) = \begin{cases}
\mathsf{T}_t \; (\text{a quantum unital channel}), \; &\text{if } x = y \\
\mathsf{0} \; (\text{a zero map}), \; &\text{if } x\neq y.
\end{cases}
\end{align*}
The set of unital maps $\mathsf{T}_t: \mathbb{M}_d^\text{sa} \to \mathbb{M}_d^\text{sa}$ forms a \emph{quantum dynamical semigroup}, which satisfies the semigroup conditions:
\begin{itemize}
	\item[(a)] $\mathsf{T}_0 (\bm{X})= \bm{X}$ for all $\bm{X}\in\mathbb{M}_d^\text{sa}$.
	\item[(b)] The map $t\to \mathsf{T}_t (\bm{X})$ is a continuous map from $\mathbb{R}_+$ to $\mathbb{M}_d^\text{sa}$.
	\item[(c)] The semigroup properties: $\mathsf{T}_t \circ \mathsf{T}_s = \mathsf{T}_{s+t}$, for any $s,t\in\mathbb{R}_+$.
	\item[(d)] $\mathsf{T}_t (\bm{I}) = \bm{I}$ for any $t\in\mathbb{R}_+$, where $\bm{I}$ is the identity matrix in $\mathbb{M}_d^\text{sa}$ (\emph{mass conservation}).
	\item[(e)] $\mathsf{T}_t$ is a positive map for any $t\in\mathbb{R}_+$.
\end{itemize}
We note that the quantum dynamical semigroup has been studied in the contexts of quantum Markov processes (see Section \ref{ssec:related}).
It is shown \cite{Lin76, GKS76, AZ15} that any unital quantum dynamical semigroup is generated by a Liouvillian $\mathcal{L}: \mathbb{M}_d^\text{sa} \to \mathbb{M}_d^\text{sa}$ of the form
\begin{align*}
\mathcal{L}: \bm{X} \mapsto \mathsf{\Psi}(\bm{X}) - \kappa\bm{X} - \bm{X}\kappa^\dagger,
\end{align*}
where $\kappa \in \mathbb{C}^{d\times d}$ and $\mathsf{\Psi}$ is a CP map such that $\mathsf{\Psi}(\bm{I}) = \kappa + \kappa^\dagger$.
Therefore, each unital map can be expressed as $\mathsf{T}_t = \mathrm{e}^{t\mathcal{L}}$ for all $t\in\mathbb{R}_+$.
The Markov semigroup acting on the matrix-valued function $\bm{f}$ is hence defined by $\mathsf{P}_t \bm{f} (x) = \mathsf{T}_t ( \bm{f}(x))$ for all $x\in\Omega$.
The invariant measure exists if 
\begin{align*}
\int \bm{f}(x) \, \mu(\mathrm{d} x) &= 
\int \mathsf{P}_t \bm{f}(x) \, \mu(\mathrm{d}x) \\
&= \int \mathsf{T}_t ( \bm{f}(x)) \, \mu(\mathrm{d}x) \\
&= \mathsf{T}_t \left( \int \bm{f}(x) \, \mu(\mathrm{d}x) \right), \quad  \forall t\in\mathbb{R}_+.
\end{align*}
In other words, the expectation $\mathbb{E}_\mu[\bm{f}] = \int \bm{f}(x) \, \mu(\mathrm{d} x)$ is the fixed point of the unital semigroup $\{\mathsf{T}_t\}_{t\geq0}$.
From our main result---Theorem \ref{theo:decay}, we can establish the exponential decays of the matrix $\Phi$-entropy functionals through the quantum dynamical semigroup $\{\mathsf{T}_t\}_{t\geq 0}$:
\begin{align*}
H_\Phi(\bm{f}) \leq -C \Tr \mathbb{E}_\mu \big[ \Phi'\left( \bm{f}\right)  \mathsf{L} \bm{f}  \big]
\quad\text{if and only if} \quad
H_\Phi\left( \mathsf{T}_t(\bm{f})\right) \leq \mathrm{e}^{-t/C} H_\Phi(\bm{f}), \quad \forall t\geq 0,
\end{align*}
where the infinitesimal generator is given by $\mathsf{L}\bm{f}(x) = \mathcal{L}(\bm{f}(x))$, for all $x \in \Omega$.

In the following, we consider the cases of depolarizing and phase-damping channels, and demonstrate the exponential decay phenomenon when all the density operators converge to the same equilibrium.
However, as it will be shown in the case of the phase-damping channel, the $\Phi$-Sobolev constant is infinite when the density operators converge to different states.

\subsection{Depolarizing Channel}
	Denote by $\bm{\pi} \triangleq \bm{I}/d$ the maximally mixed state on the Hilbert space $\mathbb{C}^d$, and let $r>0$ be a constant.
	The quantum dynamical semigroup defined by the depolarizing channel is:
	\begin{align} \label{eq:pd}
	\mathsf{T}_t : \bm{f}(x) \mapsto \mathrm{e}^{-rt} \bm{f}(x) + \left( 1- \mathrm{e}^{-rt} \right) \Tr[\bm{f}(x)] \cdot \bm{\pi}, \quad \forall x\in\Omega,\, t\in\mathbb{R}_+.
	\end{align}
	It is not hard to verify that $\{\mathsf{T}_t\}_{t\geq 0}$ forms a Markov semigroup with a unique fixed point (also called the stationary state) $\Tr[\bm{f}]\bm{\pi}$.
	We assume $\mu$ is the invariant measure of $\bm{f}$: $\mathsf{T}_t\left(\mathbb{E}_\mu[\bm{f}]\right) = \mathbb{E}_\mu[\bm{f}] = \Tr[\bm{f}] \bm{\pi}$.
	The infinitesimal generator and the Dirichlet form can be calculated as
	\begin{align*}
	\mathsf{L} \bm{f} = \lim_{t\to 0} \frac1t \left( \mathsf{T}_t (\bm{f}) - \bm{f} \right)
	= r\left( \Tr[\bm{f}]\bm{\pi} - \bm{f} \right);
	\end{align*}
	\begin{align*}
	\bm{\mathcal{E}}(\bm{f}) = \frac12 \,\mathbb{E}_\mu \left[ \mathsf{L}\bm{f}^2 - \bm{f}\cdot\mathsf{L}\bm{f} - \mathsf{L}\bm{f}\cdot \bm{f} \right] 
	= \frac{r}2\left( \mathbb{E}_\mu[\bm{f}^2] + \Tr\mathbb{E}_\mu[\bm{f}^2]\bm{\pi} - 2\left(\Tr[\bm{f}]\bm{\pi} \right)^2 \right).
	\end{align*}
	Now let the matrix-valued function correspond to a set of density operators---$\bm{f}(x):=\bm{\rho}_x$, $\forall x\in\Omega$, with the average state $\overline{\bm{\rho}} = \bm{\pi}$.
	The constant $C_2$ in the spectral gap inequality is 
	\begin{align} \label{eq:Var_const}
	C_2 = \sup_{\bm{f}:\,\mathbb{E}_\mu[\bm{f}] = \bm{\pi}} \; \frac{\Var(\bm{f})}{\mathcal{E}(\bm{f})}
	= \sup_{\bm{\rho}_X:\,\overline{\bm{\rho}} = \bm{\pi}} \;
	\frac{2}{r} \cdot \frac{\Tr\mathbb{E}_\mu[\bm{\rho}_X^2] - \frac1d}{2\Tr\mathbb{E}_\mu[\bm{\rho}_X^2]  - \frac2d} = \frac1r,
	\end{align}
	where we denote by $X$ the random variable such that $\Pr(X=x)= \mu(x)$ for all $x\in\Omega$.
	Hence, the spectral gap constant is $C_2=\frac1r$, and we have the exponential decays of the variance from Corollary \ref{coro:spectral}:
	\begin{align} \label{eq:pd_var}
	\Var(\mathsf{T}_t(\bm{\rho}_X)) \leq \mathrm{e}^{-2rt} \cdot \Var(\bm{\rho}_X).
	\end{align}
	
	Similarly, the modified log-Sobolev constant $C_\chi$ can be calculated by
	\begin{align} \label{eq:Ent_const}
	\begin{split}
	C_\chi &= \sup_{\bm{f}:\,\mathbb{E}_\mu[\bm{f}] = \bm{\pi}} \; \frac{\Ent(\bm{f})}{-\Tr\mathbb{E}_\mu\left[ (\bm{I}+\log\bm{f})\mathsf{L}\bm{f}\right]} \\
	&= \sup_{\bm{\rho}_X:\,\overline{\bm{\rho}} = \bm{\pi}} \;
	\frac1{r}\cdot \frac{\Tr\mathbb{E}_\mu[\bm{\rho}_X\log \bm{\rho}_X] + \log d}{\Tr\mathbb{E}_\mu[\bm{\rho}_X\log \bm{\rho}_X] - \frac{\Tr\mathbb{E}_\mu[\log \bm{\rho}_X]}{d}}.
	\end{split}
	\end{align}
	In the following proposition , we show that $C_\chi = \frac{1}{2r}$ when $d=2$. Therefore, we are able to establish the exponential decay of the Holevo quantity.
	\begin{prop} \label{prop:depolarizing}
		Consider a Hilbert space $\mathbb{C}^2$. Denote the quantum dynamical semigroup of  the depolarizing channel by
		\begin{align*}
		\mathsf{T}_t : \bm{\rho} \mapsto \mathrm{e}^{-rt} \bm{\rho} + \left( 1- \mathrm{e}^{-rt} \right) \cdot \bm{\pi}, \quad t\in\mathbb{R}_+.
		\end{align*}
		For any quantum ensemble on $\mathbb{C}^2$ with the average state being $\bm{\pi}$, the modified log-Sobolev constant is $C_\chi = \frac{1}{2r}$.
		Moreover, we have
		\begin{align} \label{eq:pd_Ent}
		\chi\left(\{\mu(x), \mathsf{T}_t\left( \bm{\rho}_x\right)\}_{x\in\Omega} \right) 
		\leq
		\mathrm{e}^{-2rt} \cdot \chi(\{\mu(x), \bm{\rho}_x\}_{x\in\Omega}).
		\end{align}
	\end{prop}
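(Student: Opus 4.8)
The plan is to evaluate the supremum in Eq.~\eqref{eq:Ent_const} explicitly for $d=2$ and then invoke Corollary~\ref{coro:Ent}. Since every $\bm{\rho}_x$ acts on $\mathbb{C}^2$, I parametrize its eigenvalues as $\lambda_x$ and $1-\lambda_x$ with $\lambda_x\in[0,1]$, and set $s_x\triangleq 2\lambda_x-1\in[-1,1]$. Writing $N(\mu)\triangleq\Tr\mathbb{E}_\mu[\bm{\rho}_X\log\bm{\rho}_X]+\log 2$ and $D(\mu)\triangleq\Tr\mathbb{E}_\mu[\bm{\rho}_X\log\bm{\rho}_X]-\tfrac12\Tr\mathbb{E}_\mu[\log\bm{\rho}_X]$, so that $C_\chi=\tfrac1r\sup_{\overline{\bm{\rho}}=\bm{\pi}}N(\mu)/D(\mu)$, a direct computation from $\Tr[\bm{\rho}_x\log\bm{\rho}_x]=\lambda_x\log\lambda_x+(1-\lambda_x)\log(1-\lambda_x)$ and $\Tr[\log\bm{\rho}_x]=\log\big(\lambda_x(1-\lambda_x)\big)$ yields the pointwise identities
\[
\Tr[\bm{\rho}_x\log\bm{\rho}_x]-\tfrac12\Tr[\log\bm{\rho}_x]=\tfrac{s_x}{2}\log\tfrac{1+s_x}{1-s_x},\qquad 2\big(\Tr[\bm{\rho}_x\log\bm{\rho}_x]+\log 2\big)-\tfrac{s_x}{2}\log\tfrac{1+s_x}{1-s_x}=g(s_x),
\]
where $g(s)\triangleq\log(1-s^2)+\tfrac{s}{2}\log\tfrac{1+s}{1-s}$. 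Averaging the second identity gives $2N(\mu)-D(\mu)=\mathbb{E}_\mu[g(s_X)]$, and averaging the first gives $D(\mu)=\mathbb{E}_\mu\big[\tfrac{s_X}{2}\log\tfrac{1+s_X}{1-s_X}\big]\ge 0$, with $D(\mu)=0$ only for the trivial ensemble $\bm{\rho}_x\equiv\bm{\pi}$ (for which $N(\mu)=\chi(\mathcal{S})=0$, so the claim is vacuous).

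The technical heart is the scalar inequality $g(s)\le 0$ on $(-1,1)$, with equality only at $s=0$. I would prove it by comparing Maclaurin series: $-\log(1-s^2)=\sum_{k\ge1}s^{2k}/k$, whereas $\tfrac{s}{2}\log\tfrac{1+s}{1-s}=\sum_{k\ge1}s^{2k}/(2k-1)$, and since $1/(2k-1)\le 1/k$ for every integer $k\ge1$ (with equality only at $k=1$) one obtains $g(s)=\sum_{k\ge2}\big(\tfrac1{2k-1}-\tfrac1k\big)s^{2k}\le 0$, strict unless $s=0$. Consequently $2N(\mu)-D(\mu)\le 0$, hence $N(\mu)/D(\mu)\le\tfrac12$ for every non-trivial ensemble with $\overline{\bm{\rho}}=\bm{\pi}$, whence $C_\chi\le\tfrac1{2r}$.

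For the matching lower bound I would test a nearly-trivial ensemble: $\Omega=\{0,1\}$ with uniform $\mu$, $\bm{\rho}_0=\mathrm{diag}\big(\tfrac{1+s}{2},\tfrac{1-s}{2}\big)$ and $\bm{\rho}_1=\mathrm{diag}\big(\tfrac{1-s}{2},\tfrac{1+s}{2}\big)$ for small $s>0$, so that $\overline{\bm{\rho}}=\bm{\pi}$ while both states share the eigenvalue pair $\{\tfrac{1\pm s}{2}\}$. Since $g$ is even, $2N(\mu)-D(\mu)=g(s)$, hence $N(\mu)/D(\mu)=\tfrac12+g(s)/(2D(\mu))$ with $D(\mu)=\tfrac{s}{2}\log\tfrac{1+s}{1-s}$; the expansions $g(s)=-\tfrac{s^4}{6}+O(s^6)$ and $D(\mu)=s^2+O(s^4)$ give $N(\mu)/D(\mu)\to\tfrac12$ as $s\to 0^+$, so $C_\chi=\tfrac1{2r}$. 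Finally, the depolarizing semigroup $\{\mathsf{T}_t\}_{t\ge0}$ admits $\mu$ as invariant measure precisely when $\mathbb{E}_\mu[\bm{f}]=\bm{\pi}$, so Corollary~\ref{coro:Ent} turns the modified log-Sobolev inequality with constant $C_\chi$ into $\Ent(\mathsf{P}_t\bm{f})\le\mathrm{e}^{-t/C_\chi}\Ent(\bm{f})$; using $\Ent(\bm{f})=\chi(\mathcal{S})$, $\mathsf{P}_t\bm{f}(x)=\mathsf{T}_t(\bm{\rho}_x)$ and $1/C_\chi=2r$, this is precisely Eq.~\eqref{eq:pd_Ent}. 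The only genuinely non-formal steps are the two elementary power-series estimates above — the inequality $g\le 0$ and the $O(s^4)$-versus-$\Theta(s^2)$ asymptotics that make the supremum tight.
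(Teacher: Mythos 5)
Your proposal is correct, and its overall strategy coincides with the paper's: bound the ratio in Eq.~\eqref{eq:Ent_const} by $\tfrac12$ via a pointwise scalar inequality, exhibit a two-state ensemble approaching $\bm{\pi}$ to show the bound is tight, and then invoke Corollary~\ref{coro:Ent}. The differences are in execution rather than in the idea. Your key inequality $g(s)\le 0$, with $g(s)=\log(1-s^2)+\tfrac{s}{2}\log\tfrac{1+s}{1-s}$, is exactly the paper's bound $\Tr\bigl[(\tfrac{\bm{I}}{2}+\bm{\rho})\log\bm{\rho}\bigr]\le -2\log 2$ rewritten in the variable $s=2p-1$; the paper proves it by checking concavity of $p\mapsto(\tfrac12+p)\log p+(\tfrac32-p)\log(1-p)$ and maximizing at $p=\tfrac12$ (packaged as a proof by contradiction), whereas your Maclaurin-series comparison $\sum_{k\ge2}\bigl(\tfrac1{2k-1}-\tfrac1k\bigr)s^{2k}\le0$ is a self-contained and arguably more transparent derivation that also identifies the equality case $s=0$ directly. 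For tightness, the paper takes a general two-point ensemble $\bm{\rho}_1=\bm{\pi}+\tfrac{\epsilon}{p_1}\bm{\sigma}_Z$, $\bm{\rho}_2=\bm{\pi}-\tfrac{\epsilon}{p_2}\bm{\sigma}_Z$, asserts $C''(\epsilon)<0$ and $C'(0)=0$, and evaluates the limit by L'H\^{o}pital; your symmetric test ensemble with the expansions $g(s)=-\tfrac{s^4}{6}+O(s^6)$ and $D(\mu)=s^2+O(s^4)$ reaches the same limit $\tfrac12$ more economically, and it rightly dispenses with the concavity-in-$\epsilon$ claim, which is unnecessary once the global upper bound has been established (one only needs a sequence attaining the supremum). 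Your final step (invariance of $\mu$ exactly when $\mathbb{E}_\mu[\bm{f}]=\bm{\pi}$, then $\Ent=\chi$ and $1/C_\chi=2r$ in Corollary~\ref{coro:Ent}) matches the paper. One cosmetic caveat: at pure states ($s=\pm1$) the denominator terms involve $\Tr[\log\bm{\rho}_x]=-\infty$, so strictly one should treat those points as a limit or note that they only increase $D(\mu)$; this is a boundary issue the paper glosses over as well and does not affect the argument.
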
 
	The proof can be found in Appendix \ref{proof:Prop15}.
\medskip
	We simulate the depolarizing qubit channel with the initial states $\bm{\rho}_1 = |0\rangle \langle 0|$, $\bm{\rho}_2 = |1\rangle \langle 1|$ and the uniform distribution in Figure \ref{fig:depolarizing}.
	The blue dashed curve and red solid curve show that the upper bounds for the exponential decays in Eqs.~\eqref{eq:pd_Ent} and \eqref{eq:pd_var} are quit tight.
	
	We remark that if every state $\bm{\rho}_x$ in the ensemble goes through different depolarizing channel with rate $r_x$, i.e.
	\begin{align*}
	\mathsf{T}_t(x,x) = \mathsf{T}_t^x : \bm{f}(x) \mapsto \mathrm{e}^{-r_x t} \bm{f}(x) + \left( 1- \mathrm{e}^{-r_x t} \right) \Tr[\bm{f}(x)] \cdot \bm{\pi},
	\end{align*}
	the Sobolev constant $C_2$ and $C_\chi$ will be dominated by the channel with the minimal rate $\inf_{x\in\Omega} r_x := r_\text{inf}$.
	Namely, the spectral gap constant in Eq.~\eqref{eq:Var_const} becomes
	\begin{align*}
	C_2 = \sup_{\bm{\rho}_X:\, \overline{\bm{\rho}} = \bm{\pi}} \;
	 \frac{\Tr\mathbb{E}_\mu[ \bm{\rho}_X^2] - \frac1d}{\Tr\mathbb{E}_\mu[r_X\bm{\rho}_X^2]  - \frac{\mathbb{E}_\mu[r_X]}d}
	\leq \frac{1}{r_\text{inf}},
	\end{align*}
	and the modified log-Sobolev constant in Eq.~\eqref{eq:Ent_const} is
	\begin{align*}
	C_\chi 
	\leq \frac{1}{r_\text{inf}} \cdot \frac{\Tr\mathbb{E}_\mu[\bm{\rho}_X\log \bm{\rho}_X] - \log d}{\Tr\mathbb{E}_\mu[\bm{\rho}_X\log \bm{\rho}_X] - \frac{\Tr\mathbb{E}_\mu[\log \bm{\rho}_X]}{d}}.
	\end{align*}

\subsection{Phase-Damping Channel}
	Fix $d=2$, and denote the Pauli matrix by
	\begin{align*}
	\bm{\sigma}_Z = \begin{pmatrix}
	1 & 0 \\ 0 & -1
	\end{pmatrix}.
	\end{align*}
The quantum dynamical semigroup defined by the phase-damping channel is
	\begin{align*}
	\mathsf{T}_t : \bm{f}(x) \mapsto \frac{\left(1+\mathrm{e}^{-rt}\right)}2 \bm{f}(x) + \frac{\left(1-\mathrm{e}^{-rt}\right)}2 \bm{\sigma}_Z\bm{f}(x)\bm{\sigma}_Z, \quad \forall x\in\Omega,\, t\in\mathbb{R}_+
	\end{align*}
	with the  generator:
	\begin{align*} 
	\mathsf{L}\bm{f} = \lim_{t\to 0} \frac{1-\mathrm{e}^{-rt}}{2t} \left( \bm{\sigma}_Z \bm{f} \bm{\sigma}_Z - \bm{f} \right)
	= \frac{r}2 \left( \bm{\sigma}_Z \bm{f} \bm{\sigma}_Z - \bm{f} \right).
	\end{align*}
It is well-known that any diagonal matrix (with respect to the computation basis) is a fixed point of the phase-damping channel $\mathsf{T}_t$.
	Now if we assume every matrix $\mathsf{P}_t\bm{f}(x)$ converges to different matrices, i.e.~$\mathsf{T}_t(\bm{f}(x)) \neq \mathsf{T}_t(\bm{f}(y))$ for all $x\neq y$ and $t\in\mathbb{R}_+$,
	then the matrix $\Phi$-entropy functional $H_\Phi(\mathsf{P}_t \bm{f})$ is non-zero for all $t\in\mathbb{R}_+$.
	However, the infinitesimal generator approaches zero as $t$ goes to infinity, i.e.~
	\begin{align*}
	\lim_{t\to \infty} \mathsf{L} \mathsf{P}_t \bm{f} 
	= \lim_{t\to \infty}\frac{r}2 \left( \bm{\sigma}_Z \left(\mathsf{P}_t \bm{f}\right) \bm{\sigma}_Z - \mathsf{P}_t \bm{f}  \right) = \bm{0}, 
	\end{align*}
	which means that the $\Phi$-Sobolev constant $C$ in Theorem \ref{theo:decay} is infinity.
	In other words, the matrix $\Phi$-entropy $H_\Phi(\mathsf{P}_t \bm{f})$ does not decay exponentially in this phase-damping channel.

\begin{remark}
The reason that makes these two examples quite different is the uniqueness of fixed point of the quantum dynamic semigroup $\mathsf{T}_t$. Since the depolarizing channel has a unique equilibrium state, all the matrices eventually converges. Hence, the Sobolev constants are finite, which leads to the exponential decay phenomenon. On the other hand, the phase-damping channel has multiple fixed points. This ensures the matrix $\Phi$-entropy functionals never vanish.
\end{remark}

\begin{figure}[ht]
	\includegraphics[width=0.8\columnwidth]{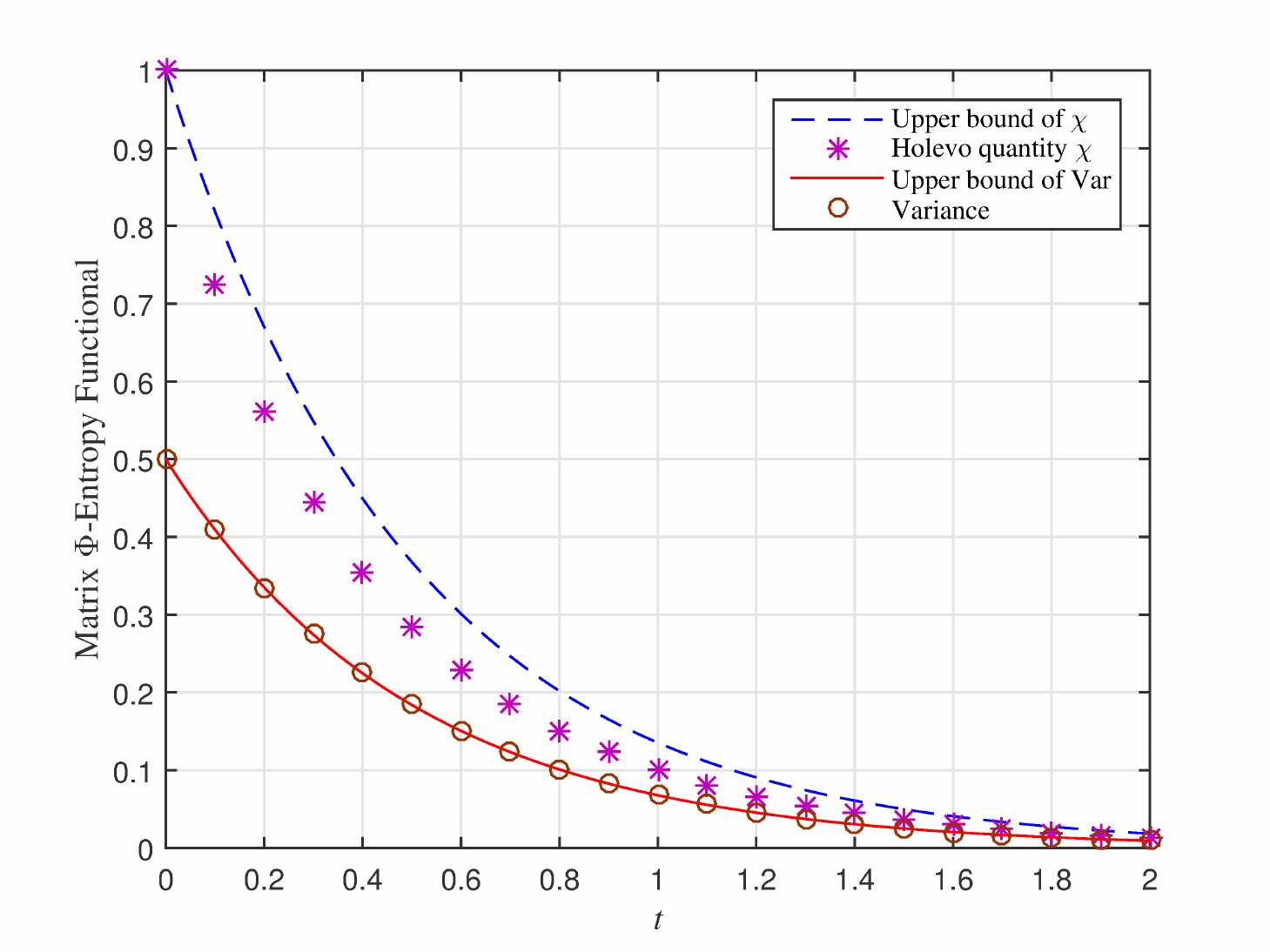}
	\caption{This figure illustrates the exponential decay phenomenon of the variance and the Holevo quantity through the depolarizing qubit channel $(d=2)$ (see Eq.~\eqref{eq:pd}) with rate $r=1$.
		Assume $\bm{\rho}_1 = |0\rangle\langle 0|$ and $\bm{\rho}_2 = |1\rangle\langle 1|$ with uniform distribution.
		The blue dashed curve and red solid curve represent the upper bounds of the Holevo quantity and the variance, respectively, i.e.~right-hand sides of Eqs.\eqref{eq:pd_Ent} and ~\eqref{eq:pd_var}.
		The actual variance and Holevo quantity through the time evolution of the depolarizing qubit channel
		are plotted by the `o' and `*' lines,
		, which demonstrates the tightness of the exponential upper bounds.
	}
	\label{fig:depolarizing}
\end{figure}

\section{The Statistical Mixture of the Markov Semigroup} \label{sec:exam2}
In this section, we study a statistical mixing of Markov semigroups.
The interested matrix-valued functions are defined on the Boolean hypercube $\{0,1\}^n$, which arise in the context of Fourier analysis \cite{Wol08, FS08}.  Moreover,  a matrix hypercontractivity inequality has been established on this particular set of matrix-valued functions \cite{BRW08}.  

Our first example is the {Markovian jump process} with transition rates $p$ from state $0$ to $1$ and $(1-p)$ from $1$ to $0$. We will calculate its convergence rate using the matrix Efron-Stein inequality \cite{CH1}.
Second, we consider the statistical mixing of a quantum random graph where each vertex corresponds to a quantum state, and further bound its mixing time.

\subsection{Markovian Jump Process on Symmetric Boolean Hypercube} \label{ssec:Jump}

We consider a special case of the Markov semigroup induced by a classical Markov kernel:
\begin{align} \label{eq:c_Markov}
\mathsf{P}_t \bm{f}(x) \triangleq \int_{y\in\Omega} {p}_t(x,\mathrm{d}y) \bm{f}(y),
\end{align}
where $p_t(x,\mathrm{d} y)$ is a family of transition probabilities\footnote{For every $t\geq0$ and $x\in\Omega$, $p_t(x,\cdot)$ is a probability measure on $\Omega$, and $x\mapsto p_t(x,E)$ is measurable for every measurable set $E\in\Sigma$} on $\Omega$, and satisfies the following Chapman-Kolmogorov identity
\[
\int_{y\in\mathrm{\Omega}} p_s(x,\mathrm{d}y) \,  p_t(y,\mathrm{d}z) = p_{s+t} (x,\mathrm{d}z).
\]
In other words, the time evolution of the matrix-valued function $\bm{f}$ is under a statistical mixture according to Eq.~\eqref{eq:c_Markov}. 
Let the state space be a hypercube, i.e.~$\Omega\equiv \{0,1\}^n$ with the measure denoted by 
\[
\mu_{n,p}(x) = p^{\sum_{i=1}^n x_i} (1-p)^{\sum_{i=1}^n (1-x_i)},  \quad \forall x\in\{0,1\}^n.
\]
We introduce the operator $\mathsf{\Delta}_i$ that acts on any matrix-valued function $\bm{f}:\{0,1\}^n \to\mathbb{M}_d^\text{sa}$ as follows:
\begin{align*}
\mathsf{\Delta}_i \bm{f} &= \begin{cases} (1-p) \mathsf{\nabla}_i \bm{f}, &\text{if } x_i = 1\\
-p \mathsf{\nabla}_i \bm{f}, &\text{if } x_i = 0
\end{cases} \\
&= \bm{f} - \int \bm{f}\, \mathrm{d} \mu_{1,p}(x_i),
\end{align*}
where 
\[
\mathsf{\nabla}_i \bm{f} \triangleq \bm{f}(x_1,\ldots,x_{i-1},1,x_{i+1},\ldots,x_n) - \bm{f}(x_1,\ldots,x_{i-1},0,x_{i+1},\ldots,x_n).
\]
The semigroup $\{\mathsf{P}_t\}_{t\geq0}$ of the Markovian jump process  is given by the generator $\mathsf{L}$
with transition rates $p$ from state $0$ to $1$ and $(1-p)$ from $1$ to $0$:
\[
\mathsf{L} = -\sum_{i=1}^n \mathsf{\Delta}_i.
\]
Then, we are able to derive the rate of the exponential decay in variance functions.

\begin{theo}
	[Exponential Decay of Variances for Symmetric Bernoulli Random Variables] \label{theo:Var_Ber}
	Given a Markov Triple $(\{0,1\}^n, \mathbf{\Gamma}, \mu_{n,p})$ of a Markovian jump process, one has
	\[
	\textnormal{Var}\big( \mathsf{P}_t \bm{f} \big) \leq \mathrm{e}^{-2t} \cdot \textnormal{Var}\left( \bm{f} \right),
	\]
	for any matrix-valued function $\bm{f}:\{0,1\}^n\rightarrow \mathbb{M}_d^\text{sa}$.
\end{theo}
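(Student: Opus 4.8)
The plan is to deduce the statement from Corollary~\ref{coro:spectral}, which reduces it to proving that the Markov triple $(\{0,1\}^n,\mathbf{\Gamma},\mu_{n,p})$ satisfies the matrix spectral gap inequality with constant $C=1$, i.e.~$\textnormal{Var}(\bm{f})\leq\mathcal{E}(\bm{f})$ for every $\bm{f}:\{0,1\}^n\to\mathbb{M}_d^\text{sa}$; granting this, Corollary~\ref{coro:spectral} yields $\textnormal{Var}(\mathsf{P}_t\bm{f})\leq\mathrm{e}^{-2t}\,\textnormal{Var}(\bm{f})$. First I would record the preliminaries needed to invoke that corollary: writing $\mathsf{E}_i\bm{f}\triangleq\int\bm{f}\,\mathrm{d}\mu_{1,p}(x_i)=\bm{f}-\mathsf{\Delta}_i\bm{f}$ for the conditional expectation over the $i$-th coordinate, each $\mathsf{E}_i$ is a conditional expectation that is self-adjoint with respect to the product measure $\mu_{n,p}$, so the generator $\mathsf{L}=-\sum_{i=1}^{n}\mathsf{\Delta}_i=\sum_{i=1}^{n}(\mathsf{E}_i-\mathds{1})$ is symmetric and satisfies $\int\mathsf{L}\bm{f}\,\mathrm{d}\mu_{n,p}=\bm{0}$; hence $\mu_{n,p}$ is an invariant and symmetric measure for $\{\mathsf{P}_t\}_{t\geq0}$.

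The main work is an exact evaluation of the Dirichlet form. Put $\mathsf{L}_i\triangleq\mathsf{E}_i-\mathds{1}$, so that $\mathsf{L}=\sum_i\mathsf{L}_i$; by linearity of $\mathsf{L}$ the carr\'e du champ decomposes as $\mathbf{\Gamma}(\bm{f})=\sum_i\mathbf{\Gamma}_i(\bm{f})$, where $\mathbf{\Gamma}_i$ is the carr\'e du champ of $\mathsf{L}_i$. Using $\mathsf{L}_i(\bm{f}^2)=\mathsf{E}_i[\bm{f}^2]-\bm{f}^2$ I would compute
\[
\mathbf{\Gamma}_i(\bm{f})=\tfrac12\big(\mathsf{E}_i[\bm{f}^2]+\bm{f}^2-\bm{f}\,\mathsf{E}_i[\bm{f}]-\mathsf{E}_i[\bm{f}]\,\bm{f}\big),
\]
and then apply $\Tr\mathbb{E}_{\mu}[\,\cdot\,]$. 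Conditioning on the coordinates $X_{-i}$ and using that $\mathsf{E}_i[\bm{f}]$ is $X_{-i}$-measurable, the pull-out property of conditional expectation together with the cyclicity of the trace give $\Tr\mathbb{E}_{\mu}[\bm{f}\,\mathsf{E}_i[\bm{f}]]=\Tr\mathbb{E}_{\mu}[(\mathsf{E}_i[\bm{f}])^2]=\Tr\mathbb{E}_{\mu}[\mathsf{E}_i[\bm{f}]\,\bm{f}]$, along with $\Tr\mathbb{E}_{\mu}[\mathsf{E}_i[\bm{f}^2]]=\Tr\mathbb{E}_{\mu}[\bm{f}^2]$, so that
\[
\Tr\mathbb{E}_{\mu}\big[\mathbf{\Gamma}_i(\bm{f})\big]=\Tr\mathbb{E}_{\mu}\big[\bm{f}^2-(\mathsf{E}_i[\bm{f}])^2\big]=\mathbb{E}_{\mu}\big[H_{u\mapsto u^2}(\bm{f}\mid X_{-i})\big],
\]
the right-hand side being the expected conditional matrix variance (Definition~\ref{defn:entropy} with $\Phi(u)=u^2$ and $\mathcal{F}=\sigma(X_{-i})$). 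Summing over $i$ yields the identity $\mathcal{E}(\bm{f})=\sum_{i=1}^{n}\mathbb{E}_{\mu}[H_{u\mapsto u^2}(\bm{f}\mid X_{-i})]$.

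It then remains only to compare the right-hand side with $\textnormal{Var}(\bm{f})=H_{u\mapsto u^2}(\bm{f})$. Since the coordinates $X_1,\dots,X_n$ are independent under $\mu_{n,p}$ and $\Phi(u)=u^2$ lies in the admissible range of Theorem~\ref{theo:sub}, the subadditivity of the matrix $\Phi$-entropy functional --- equivalently the matrix Efron--Stein inequality of \cite{CH1} --- gives precisely
\[
\textnormal{Var}(\bm{f})=H_{u\mapsto u^2}(\bm{f})\;\leq\;\sum_{i=1}^{n}\mathbb{E}_{\mu}\big[H_{u\mapsto u^2}(\bm{f}\mid X_{-i})\big]\;=\;\mathcal{E}(\bm{f}).
\]
Thus the spectral gap inequality holds with $C=1$, and Corollary~\ref{coro:spectral} completes the proof.

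I expect the only genuinely delicate point to be the middle step: evaluating $\mathcal{E}(\bm{f})$ for this non-commutative semigroup and recognizing the result as a sum of expected conditional matrix variances. Unlike the scalar case, the cross terms $\bm{f}\,\mathsf{E}_i[\bm{f}]$ and $\mathsf{E}_i[\bm{f}]\,\bm{f}$ are genuinely different matrices, and collapsing them after applying $\Tr\mathbb{E}_{\mu}$ requires both cyclicity of the trace and the fact that $\mathsf{E}_i$ is a bona fide conditional expectation (pull-out and tower properties). The payoff is that the local Dirichlet form \emph{equals}, rather than merely dominates, $\mathbb{E}_{\mu}[H_{u\mapsto u^2}(\bm{f}\mid X_{-i})]$, so that the single-site Poincar\'e constant $1$ --- the spectral gap of the two-point chain with transition rates $p$ and $1-p$ --- is already built into the identity, and the entire tensorization is handled by Theorem~\ref{theo:sub}. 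Everything else is linearity (for the splitting $\mathbf{\Gamma}=\sum_i\mathbf{\Gamma}_i$) and the two already-established results cited above.
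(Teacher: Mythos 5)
Your proof is correct and follows essentially the same route as the paper: reduce via Corollary~\ref{coro:spectral} to the spectral gap inequality with constant $1$, and then obtain that inequality from the matrix Efron--Stein/subadditivity result of \cite{CH1} (Theorem~\ref{theo:sub} with $\Phi(u)=u^2$). The only difference is that you explicitly carry out the identification $\mathcal{E}(\bm{f})=\sum_{i=1}^{n}\Tr\mathbb{E}_{\mu}\big[\bm{f}^2-(\mathsf{E}_i[\bm{f}])^2\big]=\sum_{i=1}^{n}\mathbb{E}_{\mu}\big[H_{u\mapsto u^2}(\bm{f}\mid X_{-i})\big]$, which the paper states only as an observation; your computation is a correct filling-in of that step.
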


\begin{proof}
	In Corollary \ref{coro:spectral} we show the equivalence between the exponential decay in variances and the spectral gap inequality (see Definition \ref{defn:spectral}).
	Therefore, it suffices to establish the spectral gap constant of the Markovian jump process.
	
	Notably, the spectral gap inequality of the Markov jump process is a special case of the \emph{matrix Efron-Stein inequality} in Ref.~\cite{CH1}:
	\begin{prop}
		[Matrix Efron-Stein Inequality {\cite[Theorem 4.1]{CH1}}] \label{prop:EF}
		For any measurable and bounded matrix-valued function $\bm{f}:(\mathbb{M}_d^\text{sa})^n \rightarrow \mathbb{M}_d^\text{sa}$, we have
		\begin{align} \label{eq:EF}
		\textnormal{Var} (\bm{f}) \leq \frac12 \Tr \mathbb{E} \left[ \sum_{i=1}^n \left( \bm{f}(\underline{\bm{X}}) - \bm{f}\left(\widetilde{\bm{X}}^{(i)}\right) \right)^2 \right],
		\end{align}
		where $\underline{\bm{X}} \triangleq (\bm{X}_1, \ldots, \bm{X}_n) \in (\mathbb{M}_d^\text{sa})^n$ denote an $n$-tuple random vector with independent elements, and $\widetilde{\bm{X}}^{(i)} \triangleq (\bm{X}_1,\ldots,\bm{X}_{i-1}, \bm{X}_i', \bm{X}_{i+1}, \ldots,\bm{X}_n)$ is obtained by replacing the $i$-th component of $\underline{\bm{X}}$ by an independent copy of $\bm{X}_i'$.
	\end{prop}
	By taking $\underline{\bm{X}}$ to be an $n$-tuple Bernoulli random vector, and observe that the right-hand side of Eq.~\eqref{eq:EF} coincides with $\Tr\left[ \bm{\mathcal{E}(\bm{f})} \right]$ for the Markov jump process to complete the proof.
\end{proof}

Similarly, the convergence rate of the exponential decay in entropy functionals can be calculated as follows.

\begin{theo}
	[Exponential Decay of Matrix $\Phi$-Entropies for Symmetric Bernoulli Random Variables] \label{theo:Ent_Ber}
	Given a Markov Triple $(\{0,1\}^n, \mathbf{\Gamma}, \mu_{n,p})$ of a Markovian jump process, one has
	\[
	\Ent\big( \mathsf{P}_t \bm{f} \big) \leq \mathrm{e}^{-t} \cdot \Ent\left( \bm{f} \right),
	\]
	for any matrix-valued function $\bm{f}:\{0,1\}^n\rightarrow \mathbb{M}_d^\text{sa}$.
\end{theo}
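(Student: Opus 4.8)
The plan is to reduce the claim to a modified log-Sobolev inequality (MLSI) and prove the latter by tensorization. By Corollary~\ref{coro:Ent}, the asserted decay $\Ent(\mathsf{P}_t\bm{f})\leq\mathrm{e}^{-t}\Ent(\bm{f})$ is equivalent to the MLSI with constant $C=1$, namely
\[
\Ent(\bm{f})\;\leq\;-\Tr\mathbb{E}_{\mu_{n,p}}\big[(\bm{I}+\log\bm{f})\,\mathsf{L}\bm{f}\big],
\]
so it suffices to establish this single inequality for the jump generator $\mathsf{L}=-\sum_{i=1}^{n}\mathsf{\Delta}_i$. Here it is understood that $\bm{f}:\{0,1\}^n\to\mathbb{M}_d^+$ (so that $\log\bm{f}$ and $\Ent(\bm{f})$ are defined), the general case following by the standard approximation argument.

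Write $\mathbb{E}_i[\bm{g}]\triangleq\int\bm{g}\,\mathrm{d}\mu_{1,p}(x_i)$ for the conditional expectation averaging out the $i$-th coordinate, so that $\mathsf{\Delta}_i\bm{f}=\bm{f}-\mathbb{E}_i[\bm{f}]$ and $\mathsf{L}\bm{f}=\sum_{i}(\mathbb{E}_i[\bm{f}]-\bm{f})$. Since $\mathbb{E}_{\mu_{n,p}}[\mathsf{L}\bm{f}]=\bm{0}$, the $\bm{I}$-term of the MLSI right-hand side vanishes and one is left with $\sum_i\Tr\mathbb{E}_{\mu_{n,p}}[(\log\bm{f})(\bm{f}-\mathbb{E}_i[\bm{f}])]$. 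Using $\Tr\mathbb{E}_{\mu_{n,p}}=\Tr\mathbb{E}_{\mu_{n,p}}\mathbb{E}_i$, the module property of $\mathbb{E}_i$, and cyclicity of the trace, this equals $\sum_i\Tr\mathbb{E}_{\mu_{n,p}}[\bm{f}\log\bm{f}-\mathbb{E}_i[\log\bm{f}]\,\mathbb{E}_i[\bm{f}]]$. Now $\log$ is operator concave, so the operator Jensen inequality (Proposition~\ref{prop:Jensen}, applied to the scalar-weighted average $\mathbb{E}_i$) gives $\mathbb{E}_i[\log\bm{f}]\preceq\log\mathbb{E}_i[\bm{f}]$; since $\mathbb{E}_i[\bm{f}]\succeq\bm{0}$ and $\Tr[\bm{A}\bm{B}]\geq0$ whenever $\bm{A},\bm{B}\succeq\bm{0}$, we obtain $\Tr[\mathbb{E}_i[\log\bm{f}]\,\mathbb{E}_i[\bm{f}]]\leq\Tr[\mathbb{E}_i[\bm{f}]\log\mathbb{E}_i[\bm{f}]]$. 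Hence the MLSI right-hand side is bounded below by $\sum_i\Tr\mathbb{E}_{\mu_{n,p}}[\bm{f}\log\bm{f}-\mathbb{E}_i[\bm{f}]\log\mathbb{E}_i[\bm{f}]]=\sum_i\mathbb{E}_{\mu_{n,p}}[\Ent(\bm{f}\,|\,X_{-i})]$. Finally, the subadditivity of the matrix $\Phi$-entropy functional for $\Phi(u)=u\log u$ (Theorem~\ref{theo:sub}), applied with the $X_i$ independent $\mathrm{Bernoulli}(p)$, yields $\Ent(\bm{f})\leq\sum_i\mathbb{E}_{\mu_{n,p}}[\Ent(\bm{f}\,|\,X_{-i})]$; chaining the two inequalities gives the MLSI with $C=1$, and Corollary~\ref{coro:Ent} finishes the proof.

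I expect the main obstacle to be the non-commutative bookkeeping in the middle step: justifying the pull-out identity $\Tr\mathbb{E}_{\mu_{n,p}}[(\log\bm{f})\mathbb{E}_i[\bm{f}]]=\Tr\mathbb{E}_{\mu_{n,p}}[\mathbb{E}_i[\log\bm{f}]\,\mathbb{E}_i[\bm{f}]]$ and combining operator concavity of $\log$ with positivity of traces of products of positive semi-definite matrices. Once these are in place, the argument is the familiar tensorization of a $\Phi$-Sobolev inequality from the two-point space to the product space. It is worth emphasizing that this route works precisely because the target is the \emph{modified} log-Sobolev inequality; establishing the tight (ordinary) log-Sobolev inequality on the hypercube would require hypercontractivity-type machinery and is substantially harder.
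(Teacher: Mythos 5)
Your proof is correct and follows essentially the same route as the paper: reduce the decay statement to the modified log-Sobolev inequality with constant $1$ via Corollary~\ref{coro:Ent}, bound the per-coordinate term from below by the conditional entropy, and tensorize using the subadditivity of Theorem~\ref{theo:sub}. The only difference is in the single-site estimate: the paper derives $\mathbb{E}\big[\Ent(\bm{f}\,|\,X_{-i})\big]\leq\Tr\mathbb{E}\big[\Phi'(\bm{f})\,\mathsf{\Delta}_i\bm{f}\big]$ from the first-order (Klein-type) convexity inequality $\Tr[\Phi(\bm{Y})-\Phi(\bm{X})]\geq\Tr\big[\mathsf{D}\Phi[\bm{X}](\bm{Y}-\bm{X})\big]$, whereas you obtain the identical bound from operator concavity of $\log$ via the operator Jensen inequality together with trace monotonicity against the positive semi-definite matrix $\mathbb{E}_i[\bm{f}]$ --- both are valid one-line lemmas, so the two arguments coincide in structure.
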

\begin{proof}
	The theorem is equivalent to proving
	\[
	\Ent\left( \bm{f} \right) \leq -  \Tr\mathbb{E} \big[\left( \bm{I} + \log\bm{f}\right)  \mathsf{L} \bm{f}  \big].
	\]
	By virtue of the subadditivity property, we first establish the case $n=1$, i.e.
	\begin{align} \label{eq:Ent_Ber2}
	\Ent\left( \bm{f} \right) \leq -  \Tr\mathbb{E} \big[\left( \bm{I} + \log\bm{f}\right)  \mathsf{L} \bm{f}  \big], \quad \forall \bm{f}:\{0,1\}\rightarrow \mathbb{M}_d^\text{sa}.
	\end{align}
	Taking $\Phi(u) = u \log u$, the first-order convexity property implies that
	\[
	\Tr\big[ \Phi(\bm{Y}) - \Phi(\bm{X}) \big] 
	\geq \Tr \big[ \mathsf{D} \Phi[\bm{X}]\left( \bm{Y}-\bm{X} \right) \big], \quad \forall \bm{X}, \bm{Y} \in\mathbb{M}_d^\text{sa}.
	\]
	Let $\bm{X}\equiv \bm{f}$ and $\bm{Y} \equiv \mathbb{E}\bm{f}$. Then it follows that
	\[
	\Tr\big[ \Phi(\mathbb{E}\bm{f}) -  \Phi(\bm{f}) \big] 
	\geq \Tr \big[  \mathsf{D} \Phi[\bm{f}]\left( \mathbb{E}\bm{f}-\bm{f} \right) \big], \quad \forall \bm{f}:\{0,1\}\rightarrow \mathbb{M}_d^\text{sa}
	\]
	from which we apply the expectation again to obtain
	\begin{align} \label{eq:Ent_Ber3}
	\Tr\big[ \mathbb{E}\Phi(\bm{f}) - \Phi(\mathbb{E}\bm{f}) \big] 
	\leq \Tr \Big[   \mathbb{E} \big[ \mathsf{D} \Phi[\bm{f}]\left( \bm{f}-\mathbb{E}\bm{f} \right) \big]    \Big].
	\end{align}
	Then by elementary manipulation, the right-hand side of Eq.~\eqref{eq:Ent_Ber3}
	leads to 
	\begin{align*}
	\Tr \Big[   \mathbb{E} \big[ \mathsf{D} \Phi[\bm{f}]\left( \bm{f}-\mathbb{E}\bm{f} \right) \big]    \Big]
	&= \Tr \big[ p(1-p) \, \mathsf{D}\Phi[\bm{f}(1)]\left(\bm{f}(1)-\bm{f}(0) \right) + (1-p)p \,  \mathsf{D}\Phi[\bm{f}(0)]\left( \bm{f}(0) - \bm{f}(1) \right)      \big]\\
	&= - \Tr \mathbb{E}\big[ \Phi'(\bm{f})\cdot \mathsf{\Delta}_1 \bm{f}  \big],
	\end{align*}
	and hence arrives at Eq.~\eqref{eq:Ent_Ber2}.
	
	Then the subadditivity of $\Phi$-entropy in Theorem \ref{theo:sub} yields
	\begin{align*}
	\Ent\left( \bm{f} \right) &\leq \sum_{i=1}^n \mathbb{E} \Big[ \Ent^{(i)} \left( \bm{f} \right) \Big] \\
	&\leq - \sum_{i=1}^n  \mathbb{E} \Big[ \Tr \mathbb{E}_i\big[ \Phi'(\bm{f})\cdot \mathsf{\Delta}_i \bm{f}  \big] \Big]\\
	&= -  \Tr\mathbb{E} \big[\left( \bm{I} + \log\bm{f}\right)  \mathsf{L} \bm{f}  \big],
	\end{align*}
	which completes the proof.
\end{proof}

\subsection{Mixing Times of Quantum Random Graphs} \label{sec:application}

In the following, we introduce a model of quantum states defined on a random graph and apply the above results to calculate the mixing time.
Consider a directed graph $\Omega$ with finite vertices.
Every arc $e=(x,y)$, $x,y\in\Omega$ of the graph corresponds a non-negative weight $w({x,y})$ (assume $y\neq x$), which represents the \emph{transition rate} starting from node $x$ to $y$.
Here we denote by $(L(x,y))_{x,y\in\Omega}$ the weight matrix that satisfies $L(x,y)\geq 0$ as $x\neq y$.
Moreover, a balance condition $\sum_{y\in\Omega} L(x,y) = 0$ for any $x\in\Omega$ is imposed.
The Markov transition kernel can be constructed via the exponentiation of the weight matrix $L$ (see e.g.~\cite{Nor97, Bak06,BGL13}):
\[
p_t(x,y) = \left( \mathrm{e}^{ t L} \right) (x,y),
\]
which stands for the probability from node $x$ to $y$ after time $t$.
Now, each vertex $x$ of the graph is endowed with a density operator $\bm{f}(x) = \bm{\rho}_x$ on some fixed Hilbert space $\mathbb{C}^d$.
The evolution of the quantum states in the graph is characterized by the Markov semigroup acting on the ensembles $\{\bm{\rho}_x\}_{x\in\Omega}$ according to the rule: 
\begin{align}
\bm{\rho}_{t,x}  \triangleq  \mathsf{P}_t \bm{f}(x) = \sum_{y\in\Omega}  \bm{\rho}_y \, p_t(x,y).
\end{align}
Thus $\bm{\rho}_{t,x}$ is the quantum state at node $x$ that is mixed from other nodes according to weight $p_t(x,y)$.

It is not hard to observe that the measure $\mu$ is invariant for this Markov semigroup $\{\mathsf{P}_t\}_{t\geq 0}$ if
\[
\mu(y) = \sum_{x\in\Omega} \mu(x) p_t (x,y), \; \forall y\in\Omega.
\]
We note that there always exists a probability measure satisfying the above equation.
However, the probability measure is unique if and only if the Markov kernel $(p_t(x,y))$ is \emph{irreducible}\footnote{
A Markov kernel matrix is called irreducible if there exists a finite $t\geq 0$ such that $p_t(x,y)>0$ for all $x$ and $y$. In other words, it is possible to get any state from any state.
	However, the uniqueness of the invariant measure gets more involved when the state space $\Omega$ is uncountable.
	We refer the interested readers to reference \cite[Chapter 7]{Pra06} for further discussions.
	We also remark that it is still unclear whether the classical characterizations of the unique invariant measures can be directly extended to the case of matrix-valued functions. This problem is left as future work.
}
\cite{Nor97, Sal97}.

As shown in Section \ref{sec:main}, all the states $\bm{\rho}_{t,x} = \mathsf{P}_t \bm{f}(x)$ will converge to the average state $\sum_{y\in\Omega} \mu(y)\cdot \bm{\rho}_y =:  \overline{\bm{\rho}}$ as $t$ goes to infinity, where $\mu$ is a unique invariant measure for $\{\mathsf{P}_t\}_{t\geq 0}$.
To measure how close it is to the average states $\overline{\bm{\rho}}$, we exploit the matrix $\Phi$-entropy functionals (with respect to the invariant measure $\mu$) to capture the convergence rate.
In particular, we choose $\Phi(u) = u^2$ and $\Phi(u) = u \log u$, which coincide the variance function and the {Holevo quantity}.

We define the $\mathbb{L}^2$ and Holevo mixing times as follows:

\begin{defn}
	Let $\bm{\rho}_t:x\mapsto \bm{\rho}_{t,x}$ be the ensembles of quantum states after time $t$.
	The $\mathbb{L}^2$ and Holevo mixing times are defined as:
	\begin{align*}
	&\tau_2(\epsilon)  \triangleq \inf\{ t: \Var(\bm{\rho}_t) \leq \epsilon\}\\
	&\tau_{\chi}(\epsilon) \triangleq \inf\{ t: \chi(\bm{\rho}_t) \leq \epsilon\}.
	\end{align*}
\end{defn}

By applying our main result (Theorem~\ref{theo:decay}), we upper bound the mixing time of the Markov random graphs. 

\begin{coro}
	Let $C_2>0$ and $C_{\chi}>0$ be the spectral gap constant and the modified log-Sobolev constant of the Markov Triple $(\Omega, \bm{\Gamma}, \mu)$.
	Then one has
	\begin{align}
	&\tau_2 (\epsilon) \leq \frac{C_2}2 \left( \log \Var(\bm{\rho}) + \log\frac1\epsilon \right) \label{eq:t_2}\\
	&\tau_{\chi} (\epsilon) \leq {C_{\chi}} \left( \log \chi(\bm{\rho}) + \log\frac1\epsilon \right), \label{eq:t_chi}
	\end{align}
	where $\bm{\rho}:x\mapsto \bm{\rho}_x$ denotes the initial ensemble of the graph.
\end{coro}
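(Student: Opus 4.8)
The plan is to read both bounds off the exponential-decay statements already established: Corollary~\ref{coro:spectral} for the variance and Corollary~\ref{coro:Ent} for the entropy, applied to the matrix-valued function $\bm{f}=\bm{\rho}$ on the finite vertex set $\Omega$. Since the graph is finite, $\mathsf{L}$ is bounded and $\mathcal{D}(\mathsf{L})$ contains every matrix-valued function on $\Omega$, so $\bm{\rho}\in\mathcal{D}(\mathsf{L})$ automatically; and because we are in the setting where $\{\mathsf{P}_t\}_{t\geq 0}$ admits a (unique) invariant measure $\mu$, guaranteed by irreducibility of $(p_t(x,y))$, both corollaries apply with their optimal constants $C_2$ and $C_\chi$.

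For the $\mathbb{L}^2$ mixing time, Corollary~\ref{coro:spectral} gives $\Var(\bm{\rho}_t)=\Var(\mathsf{P}_t\bm{\rho})\leq\mathrm{e}^{-2t/C_2}\Var(\bm{\rho})$ for every $t\geq 0$; hence $\Var(\bm{\rho}_t)\leq\epsilon$ as soon as $\mathrm{e}^{-2t/C_2}\Var(\bm{\rho})\leq\epsilon$, i.e.\ as soon as $t\geq\frac{C_2}{2}\bigl(\log\Var(\bm{\rho})+\log\tfrac1\epsilon\bigr)$, and inserting this threshold into the definition of $\tau_2(\epsilon)$ yields~\eqref{eq:t_2}. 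The argument for $\tau_\chi(\epsilon)$ is verbatim the same with Corollary~\ref{coro:Ent}: $\chi(\bm{\rho}_t)=\Ent(\mathsf{P}_t\bm{\rho})\leq\mathrm{e}^{-t/C_\chi}\chi(\bm{\rho})$, so $\chi(\bm{\rho}_t)\leq\epsilon$ whenever $t\geq C_\chi\bigl(\log\chi(\bm{\rho})+\log\tfrac1\epsilon\bigr)$, which is~\eqref{eq:t_chi}.

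There is no genuine obstacle here: the entire content lives in Theorem~\ref{theo:decay} and the de~Bruijn identity underlying the two corollaries, and the rest is just inverting an exponential. The only points that warrant a line of care are (i) if the infimum defining the spectral-gap (resp.\ modified log-Sobolev) constant is not attained, one carries out the inversion above with an arbitrary admissible $C>C_2$ (resp.\ $C>C_\chi$) and then lets $C\downarrow C_2$ (resp.\ $C\downarrow C_\chi$); and (ii) both right-hand sides are to be understood in the nondegenerate regime $\Var(\bm{\rho})\geq\epsilon$ (resp.\ $\chi(\bm{\rho})\geq\epsilon$), since otherwise the mixing time is already $0$ at $t=0$ and there is nothing to prove.
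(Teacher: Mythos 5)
Your proof is correct and follows essentially the same route as the paper: invert the exponential decay bounds from Corollary~\ref{coro:spectral} and Corollary~\ref{coro:Ent} (both consequences of Theorem~\ref{theo:decay}) to bound the time at which the variance, respectively the Holevo quantity, drops below $\epsilon$. Your extra remarks on attainment of the optimal constants and on the degenerate case $\Var(\bm{\rho})\leq\epsilon$ are sensible refinements but not needed beyond what the paper's own one-line argument does.
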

\begin{proof}
	The corollary follows immediately from Theorem \ref{theo:decay}.
	Set $\Var(\bm{\rho}_t)=\epsilon$. Then we have
	\[
	\epsilon \leq \mathrm{e}^{-2 \tau_2(\epsilon)/ C_2} \Var(\bm{\rho}),
	\]
	which implies the desired upper bound for the $\mathbb{L}^2$ mixing time. The upper bound for the mixing time of the Holevo quantity follows in a similar way.
\end{proof}

\begin{remark}
	Note that for every initial ensemble $\bm{\rho}_0$, it follows that
	\begin{align*}
	\Var(\bm{\rho}) = \Tr \left[ \sum_{x\in\Omega} \mu(x) \bm{\rho}_{x}^2 - \left( \sum_{x\in\Omega} \mu(x) \bm{\rho}_{x} \right)^2 \right] \leq \max_{x\in\Omega} \mu(x)/d =: \mu^*/d.
	\end{align*}
	Equation \eqref{eq:t_2} can be replaced by 
	\[
	\tau_2 (\epsilon) \leq \frac{C_2}2 \left( \log \frac{\mu^*}{d} + \log\frac1\epsilon \right).
	\]
	Moreover, it is well-known \cite{NC09} that the Holevo quantity $\chi(\{\mu(x), \bm{\rho}_{x}\})$ is bounded by the Shannon entropy $H(\mu)$ of the probability distribution $\mu$.
	Hence, Eq. \eqref{eq:t_chi} is rewritten as
	\begin{align*}
	\tau_2 (\epsilon) \leq \frac{C_2}2 \left( \log H(\mu) + \log\frac1\epsilon \right).
	\end{align*}	\Endremark
\end{remark}

Consider the random graph generated by the Markovian jump process on a hypercube $\{0,1\}^n$ with probability $p=1/2$. Theorems \ref{theo:Var_Ber} and \ref{theo:Ent_Ber} give the spectral gap constant $C_2=1$ and modified log-Sobolev constant $C_\chi=1$. Hence, for every initial ensemble $\bm{\rho}$, the mixing times of this quantum random graph is
\begin{align*}
&\tau_2 (\epsilon) \leq \frac{1}2 \left( \log \Var(\bm{\rho}) + \log\frac1\epsilon \right) \\
&\tau_{\chi} (\epsilon) \leq \left( \log \chi(\bm{\rho}) + \log\frac1\epsilon \right).
\end{align*}

\section{Discussions} \label{sec:diss}

Classical spectral gap inequalities and logarithmic Sobolev inequalities have proven to be a fundamental tool in analyzing Markov semigroups on real-valued functions. In this paper, we extend the definition of Markov semigroups to matrix-valued functions and investigate its equilibrium property. 
Our main result shows that the matrix $\Phi$-entropy functionals exponentially decay along the Markov semigroup, and the convergence rates are determined by the coefficients of the matrix $\Phi$-Sobolev inequality \cite{CH1}.
In particular, we establish the variance and entropy decays of the Markovian jump process using the subadditivity of matrix $\Phi$-entropies \cite{CT14} and tools from operator algebras.

The Markov semigroup introduced in this paper is not only of independent interest in mathematics, but also has substantial applications in quantum information theory. In this work, we study the dynamical process of a quantum ensemble governed by the Markov semigroups, and analyze how the entropies of the quantum ensemble evolve as time goes on.
When the quantum dynamical process is a quantum unital map, our result yields a stronger version of the monotonicity of the Holevo quantity $\chi(\{\mu(x),\mathsf{T}_t(\bm{\rho}_x)\}_x) \leq \mathrm{e}^{-t/C}\cdot\chi(\{\mu(x),\bm{\rho}_x\}_x)$. 

\section*{Acknowledgements}
MH would like to thank Matthias Christandl, Michael Kastoryano, Robert Koenig, Joel Tropp, and Andreas Winter for their useful comments. 
MH is supported by an ARC Future Fellowship under Grant FT140100574. 
MT is funded by an University of Sydney Postdoctoral Fellowship and acknowledges support from the ARC Centre of Excellence for Engineered
Quantum Systems (EQUS).

\appendix

\section{Hille-Yoshida's Theorem for Markov Semigroups} \label{sec:HY}

Classical Hille-Yoshida's theorem \cite[Chapter IX]{Yos96},  \cite[Appendix A]{BGL13} provides a nice characterization for the Dirichlet domain $\mathcal{D}(\mathsf{L})$ when the underlying Banach space $(\mathcal{B},\|\cdot\|)$ is the set of real-valued bounded continuous functions on $\Omega$.
In the following we show that the Hille-Yosida's theorem can be naturally extended to the Banach space of bounded continuous matrix-valued functions $\bm{f}:\Omega\to \mathbb{M}^\text{sa}$ equipped with the uniform norm (e.g.~$\opnorm{\bm{f}} = \sup_{x\in\Omega }\|  \bm{f}(x) \|_\infty$).
We note that the proof parallels the classical approach; see e.g.~\cite[Chapter IX]{Yos96} and \cite[Theorem 1.7]{Gui03}.

\begin{theo}[Hille-Yoshida's Theorem for Markov Semigroups] \label{theo:HY}
	A linear super-operator $\mathsf{L}$ is the infinitesimal generator of a Markov semigroup $\{\mathsf{P}_t\}_{t\geq0}$ on $\mathcal{B}$ if and only if
	\begin{itemize}
		\item The identity function belongs to the Dirichlet domain: $\bm{I} \in \mathsf{D}(\mathsf{L})$ and $\mathsf{L}\bm{I} = \bm{0}$.
		\item $\mathsf{D}(\mathsf{L})$ is dense in $\mathcal{B}$.
		\item $\mathsf{L}$ is closed.
		\item For any $\lambda>0$, $(\lambda\mathds{1} - \mathsf{L})$ is invertible. The inverse $(\lambda\mathds{1} - \mathsf{L})^{-1}$ is bounded with 
		\[
		\sup_{\opnorm{\bm{f}}\leq 1} \opnorm{ (\lambda \mathds{1} - \mathsf{L})^{-1} \bm{f} } \leq \frac{1}{\lambda}
		\]
		and preserves positivity, i.e.~$(\lambda \mathds{1} - \mathsf{L})^{-1} \bm{f} \succeq \bm{0}$ for all $\bm{f}\succeq \bm{0}$.
	\end{itemize}
\end{theo}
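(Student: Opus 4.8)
The plan is to transcribe the classical Hille--Yosida argument (as in \cite[Chapter IX]{Yos96}, \cite[Theorem 1.7]{Gui03}) to the Banach space $\mathcal{B}$ of bounded continuous matrix-valued functions, replacing the pointwise order on scalar functions by the semidefinite order, and then checking that the two extra structural features—positivity preservation and mass conservation $\mathsf{P}_t\bm{I}=\bm{I}$—survive the construction.

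\emph{Necessity.} Assuming $\{\mathsf{P}_t\}_{t\geq0}$ is a Markov semigroup with generator $\mathsf{L}$, mass conservation (item (d) of Definition \ref{defn:Markov}) gives $\mathsf{L}\bm{I} = \lim_{t\to0^+}\tfrac1t(\mathsf{P}_t\bm{I}-\bm{I}) = \bm{0}$. For $\lambda>0$ I would introduce the resolvent $\bm{R}_\lambda\bm{f}\triangleq\int_0^\infty\mathrm{e}^{-\lambda t}\mathsf{P}_t\bm{f}\,\mathrm{d}t$, a norm-convergent Bochner integral by the contraction property \eqref{eq:contraction}; the usual manipulation with the semigroup law shows $\bm{R}_\lambda\bm{f}\in\mathcal{D}(\mathsf{L})$ and $\bm{R}_\lambda = (\lambda\mathds{1}-\mathsf{L})^{-1}$, with $\opnorm{\bm{R}_\lambda\bm{f}}\leq\lambda^{-1}\opnorm{\bm{f}}$ and positivity preservation read off the integral formula (item (e)). Density of $\mathcal{D}(\mathsf{L})$ follows from $\lambda\bm{R}_\lambda\bm{f}-\bm{f} = \lambda\int_0^\infty\mathrm{e}^{-\lambda t}(\mathsf{P}_t\bm{f}-\bm{f})\,\mathrm{d}t\to\bm{0}$ using strong continuity (item (b)), and closedness of $\mathsf{L}$ is the standard consequence of $\bm{R}_\lambda$ being bounded and injective.

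\emph{Sufficiency.} Given $\mathsf{L}$ with the four listed properties, I would define the Yosida approximants $\mathsf{L}_\lambda\triangleq\lambda^2(\lambda\mathds{1}-\mathsf{L})^{-1}-\lambda\mathds{1}$ (bounded on $\mathcal{B}$) and $\mathsf{P}_t^{(\lambda)}\triangleq\mathrm{e}^{t\mathsf{L}_\lambda} = \mathrm{e}^{-\lambda t}\sum_{k\geq0}\tfrac{(\lambda^2 t)^k}{k!}\big((\lambda\mathds{1}-\mathsf{L})^{-1}\big)^k$. Each summand is positivity preserving of norm $\leq\lambda^{-k}$, so $\mathsf{P}_t^{(\lambda)}$ is a positivity-preserving contraction, and $(\lambda\mathds{1}-\mathsf{L})^{-1}\bm{I}=\lambda^{-1}\bm{I}$ (from $\mathsf{L}\bm{I}=\bm{0}$) gives $\mathsf{P}_t^{(\lambda)}\bm{I}=\bm{I}$. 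The resolvent identity yields $\lambda(\lambda\mathds{1}-\mathsf{L})^{-1}\bm{f}\to\bm{f}$ for all $\bm{f}\in\mathcal{B}$ (first on $\mathcal{D}(\mathsf{L})$, then by density and the uniform bound), hence $\mathsf{L}_\lambda\bm{f}\to\mathsf{L}\bm{f}$ on $\mathcal{D}(\mathsf{L})$. The technical core is then the Cauchy estimate: differentiating $s\mapsto\mathsf{P}_{t-s}^{(\lambda)}\mathsf{P}_s^{(\mu)}\bm{f}$ and using that the bounded operators involved commute gives $\opnorm{\mathsf{P}_t^{(\lambda)}\bm{f}-\mathsf{P}_t^{(\mu)}\bm{f}}\leq t\,\opnorm{(\mathsf{L}_\lambda-\mathsf{L}_\mu)\bm{f}}$ for $\bm{f}\in\mathcal{D}(\mathsf{L})$, so $\mathsf{P}_t\bm{f}\triangleq\lim_{\lambda\to\infty}\mathsf{P}_t^{(\lambda)}\bm{f}$ exists uniformly on compact time intervals; it extends to all of $\mathcal{B}$ by the uniform contraction bound and density. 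One then checks directly that $\{\mathsf{P}_t\}_{t\geq0}$ satisfies items (a)--(e) of Definition \ref{defn:Markov} and has generator $\mathsf{L}$ (its resolvent coincides with $(\lambda\mathds{1}-\mathsf{L})^{-1}$, which pins down the generator uniquely).

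\emph{Main obstacle.} The convergence of the Yosida approximants is the technical heart, but it is standard once the resolvent bound is in hand; the only point genuinely specific to the matrix-valued setting is bookkeeping the semidefinite order through the construction. One must verify that positivity preservation of $(\lambda\mathds{1}-\mathsf{L})^{-1}$ propagates to the exponential series and then, since the positive cone of $\mathcal{B}$ is closed, to the $\lambda\to\infty$ limit; and that mass conservation $\mathsf{P}_t\bm{I}=\bm{I}$—which here replaces the stochastic normalization $\mathsf{P}_t 1=1$ of the classical theory—is preserved. Both reduce to $\mathsf{L}\bm{I}=\bm{0}$ together with the positivity hypothesis, so no new analytic difficulty arises, only attention to the order structure of $\mathcal{B}$.
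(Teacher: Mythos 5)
Your proposal is correct and follows essentially the same route as the paper's proof: the resolvent $\int_0^\infty \mathrm{e}^{-\lambda t}\mathsf{P}_t\,\mathrm{d}t$ for necessity and the Yosida approximants $\mathsf{L}_\lambda=\lambda^2(\lambda\mathds{1}-\mathsf{L})^{-1}-\lambda\mathds{1}$ with a Cauchy-type estimate for sufficiency, transplanted to $\mathcal{B}$ with the semidefinite order. The only differences are cosmetic (density via $\lambda(\lambda\mathds{1}-\mathsf{L})^{-1}\bm{f}\to\bm{f}$ rather than the averages $\tfrac1t\int_0^t\mathsf{P}_s\bm{f}\,\mathrm{d}s$, and the Cauchy estimate obtained by differentiating $s\mapsto\mathsf{P}_{t-s}^{(\lambda)}\mathsf{P}_s^{(\mu)}\bm{f}$ instead of the paper's interpolation integral), and your explicit check that positivity and $\mathsf{P}_t\bm{I}=\bm{I}$ survive the exponential series and the $\lambda\to\infty$ limit is if anything slightly more detailed than the paper's.
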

\begin{proof}
	~\\
\emph{Necessary condition}.

The first item follows from the fact that $\mathsf{P}_t \bm{I} = \bm{I}$ and the definition of the infinitesimal generator; see~Eq.~\eqref{eq:L}.
To prove the second item, it suffices to show that
\begin{align} \label{eq:D0}
\mathcal{D}_0 \triangleq \left\{ \frac1t \int_0^t \mathsf{P}_s \bm{f} \,\mathrm{d} s: \; \bm{f}\in\mathcal{B}, t>0 \right\} \subset \mathsf{D}(\mathsf{L}).
\end{align}
According to the continuity of the map $t\to \mathsf{P}_t \bm{f}$,  $\mathcal{D}_0$ is dense in $\mathcal{B}$ and hence completes the second item.

To show Eq.~\eqref{eq:D0}, we invoke the semigroup property in Definition~\ref{defn:Markov} to obtain 
\begin{align*}
\begin{split}
\frac1\tau (\mathsf{P}_\tau - \mathds{1}) \int_0^t \mathsf{P}_s \bm{f}\, \mathsf{d}s
&= \frac1\tau \int_{\tau}^{\tau + t} \mathsf{P}_s \bm{f}\, \mathsf{d}s - \frac1\tau \int_{0}^{t} \mathsf{P}_s \bm{f}\, \mathsf{d}s \\
&= \frac1\tau \int_{t}^{\tau + t} \mathsf{P}_s \bm{f}\, \mathsf{d}s - \frac1\tau \int_{0}^{\tau} \mathsf{P}_s \bm{f}\, \mathsf{d}s
\end{split}
\end{align*}
for any $\tau \in \mathbb{R}_+$.
By letting $\tau$ tend to zero, we have
\begin{align} \label{eq:L_int}
\mathsf{L} \int_{0}^t \mathsf{P}_s \bm{f} \, \mathrm{d} s 
= \int_{0}^t \mathsf{P}_s \mathsf{L}\bm{f} \, \mathrm{d} s
= \mathsf{P}_t \bm{f} - \bm{f},
\end{align}
for any $\bm{f} \in \mathcal{D}(\mathsf{L})$,
which shows that $\int_{0}^t \mathsf{P}_s \bm{f} \, \mathrm{d} s \in \mathcal{D}(\mathsf{L})$
and hence establishes Eq.~\eqref{eq:D0}.

To show the closeness of the generator $\mathsf{L}$, we consider a sequence $\bm{f}_n$ in $\mathcal{D}(\mathsf{L})$ converging to a function $\bm{f}$ such that
\[
\lim_{n\to \infty} \mathsf{L} \bm{f}_n = \bm{g}
\]
for some $\bm{g} \in \mathcal{B}$.
Apply Eq.~\eqref{eq:L_int} on $\bm{f}_n$ to obtain
\[
\mathsf{P}_t \bm{f}_n - \bm{f}_n = \mathsf{L} \int_0^t \mathsf{P}_s \bm{f}_n\, \mathrm{d}s = \int_{0}^{t} \mathsf{P}_s \mathsf{L} \bm{f}_n \, \mathrm{d}s.
\]
By taking $n$ go to infinity yields $\mathsf{P}_t \bm{f} -\bm{f} = \int_0^t \mathsf{P}_s \bm{g} \, \mathrm{d}s$.
Dividing by $t$ and letting $t\downarrow 0$ results in $\mathsf{L} \bm{f} = \bm{g}$ for every $\bm{f} \in \mathcal{D}(\mathsf{L})$.

To show the last point, we define the \emph{resolvent} $R(\lambda,\mathsf{L}) \triangleq (\lambda \mathds{1} - \mathsf{L})^{-1}$ of the generator $\mathsf{L}$ and show that
\begin{align} \label{eq:R}
R(\lambda,\mathsf{L}) \bm{f} = \int_0^\infty \mathrm{e}^{-\lambda s } \mathsf{P}_s \bm{f} \, \mathrm{d}s.
\end{align}
By introducing a new semigroup $\widetilde{\mathsf{P}}_s = \mathrm{e}^{-\lambda s }\mathsf{P}_s$ with associated generator $(\mathsf{L} - \lambda \mathds{1})$, Eq.~\eqref{eq:L_int} implies
\[
(\mathsf{L} - \lambda \mathds{1}) \int_0^t \mathrm{e}^{-\lambda s} \mathsf{P}_s \bm{f} \, \mathrm{d}s = \mathrm{e}^{-\lambda t } \mathsf{P}_t \bm{f} - \bm{f}.
\]
Taking the limit as $t\to \infty$ and according to the closeness of $\mathsf{L}$, we deduce that
\[
(\mathsf{L} - \lambda \mathds{1}) \int_0^\infty \mathrm{e}^{-\lambda s} \mathsf{P}_s \bm{f} \, \mathrm{d}s  - \bm{f}
\]
which establishes Eq.~\eqref{eq:R}.
Finally, since $\mathsf{P}_s$ is contractive (see Eq.~\eqref{eq:contraction}),
\begin{align} \label{eq:R2}
\sup_{\opnorm{\bm{f}}= 1} \opnorm{ R(\lambda,\mathsf{L}) } \leq \int_0^\infty \mathrm{e}^{-\lambda s} \, \mathrm{d} s = \frac1\lambda.
\end{align}
Hence $R(\lambda,\mathsf{L})$ is a bounded super-operator on $\mathcal{D}(\mathsf{L})$.
In particular, we observe the  positivity of $R(\lambda,\mathsf{L})$ from Eq.~\eqref{eq:R}.
\\~\\~
\emph{Sufficiency}.

For any $\lambda>0$, we define the Yoshida approximation $\mathsf{L}_\lambda$ of $\mathsf{L}$ by
\[
\mathsf{L}_\lambda \triangleq \mathsf{L}\lambda \mathds{1} ( \lambda \mathds{1} - \mathsf{L})^{-1} = \lambda^2 (\lambda \mathds{1} - \mathsf{L})^{-1} - \lambda \mathds{1}.
\]
From Eq.~\eqref{eq:R2}, $\mathsf{L}_\lambda$ is a bounded super-operator and hence we define a Markov semigroup
\[
\mathsf{P}_t^\lambda \triangleq \mathrm{e}^{t \mathsf{L}_\lambda} 
= \sum_{n=0}^\infty \frac{t^n}{n!} \mathsf{L}_\lambda^n = \mathrm{e}^{-t\lambda\mathds{1}} \mathrm{e}^{t\lambda^2 (\lambda \mathds{1} - \mathsf{L})^{-1}}.
\]
for $t\in\mathbb{R}_+$.
To finish the proof, it remains to show that $\mathsf{P}_t^\lambda$ converges to $\mathsf{P}_t$ as $\lambda\to \infty$ and hence the semigroup properties, contractivity, positivity, and the unit property of the semigroup $\{\mathsf{P}^\lambda_t\}_{t\geq 0}$ can be extended to $\{\mathsf{P}_t\}_{t\geq 0}$.
To achieve this, we prove that the family of super-operators $\{\mathsf{L}_\lambda\}_\lambda$ converges to $\mathsf{L}$:
\[
\lim_{\lambda\to\infty} \mathsf{L}_\lambda \bm{f} = \mathsf{L} \bm{f},
\]
and the convergence of $\{\mathsf{L}_\lambda\}_\lambda$ leads to that of the semigroups $\{\mathsf{P}_t^\lambda\}_\lambda$.
In fact, the identity 
\[
\lambda(\lambda \mathds{1} - \mathsf{L})^{-1} - (\lambda \mathds{1} - \mathsf{L})^{-1} \mathsf{L} = \mathds{1}
\]
implies that
\[
\lim_{\lambda \to \infty} \lambda (\lambda \mathds{1} - \mathsf{L} )^{-1} \bm{g} = \bm{g}
\]
for every $\bm{g} \in \mathcal{D}(\mathsf{L})$.
Hence we conclude that $\mathsf{L}_\lambda$ converges to $\mathsf{L}$ as $\lambda$ goes to infinity.
Moreover,
for any positive real numbers $(\lambda_1, \lambda_2)$ and any $\bm{f} \in \mathcal{D}(\mathsf{L})$, we have the interpolation formula
\[
\mathrm{e}^{t \mathsf{L}_{\lambda_1}} \bm{f} - \mathrm{e}^{t \mathsf{L}_{\lambda_2}} \bm{f}
= t\int_0^1 \mathrm{e}^{t \left( s\mathsf{L}_{\lambda_1} - (1-s) \mathsf{L}_{\lambda_2}\right)} 
\left( \mathsf{L}_{\lambda_1} - \mathsf{L}_{\lambda_2} \right) \bm{f} \, \mathrm{d}s.
\]
since $\mathsf{L}_{\lambda_1}$ and $\mathsf{L}_{\lambda_2}$ are commuting.
Combined with the contractivity of the semigroups $\mathrm{e}^{t \left( s\mathsf{L}_{\lambda_1} - (1-s) \mathsf{L}_{\lambda_2}\right)}$ for $s\in[0,1]$, we have
\[
\opnorm{ \mathrm{e}^{t\mathsf{L}_{\lambda_1}} \bm{f} - \mathrm{e}^{t\mathsf{L}_{\lambda_2}} \bm{f} } \leq t \opnorm{ (\mathsf{L}_{\lambda_1} - \mathsf{L}_{\lambda_2} ) \bm{f} },
\]
which shows that the convergence of $\{\mathsf{L}_\lambda\}_\lambda$ ensures the convergence of the semigroups $\{\mathsf{P}_t^\lambda\}_\lambda$.
Finally, we define  the super-operators $\mathsf{P}_t = \lim_{\lambda \to \infty} \mathsf{P}_t^\lambda$ for any $t>0$.
It is clearly that $\{\mathsf{P}_t\}_{t\geq0}$ is a Markov semigroup with generator $\mathsf{L}$ on $\mathcal{D}(\mathsf{L})$ and completes the proof.
\end{proof}

\section{Proof of Proposition \ref{prop:depolarizing}} \label{proof:Prop15}
	For convenience, we set $r=1$. Our strategy starts from proving the upper bound of $C_\chi \leq \frac12$. Then we show that the upper bound is attained when every state in the ensemble approaches the maximally mixed state $\bm{\pi}$.
	
	Firstly, recall the modified log-Sobolev constant $C_\chi$ in Eq.~\eqref{eq:Ent_const}. We assume
	\begin{align} \label{eq:1}
	\frac{\Tr\mathbb{E}_\mu[\bm{\rho}_X\log \bm{\rho}_X] + \log 2}{\Tr\mathbb{E}_\mu[\bm{\rho}_X\log \bm{\rho}_X] - \frac{\Tr\mathbb{E}_\mu[\log \bm{\rho}_X]}{2}} > \frac12,
	\end{align}
	which implies
	\begin{align} \label{eq:2}
	\Tr\mathbb{E}_\mu[\bm{\rho}_X\log \bm{\rho}_X] + 2\log 2 + \frac{\Tr\mathbb{E}_\mu[\log \bm{\rho}_X]}{2} >1.
	\end{align}
	However, since any density operator $\bm{\rho}$ on $\mathbb{C}^2$ can be expressed as 		
	\begin{align*}
	\bm{\rho} = \bm{U} \begin{pmatrix} p & 0 \\ 0 & 1-p \end{pmatrix} \bm{U}^\dagger
	\end{align*}
	for some unitary matrix $\bm{U}$ and $0\leq p\leq 1$, we have
	\begin{align*} 
	\Tr\left[ \left( \frac{\bm{I}}{2} + \bm{\rho} \right)  \log \bm{\rho}\right]= \left( \frac12 + p \right) \log p + \left( \frac12 + 1-p \right) \log (1-p) .
	\end{align*}
	The above equation is concave in $p$ and maximizes at $p=\frac12$ (i.e.~$\bm{\rho}$ is a maximally mixed state).
	As a result, we have
	\begin{align*}
	\Tr\left[ \left( \frac{\bm{I}}{2} + \bm{\rho} \right)  \log \bm{\rho}\right]  \leq -2\log 2
	\end{align*}
	which contradicts Eq.~\eqref{eq:2}. Hence, we prove the upper bound $C_\chi \leq \frac12$.
	
	Second, we consider the case of $|\Omega|=2$. Let $\mu(1) = p_1$, $\mu(2) = p_2$, where $p_1 + p_2 = 1$, and denote the Pauli matrix by
	\begin{align*}
	\bm{\sigma}_Z = \begin{pmatrix}
	1 & 0 \\ 0 & -1
	\end{pmatrix}.
	\end{align*}
	Without loss of generality, we set the states in the ensemble to be: $\bm{\rho}_1 = \bm{\pi} + \frac{\epsilon}{p_1} \bm{\sigma}_Z$ and $\bm{\rho}_2 = \bm{\pi} - \frac{\epsilon}{p_2} \bm{\sigma}_Z$, where $ 0\leq \epsilon \leq \min\{\frac{p_1}2, \frac{p_2}2 \}$.
	It is not hard to see the average state is $\bm{\pi}$.
	Then we can calculate that
	\begin{align*}
	&\frac{\Tr\mathbb{E}_\mu[\bm{\rho}_X\log \bm{\rho}_X] + \log 2}{\Tr\mathbb{E}_\mu[\bm{\rho}_X\log \bm{\rho}_X] - \frac{\Tr\mathbb{E}_\mu[\log \bm{\rho}_X]}{2}}\\
	&= \frac{ -p_1 H_b\left(\frac12 + \frac{\epsilon}{p_1}\right) - p_2 H_b\left(\frac12 + \frac{\epsilon}{p_2}\right) + \log 2}{-p_1 H_b\left(\frac12 + \frac{\epsilon}{p_1}\right) - p_2 H_b\left(\frac12 + \frac{\epsilon}{p_2}\right) + \log 2 
		- \frac{ p_1 \log\left(\frac12+ \frac{\epsilon}{p_1}\right)\left(\frac12- \frac{\epsilon}{p_1}\right) + p_2 \log\left(\frac12+ \frac{\epsilon}{p_2}\right)\left(\frac12- \frac{\epsilon}{p_2}\right)}{2}} \\
	&=: \frac{f(\epsilon)}{g(\epsilon)}=: C(\epsilon),
	\end{align*}
	where $H_b(p)\triangleq - p\log p - (1-p)\log (1-p)$ is the binary entropy function.
	By taking differentiation with respect to $\epsilon$, it can be shown that $ C''(\epsilon) <0$ and $\left. C'(\epsilon)\right|_{\epsilon = 0} =0$.
	In other words, $C(\epsilon)$ achieves its maximum when $\epsilon$ tends to $0$.
	
	Finally, by L'H\^{o}spital's rule,
	\begin{align*}
	\lim_{\epsilon \to 0} C(\epsilon) = \lim_{\epsilon \to 0} \frac{f''(\epsilon)}{g''(\epsilon)}
	= \frac{\frac{4}{p_1}+\frac{4}{p_2}}{\frac{8}{p_1}+\frac{8}{p_2}} = \frac12.
	\end{align*}
	Similarly, for the cases $|\Omega|>2$, the modified log-Sobolev constant is attained when every state in the ensemble approaches $\bm{\pi}$.
	Equation \eqref{eq:pd_Ent} then follows by Corollary \ref{coro:Ent}.


\printbibliography

\end{document}